\newif\ifcomments   
\newif\ifanon       
\newif\ifcrypto     
\newif\ifllncs      
\newif\ifpublish    
  \theoremstyle{plain}
  \newtheorem{theorem}{Theorem}[section]
  \newtheorem{lemma}[theorem]{Lemma}
  \newtheorem{claim}[theorem]{Claim}
  \newtheorem{corollary}[theorem]{Corollary}
  \newtheorem{definition}[theorem]{Definition}
  \newtheorem{remark}[theorem]{Remark}
  \newtheorem{construction}[theorem]{Construction} 
  \newcommand{\email}[1]{\href{mailto:#1}{\texttt{#1}}}
\newtheorem{fact}[theorem]{Fact}
\newcommand{\YTnote}[1]{{\color{blue} [{Yao-Ting:} #1]}}
\newcommand{\pnote}[1]{{\color{red} [Prab: #1]}}
\newcommand{\hnote}[1]{{\color{brown} \bf [Henry: #1]}}
\newcommand{\YTnote}[1]{}
\newcommand{\pnote}[1]{}
\newcommand{\hnote}[1]{}
\newcommand{\hybrid}{\mathsf{H}}
\newcommand{\ketbra}[2]{\ket{#1}\!\bra{#2}}
\renewcommand{\cal}[1]{\mathcal{#1}}
\newcommand{\R}{\mathbb{R}}
\newcommand{\C}{\mathbb{C}}
\newcommand{\N}{\mathbb{N}}
\newcommand{\E}{\mathop{\mathbb{E}}}
\newcommand{\Tr}{\mathrm{Tr}}
\newcommand{\eps}{\varepsilon}
\newcommand{\Haar}{\mathscr{H}}
\newcommand{\Enc}{\mathsf{Enc}}
\newcommand{\Dec}{\mathsf{Dec}}
\DeclareMathOperator{\TD}{TD}
\newcommand{\norm}[1]{{\left\lVert#1\right\rVert}}
\newcommand{\bit}{{\{0, 1\}}}
\newcommand{\cH}{\mathcal{H}}
\newcommand{\dmx}{\mathcal{D}}
\newcommand{\secparam}{\lambda}
\DeclareFontFamily{U}{skulls}{}
\DeclareFontShape{U}{skulls}{m}{n}{ <-> skull }{}
\newcommand{\prob}{{\sf Pr}}
\newcommand{\ext}{{\sf Ext}}
\newcommand{\prfs}{{\sf PRFS}}
\newcommand{\negl}{\mathsf{negl}}
\newcommand{\poly}{\mathsf{poly}}
\newcommand{\secp}{\secparam}
\newcommand{\Ex}{\E}
\newcommand{\argmax}{\operatornamewithlimits{argmax}}
\newcommand{\iid}{\text{i.i.d.} }
\DeclarePairedDelimiter\floor{\lfloor}{\rfloor}
\newcommand{\cD}{\mathcal{D}}
\newcommand{\cG}{\mathcal{G}}
\newcommand{\cK}{\mathcal{K}}
\newcommand{\cN}{\mathcal{N}}
\newcommand{\cS}{\mathcal{S}}
\newcommand{\cU}{\mathcal{U}}
\newcommand{\cX}{\mathcal{X}}
\newcommand{\round}{{\sf Round}}
\newcommand{\sfE}{{\sf E}}
\newcommand{\Good}{{\sf Good}}
\newcommand{\Bad}{{\sf Bad}}
\renewcommand{\Enc}{\mathsf{Enc}}
\renewcommand{\Dec}{\mathsf{Dec}}
\newcommand{\gen}{\mathsf{Gen}}
\newcommand{\ver}{\mathsf{Ver}}
\newcommand{\Com}{\mathsf{Com}}
\newcommand{\tomography}{\mathsf{Tomography}}
\newcommand{\prs}{\mathsf{PRS}}
\newcommand{\wqprg}{\mathsf{wQPRG}}
\newcommand{\sqprg}{\mathsf{sQPRG}}
\renewcommand{\Re}{{\sf Re}}
\renewcommand{\Im}{{\sf Im}}
\newcommand{\st}{\text{ s.t. }}
\newcommand{\by}{\times}
\newcommand{\veps}{\varepsilon}
\newcommand{\vtheta}{\vartheta}
\newcommand{\MAJ}{{\sf MAJ}}
\newcommand{\var}{\mathsf{Var}}
\newcommand{\xor}{\oplus}
\newcommand{\bigxor}{\bigoplus}
\renewcommand{\set}[1]{\left\{#1\right\}}
\newcommand{\expbraket}[3]{\langle#1|#2|#3\rangle}
\date{}
\title{Pseudorandom Strings from Pseudorandom Quantum States}
\author{Prabhanjan Ananth\thanks{\email{prabhanjan@cs.ucsb.edu}}\\ {\small UCSB} \and Yao-Ting Lin\thanks{\email{yao-ting\_lin@ucsb.edu}}\\ {\small UCSB} \and Henry Yuen\thanks{\email{hyuen@cs.columbia.edu}}\\ {\small Columbia University}}
\begin{document}

\maketitle

\begin{abstract}

\noindent We study the relationship between notions of pseudorandomness in the quantum and classical worlds. Pseudorandom quantum state generator (PRSG), a pseudorandomness notion in the quantum world, is an efficient circuit that produces states that are computationally indistinguishable from Haar random states. PRSGs have found applications in quantum gravity, quantum machine learning, quantum complexity theory, and quantum cryptography. Pseudorandom generators, on the other hand, a pseudorandomness notion in the classical world, is ubiquitous to theoretical computer science.  While some separation results were known between PRSGs, for some parameter regimes, and PRGs, their relationship has not been completely understood. 
\par In this work, we show that a natural variant of pseudorandom generators called \emph{quantum pseudorandom generators (QPRGs)} can be based on the existence of logarithmic output length PRSGs. Our result along with the previous separations gives a better picture regarding the relationship between the two notions. We also study the relationship between other notions, namely, pseudorandom function-like state generators and pseudorandom functions. We provide evidence that QPRGs can be as useful as PRGs by providing cryptographic applications of QPRGs such as commitments and encryption schemes. 
\par Our primary technical contribution is a method for pseudodeterministically extracting uniformly random strings from Haar-random states.  

\end{abstract}


\newpage 
\tableofcontents
\newpage 

\section{Introduction}
Deterministically generating long pseudorandom strings from a few random bits is a fundamental task in classical cryptography. Pseudorandom generators (PRGs) are a primitive that achieves this task and are ubiquitous throughout cryptography. Beyond cryptography, pseudorandom generators have found applications in complexity theory~\cite{RR94,LP20} and derandomization~\cite{NW94,impagliazzo1997p}. 
\par The concept of pseudorandomness has also been explored in other contexts. Of interest is the notion of pseudorandom quantum states, a popular pseudorandomness notion studied in the quantum setting.  Pseudorandom quantum states (PRS), introduced by Ji, Liu, and Song~\cite{JLS18}, are efficiently computable states that are computationally indistinguishable from Haar-random states. 
PRS have found numerous applications in other areas such as physics~\cite{bouland2019computational,BFGVZ22}, quantum machine learning~\cite{HBC+22}, and cryptography~\cite{AQY21,MY21}. 
\par While some recent works make progress towards understanding the feasibility of PRSGs (PRS generators), its relationship with PRGs\footnote{We are interested in PRGs that guarantee security against efficient quantum adversaries. Typically, such PRGs are referred to as post-quantum PRGs. For brevity, henceforth, by PRGs we will mean post-quantum PRGs.} and its implications to cryptography, there are still some important gaps that are yet to be filled. While the directions listed below might come across as seemingly unrelated, we will discuss how our work addresses all these three different directions. \\

\noindent \underline{\textsc{Direction 1. Separating PR{\bf S}Gs and PRGs}} We summarize the implication from PRGs to PR{\bf S}Gs\footnote{The ``S" is emphasized in this paragraph to highlight the difference between PR{\bf S}Gs and PRGs.} and back in the table below.~\cite{JLS18} showed that any quantum-query secure PRF (implied by PRGs) implies the existence of $\omega(\secparam)$-output length PR{\bf S}Gs, where $\secparam$ is the seed length. On the other hand, Kretschmer~\cite{Kretschmer21} showed a separation between $\omega(\secparam)$-length PR{\bf S}Gs and PRGs. In the $c \cdot \log(\secparam)$-regime, where $c \in \mathbb{R}$ and $c \ll 1$,~\cite{BrakerskiS20} showed that $c \cdot \log(\secparam)$-length PR{\bf S}Gs can be constructed unconditionally. Hence there is a trivial separation between $c \cdot \log(\secparam)$-length PR{\bf S}Gs, with $c \ll 1$, and PRGs since the latter requires computational assumptions. In the same work,~\cite{BrakerskiS20} showed that $c \cdot \log(\secparam)$-length PR{\bf S}Gs, when $c \gg 1$, can be constructed from PRGs. However, whether PR{\bf S}Gs with output length at least $\log(\secparam)$ imply PRGs or are separated from it is currently unknown. 

\begin{center}
\begin{tabular}{|c|c|c|}
\hline
{\bf Output Length of PRSG} & {\bf Implied by PRG?} & {\bf Implication to PRG?} \\ \hline
$\omega(\log(\secparam))$ & Yes~\cite{JLS18} & Black-box separation~\cite{Kretschmer21} \\ \hline
$c \cdot \log(\secparam))$, $c \geq 1$ & Yes~\cite{BrakerskiS20} & {\color{red} {\bf unknown} }  \\ \hline
$c \cdot \log(\secparam))$, $c \ll 1$ & N/A & Separation \\
\ & (Information-theoretic~\cite{BrakerskiS20}) & (trivial) \\ \hline
\end{tabular}
\end{center} 

\noindent Thus, the following question has been left open. 

\begin{quote}
{\em Does $\log(\secparam)$-length PRS imply PRGs or is it (black-box) separated from PRGs?}
\end{quote}

\noindent \underline{\textsc{Direction 2. Hybrid Cryptography.}} While the recent results demonstrate constructions of quantum cryptographic tasks such as commitments, zero-knowledge and secure computation from assumptions potentially weaker than one-way functions, the main drawback of these constructions is that they require the existence of quantum communication channels, an undesirable feature. Starting with the work of Gavinsky~\cite{Gavinsky12}, there has been an effort in building quantum cryptographic primitives using classical communication channels. We can thus characterize the class of cryptographic primitives into three categories: classical cryptography (that uses only classical resources), hybrid cryptography (uses quantum computing but classical communication channels) and  quantum cryptography (no restrictions). Hybrid cryptography has the advantage that the primitives in this category could be based on assumptions weaker than classical cryptography but on the other hand, has the advantage that we only need classical communication channels. Towards a deeper understanding of hybrid cryptography, the following is a pertinent question:

\begin{quote}
\begin{center}
{\em  Identify foundational primitives in hybrid cryptography and understand their relationship with classical and quantum cryptographic primitives.}
\end{center}
\end{quote} 

\noindent \underline{\textsc{Direction 3. Domain Extension, Generically.}} Given any pseudorandom generator of output length $m$, we know how to generically transform it into another secure pseudorandom generator of length $\ell$, for any $\ell > m$. On the other hand, we have limited results for pseudorandom quantum states. Recently,~\cite{GJMZ23} showed that multi-copy $\omega(\secparam)$-length PRS implies a single-copy PRS with large output length. However, we are not aware of any length extension transformation that preserves the number of copies. Investigating this question will help us understand the relationship between PRSs of different output lengths. This leads to the following question: 

\begin{quote}
\begin{center}
{\em Can we generically transform a multi-copy $n$-output PRS into a multi-copy $\ell$-output PRS, where $\ell \gg n$? Or is there a black-box separation?}
\end{center} 
\end{quote}

\paragraph{Our Work.} Towards simultaneously addressing all three directions above, we introduce the notion of \emph{quantum pseudorandom generators (QPRGs)}: which are like classical PRGs in that the input is a short classical string and the output is a longer classical string that is computationally indistinguishable from uniform, but (a) generation algorithm is a quantum algorithm, and (b) the mapping from seed to output only has to be \emph{pseudodeterministic} (i.e., for a fixed seed, the output is a fixed string with high probability). We first show that assumptions that are plausibly weaker than the existence of classical OWFs/PRGs can be used to build QPRGs: we show that QPRGs can be constructed from logarithmic-output PRSGs. In other words, we can generate pseudorandom strings using pseudorandom quantum states in a (pseudo-)deterministic fashion. We then present cryptographic applications of QPRGs and highlight some implications for the structure of classical versus quantum cryptography. 

The reader might wonder whether the notion of quantum generation of classical pseudorandomness is trivial. After all, since quantum computation is inherently probabilistic and can generate unlimited randomness starting from a fixed input, why would one need \emph{pseudo}randomness? However, for cryptographic applications having a source of randomness is not enough; it is important that some random-looking string can be \emph{deterministically generated} using a secret key.

\subsection{Our Results}

\paragraph{Quantum PRGs from PRS.} 


Informally, a $(1 - \eps)$-pseudodeterministic QPRG is a quantum algorithm $G$ where
\begin{itemize}
    \item (\emph{Pseudodeterminism}) For $1 - \eps$ fraction of seeds $k \in \{0,1\}^\secparam$ outputs a fixed string $y_k \in \{0,1\}^n$ with probability at least $1 - \eps$, and
    \item (\emph{Pseudorandomness}) For all efficient quantum distinguishers $A$, 
\[
    \Big | \Pr_{k \leftarrow \{0,1\}^\secparam} [ A(G(k)) = 1] - \Pr_{y \leftarrow \{0,1\}^n} [A(y) = 1] \Big | \leq \negl(\secparam)~.
\]
In other words, no efficient quantum adversary can distinguish between the output of the generator and a uniformly random string. 
\end{itemize}
(See \Cref{section:Quantum Pseudorandomess} for a formal definition of QPRGs). Our first result is the following:

\begin{theorem}[Informal]
\label{thm:main:intro}
Assuming the existence of logarithmic PRS, there exist $\Big(1 - \frac{1}{\poly(\secparam)} \Big)$-pseudodeterministic QPRGs.
\end{theorem}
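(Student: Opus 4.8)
The plan is to use a logarithmic-output PRS generator to produce many copies of a Haar-like state on $c\log\secparam$ qubits, and then extract a classical pseudorandom string from those copies in a way that is pseudodeterministic. The crucial subroutine, as flagged in the abstract, is a procedure that \emph{pseudodeterministically extracts a uniformly random string from a Haar-random state}: given polynomially many copies of a Haar-random state $\ket{\psi}$ on $c\log\secparam$ qubits, output a fixed classical string $s(\ket\psi) \in \bit^n$ (depending only on $\ket\psi$, with high probability over the algorithm's internal randomness), such that over the choice of $\ket{\psi}$ the string $s(\ket\psi)$ is statistically close to uniform on $\bit^n$. First I would build this extractor; then the QPRG on seed $k$ runs the PRS generator enough times to get $\poly(\secparam)$ copies of $\ket{\psi_k}$, applies the extractor, and outputs the resulting string. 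Pseudorandomness of the output against efficient quantum distinguishers then follows by a hybrid argument: replace $\poly(\secparam)$ copies of the PRS with the same number of copies of a Haar-random state (using PRS security, which allows polynomially many copies), and invoke the statistical guarantee of the extractor; pseudodeterminism follows directly from the pseudodeterminism of the extractor plus a union bound over the (polynomially many) PRS invocations.

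The heart of the construction is the extractor, and here is how I would approach it. A Haar-random state on $d = \secparam^{\,c}$ dimensions has, up to global phase, amplitudes distributed like a normalized complex Gaussian vector; in particular quantities such as the probabilities $p_i = |\langle i|\psi\rangle|^2$ in the computational basis follow a known (Dirichlet/Beta-type) distribution, and one can estimate them via quantum state tomography using $\poly(d) = \poly(\secparam)$ copies. So the natural template is: (1) run tomography on the copies to get a classical description $\widehat{\rho}$ close to $\ketbra{\psi}{\psi}$; (2) apply a fixed ``rounding'' map $\round(\cdot)$ that snaps $\widehat\rho$ to a canonical representative $\widetilde\psi$ on a discretized net of states, chosen fine enough that with high probability over $\ket\psi$ no competing net point is near the boundary — this is what makes the output a \emph{fixed} function of $\ket\psi$ with high probability, i.e., pseudodeterministic; (3) apply a fixed, sufficiently smooth function $f$ from canonical descriptions to $\bit^n$, designed so that $f$ pushes the (known, absolutely continuous) distribution of canonical Haar descriptions forward to a distribution on $\bit^n$ that is $\negl$-close to uniform — e.g. by taking $n = O(\log\secparam)$ bits extracted from the continuous Haar-distributed parameters through their (approximate) CDFs, exploiting that a smooth density on a constant-dimensional space can be pushed to near-uniform on a logarithmic number of bits. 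Since $n$ only needs to be, say, $\secparam + 1$ or even just $> $ the seed length to get a meaningful generator stretch — wait, here the seed is the PRS key of length $\secparam$ and the state lives on $c\log\secparam$ qubits, so the extractable entropy is only $\Theta(\log\secparam)$ bits per state; I would therefore extract $n$ slightly super-logarithmic — actually $n$ can be taken as large as we like by noting we are not limited to one state: but PRS gives us only \emph{one} state $\ket{\psi_k}$ per key. So the honest approach is $n = \Theta(\log \secparam)$, which already beats a seed only if... — the resolution used in the paper is presumably that the QPRG stretch is measured differently, or that the PRS is on $\omega(\log\secparam)$ is not available; I would follow whatever stretch the formal QPRG definition in \Cref{section:Quantum Pseudorandomess} permits, and the construction above yields $n$ up to roughly the min-entropy $\approx 2^{c\log\secparam}\cdot(\text{const})$ — on reflection, the Haar state on $\secparam^c$ dimensions carries $\Theta(\secparam^c)$ real parameters, hence $\poly(\secparam)$ extractable bits, comfortably super-linear in $\secparam$, so the stretch is genuine.

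The main obstacle, I expect, is simultaneously achieving pseudodeterminism and near-uniformity. Tomography inevitably has error, so two runs on copies of the same $\ket\psi$ give slightly different estimates; the rounding step must absorb this error (net spacing $\gg$ tomography error) yet the net must be fine enough that $f$ applied to net points is still near-uniform, and crucially the ``bad boundary'' event — where $\ket\psi$ lands close enough to a cell boundary that rounding is unstable — must have probability at most $\eps = 1/\poly(\secparam)$ over the Haar measure. Controlling this requires a quantitative anticoncentration statement for the Haar-induced distribution near the (measure-zero) boundaries of the net, combined with a careful choice of net resolution versus number of tomography copies versus target $\eps$. A secondary subtlety is that tomography of a pure state up to global phase requires fixing a phase convention (e.g., rotate so the first nonzero amplitude is real positive), and the convention itself must be pseudodeterministic — i.e., the ``first amplitude'' should be bounded away from $0$ with high Haar-probability, which is again an anticoncentration fact. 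Finally, the hybrid argument needs the PRS security to hold against an adversary making $\poly(\secparam)$ copy-queries; assuming the multi-copy PRS definition from \cite{JLS18} this is immediate, but I would state it explicitly.
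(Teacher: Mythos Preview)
Your high-level plan (tomography, then a rounding/extraction map, then compose with PRS and hybrid to Haar) matches the paper. The main difference is in the extractor. You propose rounding the full tomographic estimate $\widehat\rho$ to a net point and then applying some CDF-based map, and you correctly flag the global-phase convention as a nuisance. The paper sidesteps this entirely by extracting \emph{only from the diagonal entries} $p_i=|\alpha_i|^2$ of $\widehat\rho$, which are phase-invariant. Concretely: take the first $k=o(d)$ diagonal entries (by \Cref{cor:Haar_Gaussian} these are close in TV to i.i.d.\ $\chi^2_2/(2d)$), group them into $\ell$ blocks of size $r$, sum each block to get $q_1,\dots,q_\ell$ (by the Berry--Esseen-type CLT in \Cref{lem:CLT} each $q_i$ is close to $\cN(r/d,r/d^2)$), and set $b_i=\mathbf{1}[q_i>r/d]$. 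Uniformity is then immediate from Gaussian symmetry about the mean, and the ``bad boundary'' event you worry about is exactly $|q_i-r/d|\le\Delta$ for the tomography error $\Delta$, whose probability is controlled directly by \Cref{fact:NormalDist}. This is cleaner and more explicit than a generic net-plus-CDF argument, and in particular it avoids having to reason about the joint distribution of \emph{all} the Haar parameters or to fix any phase convention.

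There is also a gap in how you derive pseudodeterminism of the QPRG. You write that it ``follows directly from the pseudodeterminism of the extractor plus a union bound over the PRS invocations,'' but the extractor's guarantee is stated \emph{over the Haar measure}: a $1-O(d^{-1/6})$ fraction of Haar states lie in the good set $\cG_\Delta$. This says nothing a priori about what fraction of \emph{PRS keys} $k$ give states on which the extractor is stable, because PRS is only computationally (not statistically) close to Haar. The paper closes this gap by a reduction: if a $\mu(\secparam)$ fraction of keys were bad, then running the extractor twice on fresh copies and checking whether the outputs agree would distinguish PRS from Haar with advantage roughly $\mu^2-O(d^{-1/6})$, contradicting PRS security (see the pseudodeterminism proof in \Cref{lemma:QPRG:determinism}). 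You should replace the union-bound sentence with this reduction. Finally, note that this direct construction only yields a \emph{weak} QPRG (inverse-poly security error); the paper then amplifies to strong security via the XOR construction (\Cref{thm:amplification:QPRG}), which you do not mention.
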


\noindent Our result has several implications. 

\begin{itemize}
    \item \textsc{Implication to Direction 1 (Separating PRSGs and PRGs).} Perhaps surprisingly, our result suggests that in the logarithmic output regime, PRS and PRGs are not separated. This is unlike the super-logarithmic output regime and sub-logarithmic output regime both of which are separated from PRGs. In contrast, in the classical setting, achieving more pseudorandom bits is harder (in some cases strictly~\cite{ODW14}) than achieving a few pseudorandom bits. Our result when combined with prior results~\cite{JLS18,BrakerskiS20} gives a better picture on the relationship between PRS and PRGs (refer to the table in the introduction). 
    \item \textsc{Implication to Direction 2 (Hybrid Cryptography).} Quantum pseudorandom generator is a hybrid cryptographic primitive since it has a quantum generation algorithm but classical inputs and outputs. Later, we show that QPRGs imply many other hybrid cryptographic primitives such as quantum bit commitments with classical communication, quantum encryption with classical ciphertexts, and so on. Our results suggest that QPRGs could play a similar role in hybrid cryptography as PRGs did in classical cryptography. Proving whether QPRGs is a minimal assumption in hybrid cryptography, akin to how PRGs is a minimal assumption in classical cryptography~\cite{Gol90}, is an interesting open question. Furthermore, our result above highlights connections between hybrid and quantum cryptography. 
    
    \item \textsc{Implication to Direction 3 (Domain Extension, Generically).} In the above result, unfortunately, we only obtain QPRGs with inverse polynomial pseudodeterminism error. Suppose we can reduce the error to be negligible then we claim that $O(\log(\secparam))$-output PRS can be generically transformed into $\omega(\log(\secparam))$-output PRS. This can be achieved by appropriately instantiating the construction of Ji, Liu, and Song~\cite{JLS18} using quantum pseudorandom generators, which in turn can be built from $O(\log(\secparam))$-output PRS. Thus, the question of whether we can generically increase the output length of PRS is related to the question of reducing pseudodeterminism error in the above theorem. 

    \item \textsc{Other Implications: New Approach to Pseudorandomness.} One implication of our QPRG construction is that it demonstrates an ``inherently quantum'' way to generate classical pseudorandomness. There are plausible candidates for PRS (even the logarithmic-length ones) that don't seem to involve any classical OWFs in them at all; for example, it is conjectured that random polynomial-size quantum circuits generate pseudorandom states~\cite{AQY21}. 
     
\end{itemize}

\paragraph{Applications of Quantum PRGs.} Next we investigate the cryptographic applications of QPRGs. We demonstrate that QPRGs can effectively replace classical pseudorandom generators in some applications; although the QPRGs are not entirely deterministic, being $(1-\frac{1}{\poly(\secparam)})$-pseudodeterministic is good enough.

Concretely, we explore two applications: statistically binding and computationally hiding commitments, and pseudo one-time pads. While~\cite{AGQY22} previously demonstrated that these applications can be based on logarithmic PRS, we provide alternate proofs assuming the existence of QPRGs combined with Theorem~\ref{thm:main:intro}. Moreover, our constructions resemble the textbook constructions of classical commitments and pseudo one-time pads and thus, are conceptually simpler than the ones presented by~\cite{AGQY22}. 

A statistically binding commitment scheme is a fundamental cryptographic notion where a sender commits to a value such that it is infeasible, even if it is computationally unbounded, for them to change their commitment to a different value. Statistically binding quantum commitments have been a critical tool to achieve another fundamental notion in cryptography, namely secure computation~\cite{BartusekCKM21a,GLSV21}. We demonstrate that statistically binding and computationally hiding commitments can be constructed from QPRGs. 

\begin{theorem}[Informal]
\label{thm:intro:commitments}
Assuming the existence of $(1 - \frac{1}{\poly(\secparam)})$-pseudodetermininistic QPRGs, there exist statistically binding and computationally hiding quantum commitments with classical communication.
\end{theorem}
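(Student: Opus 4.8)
The plan is to mimic the textbook construction of a statistically binding, computationally hiding commitment from a PRG (the Naor-style construction), but using a QPRG in place of the classical PRG and carefully accounting for the pseudodeterminism error. Concretely, to commit to a bit $b$, the receiver sends a uniformly random string $r \in \{0,1\}^{3n}$, the sender samples a seed $k \leftarrow \{0,1\}^{\secparam}$ (with stretch $n = \secparam$, so the QPRG output is in $\{0,1\}^{3\secparam}$), runs $G(k)$ to obtain $y$, and sends $c = y$ if $b = 0$ and $c = y \oplus r$ if $b = 1$; to decommit, the sender reveals $(k, b)$ and the receiver recomputes $G(k)$ and checks consistency. Note all communication here is classical, since QPRG inputs and outputs are classical strings.

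For statistical binding I would argue as in the classical case: binding fails only if there exist seeds $k_0, k_1$ such that $G(k_0)$ "can output" $y_0$ and $G(k_1)$ "can output" $y_1$ with $y_0 \oplus y_1 = r$. The subtlety is that $G$ is a quantum algorithm whose output distribution for a fixed seed need not be supported on a single string. I would handle this by defining, for each seed $k$, the set $S_k$ of strings that $G(k)$ outputs with probability at least some inverse polynomial threshold $\delta$ (so $|S_k| \le 1/\delta$), and observe that the $(1-\eps)$-pseudodeterminism guarantee makes $S_k$ a singleton $\{y_k\}$ for a $1-\eps$ fraction of seeds. A commitment to $0$ that later successfully opens as $1$ would require $y_{k_0}$ reported for $b=0$ and, for $b=1$, a string $y_1$ with $y_1 \oplus r$ being a successfully-openable output of some $G(k_1)$; a union bound over all $\le 2^{\secparam}$ seeds and all $\le 2^{\secparam}/\delta$ openable strings per seed shows that the "bad" set of $r$ values has density at most $2^{2\secparam} \cdot \poly(\secparam)/2^{3\secparam} = \negl(\secparam)$, so binding holds with probability $1 - \negl(\secparam)$ over the receiver's choice of $r$. (Statistical binding against unbounded senders is fine because this counting argument does not care about the sender's complexity; one does need to be slightly careful that an unbounded sender running $G(k)$ could hit low-probability outputs, which is why the argument is phrased in terms of the openability threshold and the decommitment-check probability rather than $S_k$ alone.)

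For computational hiding I would use a hybrid argument: first replace $G(k)$ by a uniformly random $y \leftarrow \{0,1\}^{3\secparam}$, invoking the pseudorandomness property of the QPRG; this change is indistinguishable to any efficient quantum receiver up to $\negl(\secparam)$. In the resulting hybrid the commitment to $b=0$ is a uniform string and the commitment to $b=1$ is $y \oplus r$ which is also uniform and independent of $r$, hence the two are identically distributed, so no receiver has any advantage. Chaining the two steps gives computational hiding with negligible advantage. One should double-check that the pseudodeterminism error does not leak into hiding — it does not, because hiding only invokes the pseudorandomness of $G(k)$ for a random $k$, and $\Pr_k[A(G(k))=1]$ is exactly the quantity the pseudorandomness guarantee controls regardless of whether $G$ is deterministic.

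The main obstacle, and the place I would spend the most care, is the statistical binding argument in the presence of a quantum, genuinely-randomized generation algorithm run by an \emph{unbounded} sender: unlike a classical PRG where each seed maps to exactly one string, here a cheating sender could try to exploit the spread of $G(k)$'s output distribution across many strings. The resolution is that the receiver's decommitment check requires the sender to produce a seed $k$ and then the receiver itself runs $G(k)$ and accepts only if it reproduces the committed string with the honest check — so effectively only strings that $G(k)$ outputs with non-negligible probability can ever be used to open, bounding the number of openable $(k, y)$ pairs by $2^{\secparam} \cdot \poly(\secparam)$ and letting the union bound go through; formalizing "openable with non-negligible probability" and its interaction with the $1/3$ stretch is the technical heart of the proof. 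A secondary point to get right is choosing the output-length stretch (here $3\secparam$) large enough that $2^{2\secparam}\poly(\secparam)/2^{3\secparam}$ is negligible, exactly as in Naor's commitment.
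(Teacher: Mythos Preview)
Your hiding argument is fine and matches the paper's. The construction and binding argument, however, have genuine gaps.

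\textbf{Correctness.} You never address correctness, and your single-execution scheme does not achieve it. The honest sender runs $G(k)$ once to produce $y$, and the honest receiver runs $G(k)$ once to check. Since $G$ is only $(1-\eps)$-pseudodeterministic with $\eps = 1/\poly(\secparam)$, these two independent evaluations disagree with probability $\Theta(\eps)$ even for a good seed, and an $\eps$ fraction of seeds are bad to begin with. So the honest decommitment is rejected with probability $\Theta(1/\poly(\secparam))$, not $\negl(\secparam)$. A commitment scheme must have negligible correctness error, so the single-seed construction is not a valid scheme.

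\textbf{Binding.} As written, your binding bound is also only $1/\poly$, not negligible. You correctly observe that for $r$ outside the bad set, any successful double-opening must use a string $c$ with $\Pr[G(k_0)=c]<\delta$ or $\Pr[G(k_1)=c\oplus r]<\delta$; but then the cheating probability for such $r$ is bounded by $\delta$, not by zero. With your stated choice of $\delta$ as an inverse polynomial, the overall binding error is $\negl(\secparam)+\delta = 1/\poly(\secparam)$. (This particular issue is fixable by taking $\delta$ mildly sub-exponential, say $\delta=2^{-\secparam/4}$: then $|S_k|\le 2^{\secparam/4}$, the bad-$r$ density is $\le 2^{5\secparam/2}/2^{3\secparam}=2^{-\secparam/2}$, and the residual binding error is $\le\delta$, both negligible. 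But you did not say this, and it still does not fix correctness.)

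\textbf{What the paper does.} The paper repairs both issues by $\secparam$-fold parallel repetition: the sender uses independent seeds $k_1,\dots,k_\secparam$, commits to $G(k_1)\|\cdots\|G(k_\secparam)$ (or each block XORed with $r$), and the receiver accepts an opening if at least $2\secparam/3$ of the blocks verify. Correctness then follows by Chernoff. For binding, the paper does \emph{not} use a threshold set $S_k$; instead it takes only the single most-likely output $F(k):=\argmax_y\Pr[G(k)=y]$, defines $\Bad=\{r:\exists k,k',\ F(k)\oplus F(k')=r\}$ (so $|\Bad|\le 2^{2\secparam}$), and proves that for every $r\notin\Bad$ and every $k,k'$ one has $\Pr[G(k)\oplus G(k')=r]\le 1/2$. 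The key lemma here is a clean combinatorial fact: if two probability vectors $p,q$ have distinct argmaxes then $\sum_i p_iq_i\le 1/2$. With the $2/3$ threshold, a double-opening forces at least $\secparam/3$ coordinates to simultaneously satisfy this $\le 1/2$ event, which is $2^{-\Omega(\secparam)}$ by a tail bound. This is the ``two technical claims'' the paper alludes to in the overview, and it is the piece of the argument that genuinely differs from the classical Naor analysis.
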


\noindent It is worth mentioning that there is another recent  work~\cite{BBSS23eprint} that also builds quantum commitments with classical communication albeit from incomparable assumptions\footnote{They consider a variant of PRS referred to as PRS with proof of deletion. On one hand, they don't have restriction on the output length like we do and on the other hand, they assume that PRS satisfies the additional proof of deletion property whereas we don't.}. 

Pseudo one-time pads are a variation of the one-time pad encryption scheme, where the encryption key is much smaller than the message length. As demonstrated by~\cite{AQY21}, pseudo one-time pads are useful for constructing classical garbling schemes~\cite{AIK06} and quantum garbling schemes~\cite{BY22}, which have numerous applications in cryptography. We demonstrate that pseudo one-time pads can be constructed from QPRGs.

\begin{theorem}[Informal]
Assuming the existence of $(1 - \frac{1}{\poly(\secparam)})$-pseudodetermininistic QPRGs, there exist pseudo one-time pads.
\end{theorem}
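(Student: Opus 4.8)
The plan is to mirror the textbook pseudorandom one-time pad, replacing the classical PRG by the QPRG $G$ provided by \Cref{thm:main:intro}. Fix such a $G$ with seed length $\secparam$, output length $n = \poly(\secparam)$ equal to the message length, and pseudodeterminism error $\eps = \frac{1}{\poly(\secparam)}$. The scheme has classical key space $\{0,1\}^\secparam$, and the shared key $k$ is sampled uniformly. To encrypt $m \in \{0,1\}^n$, run $G(k)$ to obtain a classical string $y \in \{0,1\}^n$ and output the classical ciphertext $\ciphertext = y \xor m$; to decrypt $\ciphertext$, run $G(k)$ again to obtain $y'$ and output $\ciphertext \xor y'$. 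Note that both the key and the ciphertext are classical and only encryption/decryption are quantum, so this is a genuinely hybrid primitive, and it structurally matches the classical construction.

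Correctness follows from pseudodeterminism. Call a seed $k$ \emph{good} if $G(k)$ outputs a fixed string $y_k$ with probability at least $1-\eps$; by assumption at least a $(1-\eps)$-fraction of seeds are good. Conditioned on $k$ good, the encryptor's sample $y$ equals $y_k$ except with probability $\eps$, and independently the decryptor's sample $y'$ equals $y_k$ except with probability $\eps$; a union bound gives $y = y' = y_k$ except with probability $2\eps$, in which case $\ciphertext \xor y' = y_k \xor m \xor y_k = m$. Averaging over $k$, decryption succeeds except with probability at most $3\eps = \frac{1}{\poly(\secparam)}$, which is the (inverse-polynomial) correctness error the scheme inherits from the QPRG.

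For security we use the pseudorandomness of $G$ together with the fact that $z \mapsto z \xor m$ is a bijection of $\{0,1\}^n$ for every fixed $m$. Consider any efficient (possibly quantum) adversary $A$ that chooses $m_0, m_1 \in \{0,1\}^n$ and then receives $\ciphertext = \Enc(k, m_b)$ for a uniform key $k$ and a uniform bit $b$; we must bound its advantage in guessing $b$. For each fixed $b$, the map $z \mapsto A(z \xor m_b)$ is an efficient distinguisher, so by the pseudorandomness of $G$,
\[
\Big| \Pr_{k \leftarrow \{0,1\}^\secparam}[A(G(k) \xor m_b) = 1] - \Pr_{y \leftarrow \{0,1\}^n}[A(y \xor m_b) = 1] \Big| \leq \negl(\secparam)~.
\]
Since $y \xor m_b$ is uniform whenever $y$ is uniform, the second probability is a value $q$ that does not depend on $b$; hence $\Pr[A(\Enc(k,m_0)) = 1]$ and $\Pr[A(\Enc(k,m_1)) = 1]$ are each within $\negl(\secparam)$ of $q$, so by the triangle inequality they differ by at most $\negl(\secparam)$. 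This is exactly one-time indistinguishability of encryptions of $m_0$ and $m_1$.

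The construction and the two arguments above are routine; the only genuine subtlety is in correctness, where the encryptor and decryptor run the \emph{same} probabilistic quantum procedure on the \emph{same} seed and must collapse to the same string, which is precisely what the union bound over the two independent runs of a good seed handles. A secondary point is obtaining a QPRG at the required output length $n$: either \Cref{thm:main:intro} already yields QPRGs of arbitrary polynomial output length, or one prepends a length-extension step, which increases the pseudodeterminism error by only a polynomial factor and thus keeps it at $\frac{1}{\poly(\secparam)}$. Obtaining negligible correctness error appears to require a QPRG with negligible pseudodeterminism error, which \Cref{thm:main:intro} does not provide, so we state the pseudo one-time pad with inverse-polynomial correctness error.
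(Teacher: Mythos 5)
There is a genuine gap in the correctness argument. The paper's definition of a pseudorandom one-time pad demands \emph{negligible} correctness error, but your construction — a single application of $G$ at encryption and another at decryption — only achieves error $O(\eps) = \frac{1}{\poly(\secparam)}$. You notice this and conclude that negligible correctness ``appears to require a QPRG with negligible pseudodeterminism error,'' but that inference is wrong, and it is precisely the obstacle the paper's Construction 5.2 is designed to clear. The paper keys the scheme with $k = k_1\| \cdots \| k_\secparam \in \bit^{\secparam^2}$, encrypts as $c_i := m \oplus G(k_i)$ for $i \in [\secparam]$, and decrypts by computing $m_i := c_i \oplus G(k_i)$ and outputting $\MAJ(m_1,\dots,m_\secparam)$. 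With pseudodeterminism $1 - O(\secparam^{-c})$, a Chernoff bound shows all but a $2^{-\Omega(\secparam)}$ fraction of keys have at least $0.8\secparam$ good $k_i$'s, and for each good $k_i$ the encryptor's and decryptor's runs of $G(k_i)$ agree except with probability $O(\secparam^{-c})$; a second Chernoff bound then shows a majority of coordinates agree except with negligible probability. So negligible correctness is attainable without improving the pseudodeterminism, via parallel repetition plus majority vote — you need independent seeds (not repeated runs on the same seed, which stay correlated through the bad-seed event) to make the amplification work.

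Two smaller points. First, the parallel repetition also forces the stretch condition to be checked against the larger key $\bit^{\secparam^2}$, which the paper handles by taking $\ell(\secparam) > \secparam^2$; your single-seed version trivially satisfies stretch but does not meet the definition, so the comparison is moot. Second, your security argument is fine and in fact cleaner than the paper's, because you only have one ciphertext block; the paper must run a hybrid over the $\secparam$ coordinates $c_1,\dots,c_\secparam$, replacing each $G(k_i)$ with a fresh uniform string in turn, and it is essentially the same one-way-to-hiding reduction you give applied $\secparam$ times. So your security reasoning transfers directly to the amplified construction; it is only the correctness step you would need to replace with the majority-vote analysis.
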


\paragraph{Quantum PRFs.} In addition to the above, we also explore pseudorandom functions with a quantum generation algorithm, which we call quantum pseudorandom functions (QPRFs). We show the following theorem:

\begin{theorem}[Informal]
Assuming the existence of $(\omega(\log\secp), O(\log\secp))$-PRFS, there exists a quantum pseudorandom function (QPRF) satisfying determinism with probability at least $\left( 1-\frac{1}{\poly(\secp)} \right)$.
\end{theorem}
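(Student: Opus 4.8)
The plan is to mimic the proof of \Cref{thm:main:intro} (QPRGs from logarithmic PRS), but now working ``point-by-point'' over the input domain of the PRFS. Recall the $(\omega(\log\secp), O(\log\secp))$-PRFS generator takes a key $k \in \bit^\secp$ and an input $x \in \bit^{\omega(\log\secp)}$ and produces copies of an $O(\log\secp)$-qubit state $\ket{\psi_{k,x}}$ such that, for any polynomially many inputs $x_1,\dots,x_q$, the tuple of copies $(\ket{\psi_{k,x_i}}^{\otimes t})_i$ is computationally indistinguishable from $q$ independent Haar states (each with $t$ copies). The key observation is that because the states have only $O(\log\secp)$ qubits, a polynomial number of copies suffices to run \emph{state tomography} (e.g., the Keyl/quantum-state-estimation procedure used in \Cref{thm:main:intro}) to recover, for a fixed $(k,x)$, an essentially-fixed classical description $\widehat M_{k,x}$ of the density matrix $\ketbra{\psi_{k,x}}{\psi_{k,x}}$ up to small trace distance. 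Then one applies the same pseudodeterministic classical extraction procedure (the ``extract uniform bits from a Haar-random density matrix description'' routine that is the paper's main technical tool) to $\widehat M_{k,x}$ to obtain an output string $F_k(x) \in \bit^n$.

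Concretely, the steps I would carry out are: (1) Define $F_k(x)$: on input the key $k$ and point $x$, use the PRFS generator to prepare $\poly(\secp)$ copies of $\ket{\psi_{k,x}}$, run the tomography subroutine to get a classical matrix estimate, and then run the pseudodeterministic extractor to get the output. (2) Argue \emph{pseudodeterminism} pointwise: for a fixed $(k,x)$, with probability $\ge 1 - 1/\poly$ over the internal randomness, the tomography output lands in a small ball around the true state, and conditioned on that, the extractor is deterministic; hence for a $1-1/\poly$ fraction of keys (and all $x$, or a $1-1/\poly$ fraction of $x$), $F_k(x)$ equals a fixed string with probability $\ge 1 - 1/\poly$. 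This is essentially the same union-bound/Markov argument as in \Cref{thm:main:intro}, just carried out with an extra quantifier over $x$; one has to be a little careful that the ``bad key'' set and ``bad input'' set stay small simultaneously. (3) Argue \emph{pseudorandomness} (PRF security): suppose a QPT adversary with oracle access to $F_k(\cdot)$ distinguishes it from a random function. Since each oracle query only needs $\poly(\secp)$ copies of $\ket{\psi_{k,x}}$ and the adversary makes $\poly(\secp)$ queries, we can simulate the oracle using the PRFS challenger; by the multi-input PRFS security we may replace all the $\ket{\psi_{k,x_i}}$ by independent Haar states. Then we must show that applying tomography + extraction to independent Haar states yields outputs statistically close to a random function — this follows from the analysis already used for QPRGs (extraction from a single Haar state gives an almost-uniform string), applied independently per query, plus a hybrid over queries; collisions among the (polynomially many) queried inputs are handled by the usual argument, and distinctness of inputs means the corresponding Haar states are independent.

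The main obstacle I expect is controlling the error accumulation across the exponentially many inputs $x$ while only getting inverse-polynomial pseudodeterminism per point. Unlike a QPRG, which has a single fixed output, a QPRF must behave consistently on \emph{all} queried points, and a naive union bound over all $x \in \bit^{\omega(\log\secp)}$ is hopeless; the right formulation is to only claim that a $1-1/\poly$ fraction of inputs are ``good'' (deterministic), or to claim per-point determinism with probability $1-1/\poly$ and note that any fixed polynomial-query adversary only touches polynomially many points, so a union bound over \emph{those} points suffices. Making the security definition and the reduction line up on this point — i.e., getting a meaningful QPRF notion out of a per-point $1-1/\poly$ guarantee, rather than a worst-case-over-inputs guarantee — is the delicate part, and is presumably why the theorem statement only promises determinism ``with probability at least $1 - 1/\poly(\secp)$'' rather than a stronger uniformity over inputs. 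A secondary technical point is that the tomography and extraction subroutines must be the \emph{same} deterministic-up-to-ball procedures used in \Cref{thm:main:intro} so that the Haar-state analysis can be reused verbatim; I would simply invoke those lemmas as black boxes.
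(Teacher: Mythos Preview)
Your high-level plan---apply tomography plus the pseudodeterministic extractor to the PRFS state $\ket{\psi_{k,x}}$ and then swap in Haar states for security---matches the paper's starting point, but there is a genuine gap in step~(3). The extractor's output on a single Haar state is only $O(d^{-1/6}) = 1/\poly(\secp)$-close to uniform (this is the ``statistical closeness to uniformity'' bound in \Cref{thm:det:extraction}), so after your hybrid over queries you end up with a distinguishing advantage that is inverse-polynomial, not negligible. The paper's QPRF definition demands negligible selective-security advantage, and to get there the construction does \emph{not} use a single PRFS key: it takes $\secp$ independent keys $k_1,\dots,k_\secp \in \bit^\secp$, runs the extractor on each $\ket{\psi_{k_i,x}}$, and outputs the XOR $\bigoplus_{i=1}^{\secp} y_i$. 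The security proof then, after replacing all PRFS states by independent Haar states, takes a further detour: it replaces the Haar states by PRS outputs on i.i.d.\ seeds (via a separate lemma about polynomially many independent PRS samples) and observes that what remains is literally the strong QPRG $G^{\xor \secp}$ of \Cref{thm:amplification:QPRG} evaluated on i.i.d.\ seeds. Negligible security of the QPRF is then inherited wholesale from the XOR amplification theorem for QPRGs. Your single-key construction has no hook into that amplification argument.

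On pseudodeterminism, your worry about union-bounding over exponentially many inputs $x$ is legitimate, but the paper resolves it at the definitional level rather than by a clever bound: the QPRF is only required to be \emph{selectively} secure against pairwise \emph{distinct} queries fixed in advance, and pseudodeterminism is stated per input (for each $x$ there is a set of good keys of density $1-1/\poly$). The paper explicitly notes that allowing repeated adaptive queries is hopeless with inverse-polynomial determinism error, which is what forces this restricted definition---so your instinct that ``only the polynomially many queried points matter'' is right, but it is baked into the model rather than argued in the proof.
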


In the above theorem, we use pseudorandom function-like states~\cite{AQY21}, a quantum analog of pseudorandom functions, to accomplish this. The notion of pseudorandom function-like states says the following: $t$ copies of states $(\ket{\psi_{x_1}},\ldots,\ket{\psi_{x_q}})$ are computationally indistinguishable from $t$ copies of $q$ Haar states, where $\ket{\psi_{x_i}}$, for every $i \in [q]$, is produced using an efficient PRFS generator that receives as input a key $k \in \set{0,1}^{\secparam}$, picked uniformly at random, and an input $x_i \in \set{0,1}^{\secparam}$. Just like in the case of QPRGs, in the above theorem, we  require PRFS with logarithmic input length.

We show how to leverage QPRFs to achieve private-key encryption with QPT algorithms and classical communication, which is the first result to achieve this notion from assumptions potentially weaker than one-way functions.

\subsection{Future Directions}
Our research raises several important open questions that remain to be explored. Below, we highlight two particularly interesting ones.

\paragraph{Separating QPRGs and QPRFs from Classical Cryptography.} While QPRGs and QPRFs are similar in flavor to their classical counterparts, their ability to generate quantum states suggests that they may be based on weaker assumptions than classical pseudorandom generators and functions. A key question is whether there is a fundamental separation between QPRGs and PRGs, as well as between QPRFs and PRFs. Proving that there is no separation would require a mechanism to efficiently dequantize the generation algorithm, which is a challenging task. This is especially true if the quantum generation algorithm involves running a quantum algorithm that is believed to be difficult to efficiently dequantize; for example, Shor's algorithm.

\paragraph{Reducing Determinism Error.} One  limitation of both QPRGs and QPRFs is that they suffer from inverse polynomial determinism error. It would be interesting to explore whether this error can be reduced to negligible, or whether a negative result can be proven. Understanding the fundamental limits of determinism in quantum pseudorandomness could have important implications. For instance, due to the inverse polynomial error, it is unclear how to apply the GGM transformation~\cite{GGM86} to go from quantum pseudorandom generators to quantum pseudorandom functions.
\subsection{Technical Overview}
\label{sec:tech-overview}
We summarise our technical contributions below: 
\begin{itemize}
    \item We identify the definition of pseudodeterministic extractor that gives quantum pseudorandom generators. We then realize the notion of pseudodeterministic extractors; this is our core technical contribution and it involves using interesting properties about the Haar measure in the analysis. 
    \item We define and realize quantum pseudorandom functions from logarithmic output pseudorandom function-like states. Defining quantum PRFs turns out to be subtle. 
    \item We demonstrate applications of quantum pseudorandom generators and functions to commitments and encryption schemes. Especially, in commitment schemes, it turns out to be tricky to argue security due to the inverse polynomial determinism error. 
\end{itemize}

\subsubsection{Core Contribution: Pseudodeterministic Extractor}
\label{sec:corecontribution}

\noindent We focus on the goal of building a quantum pseudorandom generator from $O(\log(\secparam))$-qubit pseudorandom quantum states. Towards this goal, we identify the important step as follows: extracting $\poly(d(\secparam))$-length binary strings from $\log(d(\secparam))$-qubit Haar states in such a way, the following key properties are satisfied: 
\begin{itemize}
\item \underline{\textsc{Pseudodeterminism}}:  Running the extraction process on $\poly(d(\secparam))$-copies of $|\psi \rangle$ should give the same string $y$ with a very high probability. Ideally, with probability at least $1-\frac{1}{\poly(d(\secparam))}$, 
\item \underline{\textsc{Efficiency}}: The extraction process should run in time $\poly(d(\secparam))$,
\item \underline{\textsc{Statistical Indistinguishability}}:  The string $y$ is statistically close to the uniform distribution over $\{0,1\}^{\poly(d(\secparam))}$ as long as $| \psi \rangle$ is sampled from the Haar distribution. Here, we allow the total variation distance error to be as large as $O(\frac{1}{d})$.  
\end{itemize} 
\vspace{-1em}

\noindent It turns out most of the work goes in achieving the pseudodeterminism property. 

\paragraph{Toy Case.} Towards designing an extractor satisfying the above three properties, we first consider an alternate task. Instead of $\poly(d(\secparam))$-copies of the $\log(d(\secparam))$-qubit state $|\psi \rangle$, we are given all the amplitudes of $|\psi \rangle$, say $(\alpha_1,\ldots,\alpha_{d(\secparam)})$, in the clear. Can we extract true randomness from this? For instance, we could extract $b_1,\ldots,b_{d(\secparam)}$, where $b_i$ is the first bit of the real component of $\alpha_i$. Firstly, it is not even clear that $b_i$ is distributed to according to the uniform distribution over $\{0,1\}$. Moreover, all the bits $b_1,\ldots,b_{d(\secparam)}$ are not independent and in fact, are correlated with each other due to the normalization condition $\sum_i |\alpha_i|^2=1$. 
\par Fortunately, we can rely upon a result in random matrix theory~\cite{Meckes19}, that states the following: suppose $(\alpha_1,\ldots,\alpha_{d(\secparam)})$ are drawn from a Haar measure in ${\cal S}(\mathbb{R}^{d})$ then it holds that any $o(d(\secparam))$ co-ordinates of $(\alpha_1,\ldots,\alpha_{d(\secparam)})$ are $1/o(d(\secparam))$-close in total variation distance with $o(d(\secparam))$-dimensional vector where each component is drawn from i.i.d Gaussian ${\cal N}(0,\frac{1}{d})$. 
\par We generalize this result to the case when $(\alpha_1,\ldots,\alpha_{d(\secparam)})$ are drawn from a Haar measure in ${\cal S}(\mathbb{C}^d)$, and not ${\cal S}(\mathbb{R}^d)$ (see~\Cref{cor:Haar_Gaussian}) at the cost of reducing the standard deviation from $\frac{1}{d}$ to $\frac{1}{2d}$. We then use our observation to come up with an extractor as follows. The extractor takes as input\footnote{For the current discussion, we assume that the extractor has an infinite input tape that allows for storing infinite bits of precision of the complex numbers.} $(\alpha_1,\ldots,\alpha_{d(\secparam)})$,
\begin{itemize}
    \item Choose the first $k=o(d(\secparam))$ entries among $(\alpha_1,\ldots,\alpha_{d(\secparam)})$.
    \item Rounding step: for every $i \in [k]$, if $\Re(\alpha_i) > 0$, then set $b_i=0$. Otherwise, set $b_i=1$. 
    \item Output $b_1 \cdots b_{k}$. 
\end{itemize}
From our observation and the symmetricity of ${\cal N}(0,\frac{1}{d})$, it follows that when $(\alpha_1,\ldots,\alpha_{d(\secparam)})$ is drawn from a Haar distribution on ${\cal S}(\mathbb{C}^{d})$ then the output of the extractor is $o(\frac{1}{d(\secparam)})$-close to uniform distribution on $\{0,1\}^k$. Moreover, the above procedure is deterministic. 

\paragraph{Challenges.} Our hope is to leverage the above ideas to design an extractor that can extract given $\poly(d(\secparam))$-copies of a $O(\log(d(\secparam)))$-qubit Haar state $\ket{\psi}$. We encounter a couple of challenges. 
\begin{enumerate}
    \item First challenge: We have access only to the copies of $\ketbra{\psi}{\psi}$ without the amplitudes given to us in plain text, making it infeasible to implement the previously described method. However, we can still carry out tomography and retrieve an estimated version of the matrix $\ketbra{\psi}{\psi}$. If the amplitudes of $\ket{\psi}$ are $\{\alpha_x\}_{x \in [d]}$ then the $(x,y)^{th}$ entry in the density matrix $\ketbra{\psi}{\psi}$ is $\alpha_x \alpha_y^{*}$. We need to analyze the distribution corresponding to $\alpha_x \alpha_y^{*}$ and, design an approach for obtaining a uniform distribution from it.
    \item Second challenge: Tomography is inherently a probabilistic technique, and hence, each time tomography is executed on multiple copies of $\ket{\psi}$, the output obtained may differ. Additionally, the trace distance between the density matrix obtained via tomography and the original density matrix is inversely proportional to the dimension, which is polynomial in this case, and this may be significant. Both of these factors collectively affect the determinism guarantees of the extractor. In general, it is not feasible to partition ${\cal S}(\C^d)$ into regions labeled by a bitstring such that given multiple copies of a state in a region, the corresponding bitstring can be deterministically recovered.

\end{enumerate}
\noindent We tackle the above challenges using the following insights.

\paragraph{Addressing the first challenge.} We first tackle the first bullet above. Notice that the diagonal entries in the density matrix $\ketbra{\psi}{\psi}$ is $\{|\alpha_i|^2\}_{i \in [d(\secparam)]}$, where $\ket{\psi}=\sum_i \alpha_i \ket{i}$. If $\alpha_j = a_j + i b_j$ then $|\alpha_j|^2 = a_j^2 + b_j^2$. Given our earlier observation about the closeness of $o(d(\secparam))$ entries in a vector drawn from ${\cal S}(\mathbb{C}^d)$ with iid Gaussian, we will make the following simplifying assumption. We assume that   $(\alpha_1,\ldots,\alpha_{k})$, where $k=o(d)$, is sampled such that for every $i \in [k]$, $a_i$ and $b_i$ are distributed according to i.i.d Gaussian ${\cal N}(0,\frac{1}{2d})$. From this, it follows that $|\alpha_i|^2$ is distributed according to a {\em chi-squared} distribution with 2 degrees of freedom. Unfortunately, chi-squared distribution does not have the same nice symmetricity property as a Gaussian distribution. So we will instead extract randomness in a different way. 
\par We divide $(|\alpha_1|^2,\ldots,|\alpha_k|^2)$ into blocks of size $r$ and denote $\ell$ to be the number of blocks, where $r,\ell = o(d)$. Then, add the elements in a block. Call the resulting elements $q_1,\ldots,q_{\ell}$. From central limit theorems~\cite{sirazhdinov1962convergence}, one can show that $q_1,\ldots,q_{\ell}$ are $O(1/\sqrt{r})$-close to $\ell$ samples drawn i.i.d from $\mathcal{N}(\frac{r}{d},\frac{r}{d^2})$. Thus, using central limit theorem, we are back to the normal distribution, except that the mean is shifted to $\frac{r}{d}$ rather than 0. This gives rise to a natural rounding mechanism. 
\par We will check if $q_i > \frac{r}{d}$ and if so, we set a bit $b_i=0$ and if not, we set it to be 0. By carefully choosing the parameters $k$ and $\ell$ and combining the above observations, we can argue that $b_1,\ldots,b_{k}$ is $O(d^{-1/6})$-close to the uniform distribution on $\{0,1\}^{\ell}$. 
\par To summarise, the informal description of the extractor is as follows: given $\poly(d)$ copies of a $d$-dimensional state $\ket{\psi}$,
\begin{itemize}
\item First perform tomography to recover a matrix $M \in \mathbb{C}^d \times \mathbb{C}^d$ that is an approximation of $\ketbra{\psi}{\psi}$
\item Then, pick $o(d)$ diagonal entries in $M$ and break this into $\ell$ blocks of size $r$. 
\item Sum up all the entries in each block to get $\ell$ values $q_1,\ldots,q_{\ell}$. Round every $q_i$ to get $b_i$. 
\item Output $b_1,\ldots,b_{\ell}$.
\end{itemize}

\paragraph{Addressing the second challenge.} While the above construction seems promising, we still have not addressed the second challenge pertaining to the determinism property. It could be the case that all the $q_i$s are very close to the mean and due to the tomography error, every time we try to extract we set $b_i=0$ sometimes and $b_i=1$ the rest of the time. This should not be surprising as we said earlier, that it should not be possible to partition ${\cal S}(\mathbb{C}^d)$ such that for every $\ket{\psi}$, there is a bitstring $b_{\psi}$ such that given many copies of $\ket{\psi}$, the extractor always outputs the same bitstring $b_{\psi}$.   
\par In fact, we can identify a {\em forbidden region} in ${\cal N}(\frac{r}{d},\frac{r}{d^2})$ (see~\Cref{fig:gaussian} below) such that if $q_i$ falls into the forbidden region then there is a significant chance that $q_i$ will be classified as either 0 or 1. The forbidden region has width $\frac{1}{d}$ on either side of the mean. Given this, we give up all hope of achieving perfect determinism and instead shoot for determinism with $o(1/d)$ error. 
\par We identify a set of $d(\secparam)$-dimensional states ${\cal G}_{\Delta}$, where $\Delta=\frac{1}{d}$, such that if a state $\ket{\psi}$ is in ${\cal G}_{\Delta}$ then it holds that none of $q_1,\ldots,q_{\ell}$, generated from $\ket{\psi}$, lies in the forbidden region. The setting of $\Delta$ is carefully chosen to accommodate for the error in tomography. 

\begin{figure}[!htb]
\begin{center}
\begin{tikzpicture}[>={Stealth[length=6pt]},declare function={g(\x)=2*exp(-\x*\x/3);
    xmax=3.5;xmin=-3.4;x0=1.5;ymax=2.75;}]
    \path (4,0) node[below]{$x$} (x0,0) (0,ymax) node[right]{${\cal N}\left( \frac{r}{d},\frac{r}{d^2} \right)$};
 \draw[black!50] (-3.7,0) edge[->] (4,0) foreach \X in {-3.5,-3,...,3}
  {(\X,0) -- ++ (0,0.1)} (0,0) edge[->] (0,ymax);
 \draw[fill=red!1000,opacity=0.1] plot[domain=-0.75:0.75,samples=15,smooth] (\x,{g(\x)}) -- (0.75,0) -| cycle;  
 \draw[>=stealth,line width=0.01mm,<->] (0,-0.1) -- node[below]{$\frac{1}{d}$} (0.75,-0.1);
 \draw plot[domain=xmin:xmax,samples=51,smooth] (\x,{g(\x)}); 
\end{tikzpicture}
\end{center}
\vspace{-2em}
\caption{The red region denotes the {\em forbidden region}.}
\label{fig:gaussian}
\end{figure}
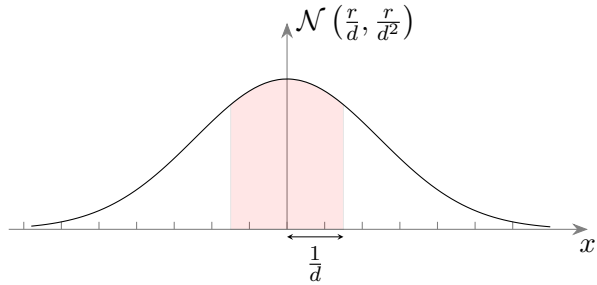

\noindent Once ${\cal G}_{\Delta}$ is identified, we prove two things: 
\begin{itemize}
    \item Firstly, if a state is sampled from the Haar distribution on ${\cal S}(\mathbb{C}^{d(\secparam)})$ then with at least $1-o\left( \frac{1}{d} \right)$ probability, $\ket{\psi} \in {\cal G}_{\Delta}$.  
    \item Secondly, for every $\ket{\psi} \in \mathcal{G}_{\Delta}$, the probability that the extractor, given $\poly(d(\secparam))$-copies of $\ket{\psi}$, outputs the same string twice is at least $1-o\left( \frac{1}{d} \right)$. Roughly, this follows from the fact that $(q_1,\ldots,q_{\ell})$, generated from $\ket{\psi}$, gets misclassified with very small probability. 
\end{itemize}
We can leverage the above two observations to show that our extractor satisfies determinism with probability at least $1-o\left( \frac{1}{d} \right)$. 

\subsubsection{From Pseudodeterminism Extractor to Quantum PRGs} 
With the pseudodeterministic extractor in hand, we propose the following construction of quantum pseudorandom generators: on input a seed $k \in \{0,1\}^{\secparam}$,
\begin{itemize}
    \item Perform this procedure polynomially many times: run the PRS generator on $k$ to produce the PRS state $\ket{\psi}$, 
    \item Run the pseudodeterministic extractor (discussed in~\Cref{sec:corecontribution}) on polynomially many copies of $\ket{\psi}$ to obtain a binary string $y$,
    \item Output $y$. 
\end{itemize}
\noindent Recall that the guarantees of the pseudodeterministic extractor only hold for Haar states. In the above construction, we are invoking the extractor on PRS states. However, we can invoke the security of PRSGs to replace PRS states with Haar states and then invoke the guarantees of the extractor. Just like the extractor, the above quantum PRG would also suffer from an inverse polynomial determinism error. Similarly, the above construction would suffer from inverse polynomial error in security; that is, the output of the above QPRG (on a random seed) cannot be distinguished from a uniformly random output with at most inverse polynomial error. We call a QPRG that satisfies inverse polynomial determinism error and inverse polynomial security error to be a {\em weak} QPRG and a QPRG that satisfies negligible security error to be a {\em strong} QPRG. 
\par While we currently do not know how to reduce the pseudodeterminism error, there is still hope to reduce the security error. Indeed, there are security amplification techniques that are well studied in the classical cryptography literature. 

\paragraph{From Weak QPRG to Strong QPRG.} The naive approach of going from a weak QPRG to a strong QPRG is to use parallel repetition: on input a seed of length $s \cdot \secparam$, break the seed into $s$ parts, apply QPRG on each of them and then XOR the outputs. While this should help security (as we see below), it hurts pseudodeterminism. Using union bound, we can argue that the pseudodeterminism error increases by a multiplicative factor of $\secparam$. Thus, if we start with a weak PRG with a sufficiently small pseudodeterminism error then this multiplicative factor won't hurt us. 

To prove that the security error is negligible, we will use XOR-based security amplification techniques~\cite{DIJK09TCC, MauTes09CRYPTO, MauTes10TCC}. The analysis in the amplification theorems~\cite{DIJK09TCC, MauTes09CRYPTO, MauTes10TCC} was initially tailored to classical settings, using a careful analysis, we show that the analysis also extends to the quantum setting. 

\subsubsection{Quantum Pseudorandom Functions} 
We also demonstrate the connections between quantum pseudorandom functions and logarithmic-output pseudorandom function-like states~\cite{AGQY22}. 
\par Roughly speaking, a quantum pseudorandom function is a pseudorandom function except that the generation algorithm is quantum and moreover, we allow for inverse polynomial determinism error. Defining the security of this notion requires some care. If we allow the adversary to make arbitrary queries to the oracle (that is either the QPRF or the random function) then such a notion is clearly impossible to achieve. For instance, the adversary can query the same input twice; in the case when the oracle implements a random function, we always get the same output, and in the QPRF case, there is an inverse polynomial probability with which we get different outputs. Hence, we restrict our attention to the setting when the adversary only makes selective and distinct queries. We show that this definition is sufficient for applications. 
\par The construction of QPRFs from logarithmic-output PRFS follows along similar lines as the construction of QPRGs from pseudodeterminism extractors.

\subsubsection{Applications} 
\noindent We show that QPRGs imply both statistically binding commitments and pseudo-one-time pads. The constructions are similar to the existing constructions from (classical) pseudorandom generators except that we need to contend with the inverse polynomial determinism error. For most of the applications, naive parallel repetition and a majority argument are sufficient to circumvent the determinism issue. However, for the application of commitments, the analysis turns out to be relatively more complicated. 

\paragraph{Commitments.} 
\par The construction of statistically binding commitments from QPRGs is inspired by Naor commitments~\cite{Naor91}.
\par To recall Naor's construction: the receiver sends a random string $r$ of length $3\secparam$ to the sender who applies a classical pseudorandom generator with output length $3\secparam$ on a random seed of length $\secparam$. Depending on the message bit, the sender either XORs the output with $r$ or sends the PRG output as-is. The proof of binding relies upon the fact that the number of pairs of keys whose outputs when XORed with each other lead to $r$ is precisely upper bounded by $2^{2\secparam}$, which is negligible in comparison with all possible values of $r$. 
\par A natural modification to the above construction is to replace the PRG with quantum pseudorandom generator. The immediate issue that arises here is correctness due to the inverse polynomial determinism error. Again, using naive parallel repetition we can resolve the determinism error: where the sender computes many QPRG outputs on independent seeds and depending on the message, the outputs are either XORed with $r$ or kept as-is. In the modified construction, arguing hiding is fairly straightfoward. However, arguing the binding property requires some care. 
\par Naor's binding argument cannot be immediately generalized to the QPRG setting since it has inverse polynomial determinism error. However, we come up with a different argument in this setting: in two technical claims,~\Cref{claim:prob:inner} and~\Cref{claim:collision}, we prove that the statistical binding property still holds. Roughly speaking, the intuition behind the argument is as follows. Suppose ${\sf Bad}$ be the set of QPRG seeds where the pseuodeterminism error is too high; larger than any inverse polynomial and ${\sf Good}$ be the set containing the rest of the QPRG seeds. If the adversarial sender chooses from ${\sf Bad}$ in the commit phase (or even in the opening phase), it could only hurt itself because it will not be able to control the output of the QPRG during the verification process executed by the receiver in the opening phase. On the other hand, if the adversarial sender commits to seed from ${\sf Good}$ in the commit phase and sends (a possibly different) seed from ${\sf Good}$ in the opening phase then using the fact that the outputs are mostly deterministic, we can argue that with overwhelming probability over $r$, the XOR of the two seeds does not equal $r$.  

\section{Preliminaries}
\label{sec:prelims}

We refer the reader to~\cite{nielsen_chuang_2010} for a comprehensive reference on the basics of quantum information and quantum computation. We use $I$ to denote the identity operator. We use $\cS(\cH)$ to denote the set of unit vectors in the Hilbert space $\cH$. We use $\dmx(\cal{H})$ to denote the set of density matrices in the Hilbert space $\cal{H}$. Let $P,Q$ be distributions. We use $d_{TV}(P,Q)$ to denote the total variation distance between them. Let $\rho,\sigma \in \dmx(\cal{H})$ be density matrices. We write $\TD(\rho,\sigma)$ to denote the trace distance between them, i.e.,
\[
    \TD(\rho,\sigma) = \frac{1}{2} \| \rho - \sigma \|_1
\]
where $\norm{X}_1 = \Tr(\sqrt{X^\dagger X})$ denotes the trace norm.
We denote $\norm{X} := \sup_{\ket\psi}\{\braket{\psi|X|\psi}\}$ to be the operator norm where the supremum is taken over all unit vectors.
For a vector $\ket{x}$, we denote its Euclidean norm to be $\norm{\ket{x}}_2$.
We use the notation $M\ge 0$ to denote the fact that $M$ is positive semi-definite.

\paragraph{Haar Measure.} The Haar measure over $\C^d$, denoted by $\Haar(\C^d)$ is the uniform measure over all $d$-dimensional unit vectors. One useful property of the Haar measure is that for all $d$-dimensional unitary matrices $U$, if a random vector $\ket{\psi}$ is distributed according to the Haar measure $\Haar(\C^d)$, then the state $U\ket{\psi}$ is also distributed according to the Haar measure. For notational convenience we write $\Haar_m$ to denote the Haar measure over $m$-qubit space, or $\Haar((\C^2)^{\otimes m})$.

\subsection{Quantum Algorithms}
\label{sec:algorithms}

A quantum algorithm $A$ is a family of generalized quantum circuits $\{A_\lambda\}_{\lambda \in \N}$ over a discrete universal gate set (such as $\{ CNOT, H, T \}$). By generalized, we mean that such circuits can have a subset of input qubits that are designated to be initialized in the zero state, and a subset of output qubits that are designated to be traced out at the end of the computation. Thus a generalized quantum circuit $A_\lambda$ corresponds to a \emph{quantum channel}, which is a is a completely positive trace-preserving (CPTP) map. When we write $A_\lambda(\rho)$ for some density matrix $\rho$, we mean the output of the generalized circuit $A_\lambda$ on input $\rho$. If we only take the quantum gates of $A_\lambda$ and ignore the subset of input/output qubits that are initialized to zeroes/traced out, then we get the \emph{unitary part} of $A_\lambda$, which corresponds to a unitary operator which we denote by $\hat{A}_\lambda$. The \emph{size} of a generalized quantum circuit is the number of gates in it, plus the number of input and output qubits.

We say that $A = \{A_\lambda\}_\lambda$ is a \emph{quantum polynomial-time (QPT) algorithm} if there exists a polynomial $p$ such that the size of each circuit $A_\lambda$ is at most $p(\lambda)$. We furthermore say that $A$ is \emph{uniform} if there exists a deterministic polynomial-time Turing machine $M$ that on input $1^\lambda$ outputs the description of $A_\lambda$. 

We also define the notion of a \emph{non-uniform} QPT algorithm $A$ that consists of a family $\{(A_\lambda,\rho_\lambda) \}_\lambda$ where $\{A_\lambda\}_\lambda$ is a polynomial-size family of circuits (not necessarily uniformly generated), and for each $\lambda$ there is additionally a subset of input qubits of $A_\lambda$ that are designated to be initialized with the density matrix $\rho_\lambda$ of polynomial length. This is intended to model nonuniform quantum adversaries who may receive quantum states as advice.
Nevertheless, the reductions we show in this work are all uniform.

The notation we use to describe the inputs/outputs of quantum algorithms will largely mimick what is used in the classical cryptography literature. For example, for a state generator algorithm $G$, we write $G_\lambda(k)$ to denote running the generalized quantum circuit $G_\lambda$ on input $\ketbra{k}{k}$, which outputs a state $\rho_k$.

Ultimately, all inputs to a quantum circuit are density matrices. However, we mix-and-match between classical, pure state, and density matrix notation; for example, we may write $A_\lambda(k,\ket{\theta},\rho)$ to denote running the circuit $A_\lambda$ on input $\ketbra{k}{k} \otimes \ketbra{\theta}{\theta} \otimes \rho$. In general, we will not explain all the input and output sizes of every quantum circuit in excruciating detail; we will implicitly assume that a quantum circuit in question has the appropriate number of input and output qubits as required by context. 

\subsection{Pseudorandomness Notions}
The notion of pseudorandom quantum states was first introduced by Ji, Liu, and Song in~\cite{JLS18}. We present the following relaxed definition of pseudorandom state (PRS) generators.\footnote{In~\cite{JLS18}, the output of the generator needs to be pure; while we allow it to be mixed.} We note that the relaxation is due to~\cite{AQY21}.

\begin{definition}[Pseudorandom State (PRS) Generator]
We say that a QPT algorithm $G$ is a \emph{pseudorandom state (PRS) generator} if the following holds.
\begin{enumerate}
    \item {\bf{State Generation.}} For all $\secp\in\N$ and all $k\in\bit^\secp$, the algorithm $G$ behaves as $G_\secp(k) = \rho_k$ for some $n(\secp)$-qubit (possibly mixed) quantum state $\rho_k$.
    \item {\bf{Pseudorandomness.}} For all polynomials $t(\cdot)$ and any (non-uniform) QPT distinguisher $A$, there exists a negligible function $\veps(\cdot)$ such that for all $\secp\in\N$, we have
    \begin{align*}
        \left| \Pr_{k\gets\bit^\secp}[A_\secp(G_\secp(k)^{\otimes t(\secp)}) = 1]
        - \Pr_{\ket{\vtheta}\gets\Haar_{n(\secp)}}[A_\secp(\ket{\vtheta}^{\otimes t(\secp)}) = 1] \right|
        \le \veps(\secp).
    \end{align*}
\end{enumerate}
\noindent We also say that $G$ is an \emph{$n(\secp)$-PRS generator} to succinctly indicate that the output length of $G$ is $n(\secp)$.
\end{definition}

\begin{definition}[Selectively Secure Pseudorandom Function-Like State (PRFS) Generators]
We say that a QPT algorithm $G$ is a \emph{selectively secure pseudorandom function-like state (PRFS) generater} if for all polynomials $q(\cdot), t(\cdot)$, any (non-uniform) QPT distinguisher $A$, and any family of pairwise distinct indices 
$\left( \set{x_1, \dots, x_{q(\secp)}} \subseteq \bit^{m(\secp)}\} \right)_\secp$, there exists a negligible function $\eps(\cdot)$ such that for all $\secp\in\N$,
\begin{multline*}
\Big\lvert\Pr_{k\gets\bit^\secp}\left[A_\secp(x_1,\dots,x_{q(\secp)},G_\secp(k,x_1)^{\otimes t(\secp)},\dots,G_\secp(k,x_{q(\secp)})^{\otimes t(\secp)})=1\right] \\
- \Pr_{\ket{\vtheta_1},\dots,\ket{\vtheta_{q(\secp)}}\gets\Haar_{n(\secp)}}\left[A_\secp(x_1,\dots,x_{q(\secp)},\ket{\vtheta_1}^{\otimes t(\secp)},\dots,\ket{\vtheta_{q(\secp)}}^{\otimes t(\secp)})=1\right]\Big\rvert
\le \eps(\secp).
\end{multline*}
We also say that $G$ is an \emph{$(m(\secp),n(\secp))$-PRFS generator} to succinctly indicate that its input length is $m(\secp)$ and its output length is $n(\secp)$.
\end{definition}

\subsection{Basics of Statistics and Haar Measure}
\noindent A simple yet useful observation is that for any two density matrices, the difference between any of their diagonal entries is bounded above by their trace distance.
\begin{fact}
\label{fact:TDdiagonal}
For any density matrices $\rho, \sigma \in \cD(\C^d)$, it holds that $\max_{i\in[d]}|\rho_{ii} - \sigma_{ii}| \le \TD(\rho, \sigma)$, where $\rho_{ii}, \sigma_{ii}$ denote the $i$-th diagonal entry of $\rho, \sigma$ respectively, i.e., $\rho_{ii} = \expbraket{i}{\rho}{i}$ and $\sigma_{ii} = \expbraket{i}{\sigma}{i}$.
\end{fact}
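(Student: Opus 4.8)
The plan is to reduce the statement to the standard variational (operational) characterization of trace distance,
\[
  \TD(\rho,\sigma) \;=\; \max_{0 \,\le\, P \,\le\, I} \Tr\!\big(P(\rho-\sigma)\big),
\]
where the maximum ranges over all Hermitian operators $P$ with $0 \le P \le I$. Granting this, the claimed bound is immediate: the rank-one projector $P = \ketbra{i}{i}$ satisfies $0 \le P \le I$, and $\Tr(\ketbra{i}{i}(\rho-\sigma)) = \rho_{ii}-\sigma_{ii}$, so $\rho_{ii}-\sigma_{ii} \le \TD(\rho,\sigma)$; applying the same bound with the roles of $\rho$ and $\sigma$ exchanged gives $\sigma_{ii}-\rho_{ii}\le\TD(\rho,\sigma)$, hence $|\rho_{ii}-\sigma_{ii}| \le \TD(\rho,\sigma)$, and taking the maximum over $i \in [d]$ finishes the proof.

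So the only real content is the variational characterization, which I would prove as follows. Set $\Delta := \rho - \sigma$; it is Hermitian and traceless. By the spectral theorem write $\Delta = \Delta_+ - \Delta_-$ with $\Delta_+,\Delta_- \ge 0$ having mutually orthogonal supports, and let $P_+$ be the projector onto the support of $\Delta_+$. Then $\norm{\Delta}_1 = \Tr\Delta_+ + \Tr\Delta_-$, while $\Tr\Delta = \Tr\Delta_+ - \Tr\Delta_- = 0$, so $\Tr\Delta_+ = \Tr\Delta_- = \tfrac12\norm{\Delta}_1 = \TD(\rho,\sigma)$. For the upper bound, for any $0 \le P \le I$ we have $\Tr(P\Delta) = \Tr(P\Delta_+) - \Tr(P\Delta_-) \le \Tr(P\Delta_+) \le \Tr(\Delta_+) = \TD(\rho,\sigma)$, where the first inequality uses $P \ge 0$ and $\Delta_- \ge 0$, and the second uses $I - P \ge 0$ and $\Delta_+ \ge 0$ (the trace of a product of two positive semidefinite operators is nonnegative). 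Equality is attained at $P = P_+$ since $\Tr(P_+\Delta) = \Tr\Delta_+$, which gives the characterization.

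The step that deserves care — and the only place one could slip — is resisting the temptation to bound $|\rho_{ii}-\sigma_{ii}| = |\Tr(\ketbra{i}{i}\Delta)|$ by the crude Hölder inequality $\norm{\ketbra{i}{i}}_\infty \cdot \norm{\Delta}_1 = \norm{\Delta}_1 = 2\,\TD(\rho,\sigma)$, which is off by a factor of $2$ and hence too weak. The point is precisely that $\ketbra{i}{i}$ is of the form $0 \le P \le I$ permitted in the variational formula — reading off a diagonal entry is a particular two-outcome measurement — so no factor of $2$ is incurred. Everything else is routine linear algebra, and one could alternatively simply invoke the well-known characterization of $\TD$ as the optimal distinguishing bias rather than reproving it here.
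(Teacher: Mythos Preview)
Your proof is correct and follows essentially the same approach as the paper: invoke the variational characterization $\TD(\rho,\sigma)=\max_{0\le M\le I}\Tr(M(\rho-\sigma))$, plug in $M=\ketbra{i}{i}$, and use symmetry in $\rho,\sigma$ to absorb the absolute value. The paper simply cites the variational formula rather than reproving it, but otherwise the arguments are identical.
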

\begin{proof}
Note that the trace distance has the following variational form: 
\[
\TD(\rho, \sigma) = \max_{0\le M\le I} \Tr(M(\rho-\sigma)).
\]
Furthermore, trace distance is symmetric. Therefore, taking $M:=\ketbra{i}{i}$ for $i\in[d]$, we have $\TD(\rho, \sigma)\ge \max(\rho_{ii} - \sigma_{ii}, \sigma_{ii} - \rho_{ii}) = |\rho_{ii} - \sigma_{ii}|$ as desired.
\end{proof}

\begin{fact}
\label{fact:dataProcessingIneq}
Let $X, Y$ be random variables and $f$ be a function. Then $d_{TV}(f(X), f(Y))\le d_{TV}(X, Y)$.
\end{fact}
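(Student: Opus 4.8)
The plan is to prove this via the event-supremum characterization of total variation distance. Recall that if $X,Y$ have distributions $P,Q$ on a common sample space, then $d_{TV}(X,Y) = \sup_{A} |P(A) - Q(A)|$, where the supremum ranges over all (measurable) events $A$. First I would observe that for any event $B$ in the codomain of $f$, pulling back along $f$ gives $\Pr[f(X) \in B] = \Pr[X \in f^{-1}(B)] = P(f^{-1}(B))$, and likewise $\Pr[f(Y) \in B] = Q(f^{-1}(B))$.

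The key step is then immediate: for each such $B$ we have $|\Pr[f(X)\in B] - \Pr[f(Y)\in B]| = |P(f^{-1}(B)) - Q(f^{-1}(B))| \le \sup_A |P(A) - Q(A)| = d_{TV}(X,Y)$, since $f^{-1}(B)$ is just one particular event among all events $A$. Taking the supremum of the left-hand side over all $B$ gives $d_{TV}(f(X),f(Y)) = \sup_B |\Pr[f(X)\in B] - \Pr[f(Y)\in B]| \le d_{TV}(X,Y)$, which is exactly the claim. Alternatively, in the discrete setting relevant to this paper, one can argue from $d_{TV}(X,Y) = \frac12\sum_x |P(x)-Q(x)|$: the pushforward of $P$ puts mass $\sum_{x\in f^{-1}(y)} P(x)$ on each point $y$, so by the triangle inequality $\sum_y \big|\sum_{x\in f^{-1}(y)} P(x) - \sum_{x\in f^{-1}(y)} Q(x)\big| \le \sum_y \sum_{x\in f^{-1}(y)} |P(x)-Q(x)| = \sum_x |P(x)-Q(x)|$, and dividing by $2$ finishes.

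This is a standard fact (the data-processing inequality for total variation distance), so there is essentially no obstacle. The only point requiring a little care is fixing the correct definition of $d_{TV}$ and noting that $f$ need not be injective — which is precisely what the triangle-inequality step (grouping preimages) handles, and which is why equality can fail in general.
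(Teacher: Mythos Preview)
Your proof is correct. The paper states this as a \texttt{Fact} without proof, so there is nothing to compare against; your argument via the event-supremum characterization (and the alternative discrete computation with the triangle inequality) is exactly the standard justification.
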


\begin{lemma}[Chernoff-Hoeffding Inequality]
\label{lemma:Chernoff-Hoeffding}
Let $X_1, X_2, \ldots, X_n$ be independent random variables, such that $0 \leq X_i \leq 1$ for all $i\in[n]$. Let $X = \sum_{i=1}^n X_i$ and $\mu = \Ex[X]$. Then for any $\veps > 0$,
\[
\Pr[ |X - \mu| > \veps)] \leq 2 e^{-\frac{2\veps^2}{n}}.
\]
\end{lemma}

\subsubsection{Chi-Squared Distributions}
We present the definition and properties of the chi-squared distribution in the following.

\begin{definition}[Chi-Squared Distribution]
Let $Z_1,\dots, Z_k$ be \iid Gaussian random variables $\cN(0, 1)$. The random variable 
\[
Q := \sum_{i\in[k]} Z_i^2.
\]
is distributed according to the \emph{chi-squared distribution with $k$ degrees of freedom}, denoted by $Q \sim \chi^2_k$.
\end{definition}

\begin{fact}
Let $Z\sim\cN(0,1)$. $Z^2$ has a finite third moment.
\end{fact}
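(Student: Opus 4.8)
The plan is to compute the third moment of $Z^2$ directly as the sixth moment of $Z$, namely
\[
    \E\big[(Z^2)^3\big] = \E[Z^6] = \int_{-\infty}^{\infty} z^6 \cdot \frac{1}{\sqrt{2\pi}}\, e^{-z^2/2}\, dz,
\]
and to argue that this integral is finite. First I would establish integrability: for $|z|$ large the integrand is dominated by $e^{-z^2/4}$, since $z^6 e^{-z^2/4} \to 0$ as $|z| \to \infty$, so $z^6 e^{-z^2/2}$ is integrable over $\R$ and hence $\E[Z^6] < \infty$. This already suffices for the statement.

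If an explicit value is wanted, I would then evaluate the integral. One clean route is integration by parts: writing $z^6 e^{-z^2/2} = z^5 \cdot (z e^{-z^2/2})$ and using $\frac{d}{dz} e^{-z^2/2} = -z e^{-z^2/2}$, one obtains the recursion $\E[Z^{2k}] = (2k-1)\,\E[Z^{2k-2}]$, which with $\E[Z^0]=1$ gives $\E[Z^{2k}] = (2k-1)!!$; in particular $\E[Z^6] = 5 \cdot 3 \cdot 1 = 15$. Alternatively, since $Z^2 \sim \chi^2_1$ has moment generating function $(1-2t)^{-1/2}$ finite for $t < 1/2$, all moments of $Z^2$ are automatically finite, and the third moment can be read off from the third derivative at $0$.

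There is essentially no real obstacle here; the only point requiring any care is the justification of convergence of the integral, and that follows immediately from the fact that the Gaussian density decays faster than any inverse polynomial. The computation of the exact constant is routine and can be relegated to the standard Gaussian moment formula.
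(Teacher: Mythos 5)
Your proof is correct. The paper states this fact without providing any proof, so there is no paper argument to compare against; your computation of $\E[Z^6]=15<\infty$ via the standard Gaussian moment recursion (or, as you note, via the finiteness of the $\chi^2_1$ moment generating function near zero) is exactly the standard justification and settles the claim.
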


\begin{fact}
\label{fact:chi-squared}
For all $k\in\N$, the following holds. Let $Q \sim \chi^2_k$. The mean of $Q$ is $k$ and the variance of $Q$ is $2k$. Moreover, suppose $Q_1\sim\chi^2_{k_1}$ and $Q_2\sim\chi^2_{k_2}$, then $Q_1 + Q_2\sim\chi^2_{k_1 + k_2}$. When $k=1$, we often omit the subscript and denote it by $\chi^2$.
\end{fact}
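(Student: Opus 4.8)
\textbf{Proof plan for Fact~\ref{fact:chi-squared}.}

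The plan is to verify each of the three stated claims directly from the definition $Q = \sum_{i\in[k]} Z_i^2$ with $Z_i \sim \cN(0,1)$ i.i.d., using only elementary properties of Gaussian moments and the independence structure.

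First I would establish the single-variable moments. Let $Z \sim \cN(0,1)$. Using the standard Gaussian moment formulas $\E[Z^2] = 1$ and $\E[Z^4] = 3$, we get $\E[Z^2] = 1$ and $\var(Z^2) = \E[Z^4] - (\E[Z^2])^2 = 3 - 1 = 2$. These follow either from integration by parts on $\int z^{2m} e^{-z^2/2}\,dz$ or from the moment generating function of $Z^2$, namely $\E[e^{tZ^2}] = (1-2t)^{-1/2}$ for $t < 1/2$, expanded around $t=0$.

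Next I would assemble the mean and variance of $Q$. By linearity of expectation, $\E[Q] = \sum_{i\in[k]} \E[Z_i^2] = k$. Since the $Z_i$ are independent, the $Z_i^2$ are independent as well, so $\var(Q) = \sum_{i\in[k]} \var(Z_i^2) = 2k$. For the additivity claim, I would write $Q_1 = \sum_{i\in[k_1]} Z_i^2$ and $Q_2 = \sum_{j\in[k_2]} Z'_j^2$ with all $Z_i, Z'_j$ mutually independent $\cN(0,1)$ variables (which is how two independent chi-squared variables can be jointly realized); then $Q_1 + Q_2 = \sum_{i\in[k_1]} Z_i^2 + \sum_{j\in[k_2]} Z'_j^2$ is a sum of $k_1 + k_2$ i.i.d.\ squared standard Gaussians, hence distributed as $\chi^2_{k_1+k_2}$ by definition. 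Alternatively one can use the MGF: the MGF of $\chi^2_k$ is $(1-2t)^{-k/2}$, and $(1-2t)^{-k_1/2}(1-2t)^{-k_2/2} = (1-2t)^{-(k_1+k_2)/2}$, so $Q_1+Q_2 \sim \chi^2_{k_1+k_2}$ by uniqueness of MGFs in a neighborhood of the origin.

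This fact is entirely routine — there is no real obstacle. The only point requiring a word of care is the additivity statement: one must interpret ``$Q_1$ and $Q_2$'' as \emph{independent} chi-squared variables (otherwise the conclusion is false), and the cleanest justification is either the explicit realization via $k_1 + k_2$ independent Gaussians or the MGF factorization argument above.
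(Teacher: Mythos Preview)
Your proof is correct and complete. The paper itself does not supply a proof for this statement: it is labeled as a \emph{Fact} and is treated as a standard property of the chi-squared distribution that the reader is assumed to know. Your elementary verification via Gaussian moments (and the alternative MGF route) is exactly how one would justify it if pressed, and your remark that the additivity claim tacitly requires $Q_1$ and $Q_2$ to be independent is a valid point of care that the paper leaves implicit.
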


\noindent We introduce a strong version of the central limit theorem that characterizes the \emph{total variation distance} between the sum of \iid \emph{absolutely continuous}\footnote{A random variable $X$ is absolutely continuous if there exists a (probability density) function $f:\R\to[0,1]$ such that $\Pr[X\le x] = \int_{-\infty}^{x} f(t) \,dt$ for all $x\in\R$ and $\int_{-\infty}^{\infty} f(t) \,dt = 1$.} random variables and Gaussian random variables. Note that most versions of central limit theorems state only the convergence in \emph{cumulative density function}, which is not sufficient for our purpose.
\begin{lemma}[{\cite[Theorem 1]{sirazhdinov1962convergence}}, restated]
\label{lem:CLT}
Let $X_1,\dots, X_k$ be \iid random variables. If $X_1$ is absolutely continuous and has a finite third moment, then 
\[
d_{TV}\left( \frac{\sum_{i\in[k]}(X_i-\mu)}{\sqrt{k}\sigma}, Z \right) = O\left(\frac{1}{\sqrt{k}}\right),
\]
where $\mu$ is the mean of $X_1$, $\sigma$ is the standard deviation of $X_1$ and $Z \sim \cN(0,1)$.
Equivalently, $d_{TV}(\sum_{i\in[k]}X_i, Z') = O(1/\sqrt{k})$, where $Z' \sim \cN(k\mu, k\sigma^2)$.
\end{lemma}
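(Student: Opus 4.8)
The plan is to reduce the statement to the cited result of Sirazhdinov and Mamatov and then derive the ``equivalently'' reformulation by the affine invariance of total variation distance.

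First I would invoke \cite[Theorem 1]{sirazhdinov1962convergence}: for \iid random variables $X_1,\dots,X_k$ with $X_1$ absolutely continuous and possessing a finite third absolute moment, the density of the normalized sum $S_k := \frac{\sum_{i\in[k]}(X_i-\mu)}{\sqrt{k}\sigma}$ converges in $L^1$ to the standard Gaussian density at rate $O(1/\sqrt{k})$. Here one uses that a finite sum of \iid absolutely continuous random variables is itself absolutely continuous (its density is the convolution of the summands' densities), so $S_k$ has a density for every $k\ge 1$, and that the total variation distance between two absolutely continuous distributions equals one half the $L^1$ distance between their densities. This yields exactly $d_{TV}(S_k, Z) = O(1/\sqrt{k})$ with $Z\sim\cN(0,1)$, which is the first displayed equation.

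For the second equation I would use that total variation distance is preserved under any measurable bijection with measurable inverse (a tightness companion to \Cref{fact:dataProcessingIneq}: the data-processing inequality applied to both $\varphi$ and $\varphi^{-1}$ forces equality). Let $\varphi:\R\to\R$ be the affine bijection $\varphi(x) = \sqrt{k}\,\sigma\, x + k\mu$, which is invertible since $\sigma>0$ (an absolutely continuous $X_1$ is nondegenerate). Then $\varphi(S_k)$ has the law of $\sum_{i\in[k]}X_i$ and $\varphi(Z)$ has the law of $Z'\sim\cN(k\mu, k\sigma^2)$, so $d_{TV}\big(\sum_{i\in[k]}X_i,\, Z'\big) = d_{TV}(S_k,Z) = O(1/\sqrt{k})$, as claimed.

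The main obstacle is not the derivation itself, which is bookkeeping, but confirming that the original 1962 theorem is indeed an $L^1$/total-variation statement rather than merely pointwise convergence of densities or convergence of cumulative distribution functions; the absolute-continuity hypothesis on $X_1$ is precisely what upgrades the classical Berry--Esseen CDF bound to a total variation bound, and one should verify that the density of a \emph{single} summand suffices (no smoothing over several convolutions is needed), which is why the hypothesis is placed on $X_1$ directly.
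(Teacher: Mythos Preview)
The paper does not prove this lemma at all; it is simply a restatement of \cite[Theorem 1]{sirazhdinov1962convergence} with no accompanying proof, and the ``equivalently'' clause is asserted without justification. Your proposal is therefore more than what the paper provides: you correctly identify that the cited theorem gives the first displayed bound directly (as an $L^1$/total-variation statement for densities), and your derivation of the second bound via the affine bijection $\varphi(x)=\sqrt{k}\,\sigma\,x+k\mu$ together with invariance of total variation under bijections is exactly the right way to pass between the two formulations. Your caveat about verifying that the 1962 result is genuinely a total-variation (equivalently $L^1$-density) bound rather than a Berry--Esseen CDF bound is well placed, and indeed the paper itself emphasizes in the surrounding text that this ``strong'' version is needed precisely because CDF convergence would not suffice.
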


\noindent Since a random variable with a chi-squared distribution is the sum of squared \iid Gaussian random variables, we have the following immediate corollary.

\begin{corollary}
\label{cor:CLTofChi}
Let $Q$ be a random variable with a distribution $\chi^2_k$. Then $d_{TV}(Q, Z) = O(1/\sqrt{k})$, where $Z \sim \cN(k, 2k)$.
\end{corollary}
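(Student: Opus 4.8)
\textbf{Proof plan for \Cref{cor:CLTofChi}.}
The plan is to reduce the statement directly to \Cref{lem:CLT} by viewing $Q \sim \chi^2_k$ as a sum of $k$ i.i.d.\ copies of the random variable $Z_1^2$, where $Z_1 \sim \cN(0,1)$. First I would recall from the definition of the chi-squared distribution that $Q = \sum_{i\in[k]} Z_i^2$ for i.i.d.\ standard Gaussians $Z_1,\dots,Z_k$, so the summands $X_i := Z_i^2$ are i.i.d.\ and it only remains to check that $X_1 = Z_1^2$ satisfies the two hypotheses of \Cref{lem:CLT}: absolute continuity and a finite third moment. The finite third moment is exactly the content of the preceding \textbf{Fact} (\emph{Let $Z\sim\cN(0,1)$. $Z^2$ has a finite third moment.}), so I would simply invoke it. For absolute continuity of $Z_1^2$, I would note that its cumulative distribution function is $\Pr[Z_1^2 \le x] = \Pr[-\sqrt{x}\le Z_1 \le \sqrt{x}]$ for $x \ge 0$ and $0$ for $x<0$; differentiating gives the density $f(x) = \frac{1}{\sqrt{2\pi x}} e^{-x/2}$ for $x>0$ (this is the $\chi^2_1$ density), which is a genuine probability density, so $Z_1^2$ is absolutely continuous.

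Next I would compute the mean and variance of $X_1 = Z_1^2$: by \Cref{fact:chi-squared} with $k=1$, $\E[X_1] = 1$ and $\var(X_1) = 2$, i.e.\ $\mu = 1$ and $\sigma^2 = 2$. Applying the ``equivalently'' form of \Cref{lem:CLT} to the sum $Q = \sum_{i\in[k]} X_i$ then yields $d_{TV}(Q, Z') = O(1/\sqrt{k})$ with $Z' \sim \cN(k\mu, k\sigma^2) = \cN(k, 2k)$, which is precisely the claim.

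I do not anticipate any real obstacle here; the corollary is essentially a specialization of \Cref{lem:CLT} plus a bookkeeping check of hypotheses. The only mildly delicate point is confirming absolute continuity of $Z_1^2$ (as opposed to $Z_1$ itself), since squaring is not injective — but this is handled by the explicit density computation above, and is standard. One could alternatively appeal to the second part of \Cref{fact:chi-squared}, which lets one write $\chi^2_k$ as a convolution, but the cleanest route is the direct invocation of \Cref{lem:CLT} on the i.i.d.\ summands $Z_i^2$ as described.
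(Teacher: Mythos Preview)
Your proposal is correct and follows essentially the same route as the paper: write $Q = \sum_{i\in[k]} Z_i^2$, note that $Z_i^2$ is absolutely continuous with finite third moment, read off $\mu=1$ and $\sigma^2=2$, and apply \Cref{lem:CLT}. The paper's proof is just a one-line version of exactly this argument.
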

\begin{proof}
By definition, $Q = \sum_{i\in[k]}Z_i^2$ where $Z_i\sim\cN(0, 1)$. It immediately follows from the facts that $Z_i^2$ is absolutely continuous, $\Ex\left[Z_i^2\right] = 1$, $\var\left(Z_i^2\right) = 2$, the third moment of $Z_i^2$ is finite and~\Cref{lem:CLT}.
\end{proof}

\subsubsection{Haar Measure}
\noindent Given a $d$-dimensional Haar state, all coordinates of the state are correlated due to the unit-norm condition. The following theorem states that the \emph{joint distribution} of $k = o(d)$ fraction of the coordinates in $\cS(R^d)$ is statistically close to a random vector with \iid Gaussian entries. The theorem was first proven in~\cite{DD87Haar}. We will use the version stated in~\cite{Meckes19}.

\begin{theorem}[{\cite[Theorem 2.8]{Meckes19}}]
\label{lem:dist:Haar_entry}
For every integer $d \ge 5$ and every $k\in\N$ that satisfies $1\le k\le d-4$, let $X = (X_1,\dots,X_d)$ be a uniform point on $\cS(\R^d)$. Let $Z$ be a random vector in $\R^k$ with \iid Gaussian entries $\mathcal{N}(0,1/d)$. Then
\begin{align*}
d_{TV} \left( (X_1,\dots,X_k), Z \right) \le \frac{2(k+2)}{d-k-3}.
\end{align*}
\end{theorem}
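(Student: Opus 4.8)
The plan is to pass to the Gaussian model of the uniform measure on the sphere and reduce the statement to a one-dimensional total-variation estimate. Write $X=G/\norm{G}_2$, where $G=(G_1,\dots,G_d)$ has \iid $\mathcal{N}(0,1)$ entries. The law of $(X_1,\dots,X_k)$ is invariant under rotations of $\R^k$ (apply a rotation $R\in O(k)$ to the first $k$ coordinates, extend it to $\mathrm{diag}(R,I_{d-k})\in O(d)$, and use the rotation-invariance of $X$), and the law of $Z$ is rotation-invariant as well. Hence both laws are of the product form $(\text{law of the Euclidean norm})\otimes(\text{uniform measure on }\cS(\R^k))$; since the polar map $(\rho,u)\mapsto\rho u$ is a bijection off the origin and both norms are a.s.\ positive, this gives
\[
    d_{TV}\big((X_1,\dots,X_k),Z\big)=d_{TV}\big(\norm{(X_1,\dots,X_k)}_2^2,\ \norm{Z}_2^2\big).
\]
Now $\norm{(X_1,\dots,X_k)}_2^2=\big(\sum_{i\le k}G_i^2\big)\big/\big(\sum_{i\le d}G_i^2\big)\sim\mathrm{Beta}(k/2,(d-k)/2)$, while $\norm{Z}_2^2=\tfrac1d\sum_{i\le k}G_i^2\sim\tfrac1d\chi^2_k$. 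So it suffices to show $d_{TV}\big(\mathrm{Beta}(k/2,(d-k)/2),\ \tfrac1d\chi^2_k\big)\le\frac{2(k+2)}{d-k-3}$.

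For this I would compare densities directly. Let $f$ denote the $\mathrm{Beta}(k/2,(d-k)/2)$ density on $(0,1)$ and $g$ the density of $\tfrac1d\chi^2_k$ on $(0,\infty)$, a $\mathrm{Gamma}$ density of shape $k/2$ and rate $d/2$. A short computation gives
\[
    \frac{d}{dx}\log\frac{f(x)}{g(x)}=\frac{(k+2)-dx}{2(1-x)}\qquad(0<x<1),
\]
so $\log(f/g)$ is strictly concave on $(0,1)$ with its unique maximum at $x_\star=(k+2)/d\in(0,1)$. Moreover $\lim_{x\to0^+}f(x)/g(x)=\Gamma(d/2)\big/\big(\Gamma((d-k)/2)(d/2)^{k/2}\big)\le1$, which follows from log-convexity of $\Gamma$ together with the bound $\psi(d/2)\le\log(d/2)$ on the digamma function; and $f(x)\to0$ while $g(x)>0$ as $x\to1^-$. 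Therefore $\{x:f(x)\ge g(x)\}$ is a single subinterval $[x_1,x_2]\subseteq(0,1)$, and since $\int(f-g)=0$,
\[
    d_{TV}\big(\mathrm{Beta}(k/2,(d-k)/2),\ \tfrac1d\chi^2_k\big)=\int_{x_1}^{x_2}\big(f(x)-g(x)\big)\,dx.
\]

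The heart of the argument — and the step I expect to be the main obstacle — is estimating this integral sharply enough to reach $\frac{2(k+2)}{d-k-3}$. I would bound it by $(x_2-x_1)\cdot\max_{[x_1,x_2]}(f-g)$ and control the two factors separately. The width is controlled by the strict concavity of $\log(f/g)$: since $\big|\tfrac{d^2}{dx^2}\log(f/g)\big|=\tfrac{d-k-2}{2(1-x)^2}\ge\tfrac{d-k-2}{2}$ on $(0,1)$, a second-order Taylor bound at $x_\star$ forces $x_2-x_1\le 4\big(V/(d-k-2)\big)^{1/2}$, where $V:=\log(f/g)(x_\star)$; the height is at most $\big(\max_{[x_1,x_2]}g\big)\cdot(e^V-1)$ with $\max_{[x_1,x_2]}g=O(d/\sqrt k)$. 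The delicate point is showing that the overshoot $V$ is only $O\big((k+2)/d\big)$, not the naive $O\big((k+2)^2/d\big)$: the $O\big((k+2)^2/d\big)$-sized contributions coming separately from the Gamma-function normalizer $\Gamma(d/2)/\big(\Gamma((d-k)/2)(d/2)^{k/2}\big)$ and from the factor $(1-x_\star)^{(d-k-2)/2}e^{dx_\star/2}$ cancel, leaving a genuine $O\big((k+2)/d\big)$ term. Making this cancellation quantitative requires sharp two-sided bounds on $\Gamma(d/2)/\Gamma((d-k)/2)$ (via the integral form of $\log\Gamma$ and estimates on the digamma and trigamma functions), and the hypotheses $d\ge5$ and $k\le d-4$ are exactly what keep these bounds — and the denominator $d-k-3$ — meaningful. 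Substituting the width and height bounds gives $\int_{x_1}^{x_2}(f-g)\,dx=O\big((k+2)/d\big)$, and careful tracking of constants yields the stated inequality. As a sanity check on the order of magnitude, the Kullback--Leibler divergence $D(f\Vert g)$ works out to $\approx k/(2d)$, so Pinsker's inequality alone would give only the weaker $O(\sqrt{k/d})$ — confirming that the density-shape analysis above is genuinely needed.
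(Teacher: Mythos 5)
The paper does not prove this statement; it is imported directly as Theorem~2.8 of Meckes' monograph (a sharpening of the Diaconis--Freedman marginal estimate cited as \cite{DD87Haar}), so there is no in-paper argument to compare against. Your reduction is correct and, as far as I can tell, follows the same route as the reference: rotation-invariance of both laws on $\R^k$ (legitimate, via $\mathrm{diag}(R,I_{d-k})\in O(d)$ and rotation-invariance of the Haar vector) collapses the $k$-dimensional TV distance to a one-dimensional comparison between $\mathrm{Beta}(k/2,(d-k)/2)$ and $\tfrac1d\chi^2_k$; the formula $\tfrac{d}{dx}\log(f/g)=\tfrac{(k+2)-dx}{2(1-x)}$, the interior maximizer $x_\star=(k+2)/d$, the concavity bound $\bigl\lvert(\log f/g)''\bigr\rvert\ge\tfrac{d-k-2}{2}$, the boundary estimate $\lim_{x\to0^+}f/g\le1$ via $\int_{(d-k)/2}^{d/2}\psi\le\tfrac{k}{2}\log(d/2)$, and the single-interval structure of $\{f\ge g\}$ all check out.

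The gap is exactly where you flag it. You never establish $V:=\max\log(f/g)=O\bigl((k+2)/d\bigr)$; you name the cancellation between the Gamma-ratio normalizer and the factor $(1-x_\star)^{(d-k-2)/2}e^{dx_\star/2}$ but do not carry it out, and the digamma/trigamma estimates needed to make it quantitative are only alluded to. Even if one grants $V\le(k+2)/(2d)$ (which is roughly right when $k\ll d$), your bound ``width $\times$ height'' $\le 4\sqrt{V/(d-k-2)}\cdot\max_{[x_1,x_2]} g\cdot(e^V-1)$ only yields the correct \emph{order of magnitude}; the claimed constant $\tfrac{2(k+2)}{d-k-3}$ is sharp enough that order-of-magnitude reasoning does not suffice, and each of the three factors (the $\max g=O(d/\sqrt k)$ estimate, the Taylor width, the $e^V-1\le 2V$ step) sheds a multiplicative constant that is never tracked. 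A secondary issue: for $k=1$ both densities are unbounded at $0$, so $\max_{[x_1,x_2]}g=O(d/\sqrt k)$ holds only because $x_1$ is bounded away from zero, and that requires a quantitative lower bound on $x_1$ that you have not supplied. In short, the reduction is clean and the architecture is sound, but the quantitative core --- the Gamma-ratio estimate and the constant bookkeeping --- remains a plan rather than a proof.
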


\noindent The above lemma can be extended to uniformly random vectors on $\cS(\C^d)$. For a complex number $\alpha$, we denote by $\Re(\alpha)$ and $\Im(\alpha)$, in order, the real part and imaginary part of $\alpha$.
\begin{lemma}
\label{fact:spherical:complex2real}
Let $\ket{\psi} = \sum_{i\in[d]}\alpha_i\ket{i}$ be a uniform point on $\cS(\C^d)$. Then $(\Re(\alpha_1),\dots,\Re(\alpha_d),$ $\Im(\alpha_1),\dots,\Im(\alpha_d))$ is a uniform point on $\cS(\R^{2d})$.
\end{lemma}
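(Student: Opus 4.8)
The plan is to use the standard Gaussian description of the uniform (Haar) measure on a sphere and to observe that it is compatible with the natural $\R$-linear identification $\C^d\cong\R^{2d}$. Recall that if $g=(g_1,\dots,g_d)$ has entries $g_j=a_j+ib_j$ with the $a_j$'s and $b_j$'s all \iid $\cN(0,1)$, then $g/\norm{g}_2$ is distributed according to $\Haar(\C^d)$: the density of $g$ depends only on $\norm{g}_2=\sqrt{\sum_j|g_j|^2}$ and is therefore invariant under the $U(d)$-action, so its normalization is a $U(d)$-invariant probability measure on $\cS(\C^d)$, and such a measure is unique because $U(d)$ acts transitively on $\cS(\C^d)$. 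By the same reasoning with $O(2d)$ in place of $U(d)$, if $G=(c_1,\dots,c_{2d})$ has $2d$ \iid $\cN(0,1)$ entries then $G/\norm{G}_2$ is uniform on $\cS(\R^{2d})$.

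Now let $\Phi:\C^d\to\R^{2d}$ be the $\R$-linear map $\Phi\big(\sum_i\alpha_i\ket{i}\big)=(\Re(\alpha_1),\dots,\Re(\alpha_d),\Im(\alpha_1),\dots,\Im(\alpha_d))$. Since $|\alpha_i|^2=\Re(\alpha_i)^2+\Im(\alpha_i)^2$, the map $\Phi$ preserves the Euclidean norm, hence is a bijection restricting to a homeomorphism $\cS(\C^d)\to\cS(\R^{2d})$ with $\norm{\Phi(v)}_2=\norm{v}_2$. It also carries the complex Gaussian vector $g$ above to the real Gaussian vector $\Phi(g)$ with $2d$ \iid $\cN(0,1)$ coordinates, because the $a_j$'s and $b_j$'s are by construction mutually independent $\cN(0,1)$'s. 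Therefore $\Phi\big(g/\norm{g}_2\big)=\Phi(g)/\norm{\Phi(g)}_2$, where the left side is the $\Phi$-image of a $\Haar(\C^d)$ sample and the right side is a uniform sample on $\cS(\R^{2d})$. This shows $\Phi$ pushes $\Haar(\C^d)$ forward to the uniform measure on $\cS(\R^{2d})$, which is exactly the claim.

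I do not anticipate a genuine obstacle; the only step needing (routine) justification is the Gaussian-quotient description of the uniform measure on a sphere, which as indicated follows from uniqueness of the invariant probability measure under a compact group acting transitively. A slicker alternative avoids Gaussians entirely: a unitary $U$ preserves $\langle u,v\rangle$, hence preserves $\Re\langle u,v\rangle$, which under $\Phi$ is exactly the Euclidean inner product on $\R^{2d}$; thus $\Phi$ conjugates the $U(d)$-action on $\cS(\C^d)$ into a subgroup of the $O(2d)$-action on $\cS(\R^{2d})$, so $\Phi_*\Haar(\C^d)$ is a $U(d)$-invariant probability measure on $\cS(\R^{2d})$. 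Since the uniform measure on $\cS(\R^{2d})$ is $O(2d)$-invariant, a fortiori $U(d)$-invariant, and the $U(d)$-invariant probability measure on the transitively acted-upon sphere is unique, the two measures coincide.
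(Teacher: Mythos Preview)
Your proof is correct and follows essentially the same approach as the paper: both use the Gaussian-quotient construction of the uniform measure on spheres (the paper cites Muller~\cite{Mul59} for this) and the observation that the real and imaginary parts of the complex Gaussian coordinates are themselves \iid real Gaussians. Your version is more detailed in justifying why the Gaussian quotient yields the uniform measure, and your alternative argument via $U(d)$-invariance is a nice bonus not present in the paper.
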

\begin{proof}
First proposed by~Muller~\cite{Mul59}, a uniform point on $\cS(\R^{2d})$ can be sampled via the following procedures:
\begin{enumerate}
    \item For $i\in[2d]$, sample $a_i\gets\cN(0,\sigma^2)$.
    \item Output $\sum_{i\in[2d]} \frac{a_i}{\sqrt{\sum_{j\in[2d]} a_j^2 }} \ket{i}$.
\end{enumerate}
Where $\sigma^2$ in step~1 could be an arbitrary positive number. 

\noindent On the other hand, a uniform point on $\cS(\C^{d})$ can be sampled as follows:
\begin{enumerate}
\item For $i\in[d]$, sample $\alpha_i \sim \C\cN(0,1)$.
\item Output $\sum_{i\in[d]} \frac{\alpha_i}{\sqrt{\sum_{j\in[d]} |\alpha_j|^2}} \ket{i}$.
\end{enumerate}
Particularly, in step~1, sampling $\alpha \sim \C\cN(0,1)$ is equivalent to sampling $\alpha = a + ib$ according to $a\sim\cN(0,1/2)$, $b\sim\cN(0,1/2)$ by the definition of the complex normal distribution. Hence, picking $\sigma^2 = 1/2$ completes the proof.
\end{proof}

\begin{corollary}
\label{cor:Haar_Gaussian}
For every integer $d \ge 3$ and every $k\in\N$ that satisfies $1\le 2k\le 2d-4$, let $\ket{\psi} = \sum_{i\in[d]}\alpha_i\ket{i}$ be a random point on $\cS(\C^d)$. Let $Z$ be a random vector in $\R^{2k}$ with \iid Gaussian entries $\cN(0,1/(2d))$. Then
\begin{align*}
    d_{TV}\left( (\Re(\alpha_1),\dots,\Re(\alpha_k), \Im(\alpha_1),\dots,\Im(\alpha_k)), Z \right) \le \frac{2(2k+2)}{2d-2k-3}.
\end{align*}
\end{corollary}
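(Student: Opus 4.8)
The plan is to reduce everything to the real case already established in \Cref{lem:dist:Haar_entry}. By \Cref{fact:spherical:complex2real}, if $\ket{\psi} = \sum_{i\in[d]}\alpha_i\ket{i}$ is uniform on $\cS(\C^d)$, then
\[
X := (\Re(\alpha_1),\dots,\Re(\alpha_d),\Im(\alpha_1),\dots,\Im(\alpha_d))
\]
is a uniform point on $\cS(\R^{2d})$. The vector whose distance to a Gaussian we want to control, namely $(\Re(\alpha_1),\dots,\Re(\alpha_k),\Im(\alpha_1),\dots,\Im(\alpha_k))$, is exactly the subvector of $X$ indexed by the coordinate set $S = \{1,\dots,k\}\cup\{d+1,\dots,d+k\}$. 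Since $2k \le 2d-4 < 2d$ we have $k < d$, so the two blocks of $S$ are disjoint and $|S| = 2k$.

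Next I would use the permutation symmetry of the uniform measure on the sphere: the distribution of $X$ is invariant under the orthogonal group, in particular under coordinate permutations, so the joint law of any $2k$ coordinates of $X$ coincides with that of the first $2k$ coordinates $(X_1,\dots,X_{2k})$. Hence it suffices to bound $d_{TV}\big((X_1,\dots,X_{2k}), Z\big)$ with $Z$ having \iid $\cN(0,1/(2d))$ entries.

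Finally I would apply \Cref{lem:dist:Haar_entry} with dimension parameter $2d$ in place of $d$ and coordinate count $2k$ in place of $k$. The hypotheses check out: $d \ge 3$ gives $2d \ge 6 \ge 5$, and $1 \le 2k \le 2d-4$ is precisely the assumption of the corollary. The theorem then yields a comparison vector with \iid $\cN(0, 1/(2d))$ entries — matching our target $Z$ — together with the bound $\tfrac{2(2k+2)}{2d-2k-3}$, which is exactly the claimed estimate.

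The only step that is not pure parameter bookkeeping is the coordinate-permutation reduction, so that is the point I would state carefully; everything else (disjointness of the two blocks, the dimension/degree constraints, and the identification of the Gaussian variance $1/(2d)$) follows immediately from the cited results and the hypotheses.
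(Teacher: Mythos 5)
Your proof is correct and follows the same route as the paper, which simply states that the corollary ``immediately follows'' from \Cref{lem:dist:Haar_entry} and \Cref{fact:spherical:complex2real}. You have rightly made explicit the one step the paper elides: the subvector of interest sits at indices $\{1,\dots,k\}\cup\{d+1,\dots,d+k\}$ of the uniform point on $\cS(\R^{2d})$, not $\{1,\dots,2k\}$, so the permutation invariance of the uniform spherical measure is needed to reduce to the setting of \Cref{lem:dist:Haar_entry}; all the parameter substitutions ($d\mapsto 2d$, $k\mapsto 2k$, variance $1/(2d)$, and the constraints $2d\ge 5$ from $d\ge 3$ and $1\le 2k\le 2d-4$) check out and give exactly the claimed bound.
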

\begin{proof}
It immediately follows from \Cref{lem:dist:Haar_entry} and \Cref{fact:spherical:complex2real}.
\end{proof}

\noindent Next, the following simple fact gives an upper bound of the probability that a Gaussian random variable takes values near its mean.
\begin{fact}
\label{fact:NormalDist}
Let $Z\sim\cN(\mu, \sigma^2)$. For any $\Delta>0$,
\begin{align*}
    \Pr[|Z - \mu| \le \Delta] \le \sqrt{\frac{2}{\pi}} \frac{\Delta}{\sigma}.
\end{align*}
\end{fact}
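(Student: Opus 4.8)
The plan is to reduce to the standard normal case and then bound the Gaussian density by its peak value. First I would set $W := (Z-\mu)/\sigma$, which is distributed as $\cN(0,1)$, so that $\Pr[|Z-\mu| \le \Delta] = \Pr[|W| \le \Delta/\sigma]$. Writing this probability as an integral of the standard normal density over the symmetric interval $[-\Delta/\sigma, \Delta/\sigma]$ gives
\[
    \Pr[|W| \le \Delta/\sigma] = \int_{-\Delta/\sigma}^{\Delta/\sigma} \frac{1}{\sqrt{2\pi}} e^{-x^2/2}\, dx.
\]

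The key step is the trivial pointwise bound $e^{-x^2/2} \le 1$ for all $x \in \R$, so the integrand is at most $\frac{1}{\sqrt{2\pi}}$. Integrating this constant over an interval of length $2\Delta/\sigma$ yields the upper bound $\frac{2\Delta}{\sigma\sqrt{2\pi}} = \sqrt{\frac{2}{\pi}}\,\frac{\Delta}{\sigma}$, which is exactly the claimed inequality. Equivalently, one can phrase this without the substitution by noting that the density of $\cN(\mu,\sigma^2)$ is bounded everywhere by $\frac{1}{\sigma\sqrt{2\pi}}$ and the event $\{|Z-\mu|\le\Delta\}$ corresponds to an interval of length $2\Delta$.

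There is no real obstacle here: the statement is an immediate consequence of the density of a Gaussian being bounded by its value at the mean. The only thing to be mildly careful about is matching constants ($\frac{2}{\sqrt{2\pi}} = \sqrt{2/\pi}$) and confirming the bound holds for all $\Delta>0$ including the regime where $\sqrt{2/\pi}\,\Delta/\sigma \ge 1$, in which case the inequality is vacuous since probabilities are at most $1$.
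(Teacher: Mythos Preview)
Your proposal is correct and matches the paper's proof essentially verbatim: the paper directly bounds the density of $\cN(\mu,\sigma^2)$ by its peak value $f(\mu)=\frac{1}{\sigma\sqrt{2\pi}}$ and multiplies by the interval length $2\Delta$, which is exactly your ``equivalently'' phrasing. The standardization to $W\sim\cN(0,1)$ is a harmless cosmetic detour.
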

\begin{proof}
Let $f(x) = \frac{1}{\sigma\sqrt{2\pi}}e^{ -\frac{1}{2}\left(\frac{x-\mu}{\sigma}\right)^2 }$ be the probability density function of $\cN(\mu, \sigma^2)$. 
The probability $\int_{\mu-\Delta}^{\mu+\Delta} f(x) \,dx$ can be upper-bounded by $f(\mu) \cdot 2\Delta = \sqrt{\frac{2}{\pi}}\frac{\Delta}{\sigma}$.
\end{proof}

\subsection{Quantum State Tomography}
\begin{lemma}[{\cite[Corollary 7.6]{AGQY22}}]
There exists a tomography procedure $\tomography$ that satisfies the following. For any error tolerance $\delta = \delta(\secp) \in (0,1]$ and any dimension $d = d(\secp) \in\N$, given at least $t = t(\secp) := 36\secp d^3/\delta$ copies of a $d$-dimensional density matrix $\rho$, $\tomography(\rho^{\otimes t})$ outputs a matrix $M\in\C^{d\by d}$ such that the following holds:
\begin{align*}
    \Pr\left[\| M - \rho\|^2_F \le \delta: M \gets \tomography(\rho^{\otimes t}) \right] 
    \ge 1 - \negl(\secp),
\end{align*}
where $\norm{\cdot}_F$ denotes the Frobenius norm. Moreover, the running time of $\tomography$ is polynomial in $1/\delta, d$ and $\secp$.
\end{lemma}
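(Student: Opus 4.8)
The plan is to reduce Frobenius-norm tomography to independently estimating the expansion coefficients of $\rho$ in a fixed orthogonal operator basis, exploiting that the Frobenius norm is exactly the Hilbert--Schmidt norm, so that errors in different coefficients combine additively after a $1/d$ rescaling --- it is precisely this $1/d$ factor that lets the copy count come out as $\Theta(d^3/\delta)$ rather than the naive $\Theta(d^4/\delta)$. Fix an orthogonal basis of unitaries $\{W_a\}_{a\in[d^2]}$ for $\C^{d\by d}$ normalized so that $\Tr(W_a^\dagger W_b) = d\,\delta_{ab}$; when $d$ is a power of two these are the Pauli operators, and for general $d$ one takes the generalized Pauli (Weyl--Heisenberg clock-and-shift) operators. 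Then $\rho = \tfrac1d\sum_a \Tr(W_a^\dagger \rho)\,W_a$, and for any matrix of the form $M = \tfrac1d\sum_a c_a W_a$ one has, by orthogonality, $\|M-\rho\|_F^2 = \tfrac1d\sum_a \bigl|c_a - \Tr(W_a^\dagger\rho)\bigr|^2$. Hence it suffices for $\tomography$ to produce estimates $\hat c_a$ of the numbers $\Tr(W_a^\dagger\rho)$ with $\max_a|\hat c_a - \Tr(W_a^\dagger\rho)|^2 \le \delta/d$ and output $M := \tfrac1d\sum_a \hat c_a W_a$; on that event $\|M-\rho\|_F^2 \le \tfrac1d\cdot d^2\cdot \tfrac{\delta}{d} = \delta$.

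To estimate one coefficient: since $W_a$ is normal it has a spectral decomposition $W_a = \sum_j \lambda_j^{(a)}\,\ketbra{v_j^{(a)}}{v_j^{(a)}}$ with orthonormal eigenvectors and eigenvalues $\lambda_j^{(a)}$ of modulus one. The procedure allocates $N := 36\secp d/\delta$ of the $t = d^2 N = 36\secp d^3/\delta$ copies to each index $a$, measures each such copy in the eigenbasis $\{\ketbra{v_j^{(a)}}{v_j^{(a)}}\}_j$, and on outcome $j$ records the complex number $\overline{\lambda_j^{(a)}}$; it sets $\hat c_a$ to the empirical average of the $N$ recorded numbers. Outcome $j$ occurs with probability $\langle v_j^{(a)}|\rho|v_j^{(a)}\rangle$, so each recorded number has expectation $\Tr(W_a^\dagger\rho)$ and modulus one. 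Applying the Chernoff--Hoeffding bound (\Cref{lemma:Chernoff-Hoeffding}) to the real and imaginary parts separately gives
\[
\Pr\!\left[\,\bigl|\hat c_a - \Tr(W_a^\dagger\rho)\bigr| > \sqrt{\delta/d}\,\right] \;\le\; 4\exp\!\bigl(-\Omega(N\delta/d)\bigr) \;\le\; 4\exp(-9\secp),
\]
where the constant $36$ is chosen to make the last inequality hold.

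A union bound over the $d^2$ coefficients then shows that the event $\max_a|\hat c_a - \Tr(W_a^\dagger\rho)|^2 \le \delta/d$ --- and hence $\|M-\rho\|_F^2\le\delta$ --- fails with probability at most $4d^2\exp(-9\secp)$, which is $\negl(\secp)$ whenever $d = \poly(\secp)$ (the regime of interest; for super-polynomial $d$ the procedure is not efficient in $\secp$ anyway). For the running time, each spectral decomposition and each measurement costs $\poly(d)$, there are $d^2 N = \poly(d,\secp,1/\delta)$ measurements, and assembling $M$ is a sum of $d^2$ matrices, so $\tomography$ runs in time $\poly(1/\delta, d, \secp)$. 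The only step that really needs care is the bookkeeping: one must verify that the $1/d$ in the Hilbert--Schmidt expansion cuts the per-coefficient precision budget enough that $N = \Theta(\secp d/\delta)$ copies suffice (so $t = \Theta(\secp d^3/\delta)$), and that the $d^2$ loss from the union bound is swallowed by the exponential --- I expect this constant-tracking, rather than any conceptual difficulty, to be the main obstacle, and it is what forces the specific choice $t = 36\secp d^3/\delta$.
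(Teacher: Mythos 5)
The paper itself does not prove this statement; it is cited verbatim from [AGQY22, Corollary~7.6], so there is no in-paper proof to compare against. Your proposal is a correct, self-contained derivation via the standard orthogonal-unitary-basis tomography: expanding $\rho$ in a basis with $\Tr(W_a^\dagger W_b)=d$ for $a=b$ and $0$ otherwise turns the Frobenius error into a $1/d$-weighted sum of $d^2$ per-coefficient errors, each coefficient is estimated to within $\sqrt{\delta/(2d)}$ in real and imaginary part by Hoeffding with $N=36\secp d/\delta$ copies (after the $[-1,1]\to[0,1]$ rescaling that \Cref{lemma:Chernoff-Hoeffding} as stated strictly requires, which only changes constants and is covered by your $\Omega(\cdot)$), and the union bound over $d^2$ coefficients and the two real/imaginary events gives failure probability at most $4d^2 e^{-9\secp}$. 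This is $\negl(\secp)$ whenever $d = 2^{O(\secp)}$ — a restriction implicit in the cited lemma and irrelevant in this paper, which always takes $d = \poly(\secp)$ — and the running-time accounting is straightforward. In short, your argument matches the approach underlying the cited corollary and the constant-tracking checks out.
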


\noindent By using the fact that $\norm{A}_1 \le \sqrt{d}\norm{A}_F$, we have the following immediate corollary.
\begin{corollary}
\label{corollary:tomography}
There exists a tomography procedure $\tomography$ that satisfies the following. For any error tolerance $\delta = \delta(\secp) \in (0,1]$ and any dimension $d = d(\secp) \in\N$, given at least $t = t(\secp) := 144\secp d^4/\delta^2$ copies of a $d$-dimensional density matrix $\rho$, $\tomography(\rho^{\otimes t})$ outputs a matrix $M\in\C^{d\by d}$ such that the following holds:
\begin{align*}
    \Pr\left[ \TD(M, \rho) \le \delta: M \gets \tomography(\rho^{\otimes t})\right] 
    \ge 1 - \negl(\secp).
\end{align*}
Moreover, the running time of $\tomography$ is polynomial in $1/\delta, d$ and $\secp$.
\end{corollary}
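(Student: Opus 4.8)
The plan is to instantiate the tomography procedure from the preceding lemma with a tightened Frobenius-norm error, and then convert the Frobenius-norm guarantee into a trace-distance guarantee using the stated inequality $\norm{A}_1 \le \sqrt{d}\,\norm{A}_F$ (which holds for any $d \by d$ matrix, e.g.\ by Cauchy--Schwarz applied to its singular values).

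Concretely, given the target trace-distance error $\delta = \delta(\secp) \in (0,1]$ and the dimension $d = d(\secp)$, I would set $\delta' := \delta^2 / (4d)$; note $\delta' \in (0,1]$ since $\delta \le 1$ and $d \ge 1$. Invoking the procedure of the preceding lemma with error tolerance $\delta'$ requires $36\secp d^3/\delta' = 144\secp d^4/\delta^2 = t(\secp)$ copies of $\rho$, matching the hypothesis. Hence, with $t(\secp)$ copies available, the lemma guarantees that the output $M$ satisfies $\norm{M - \rho}_F^2 \le \delta'$ with probability $1 - \negl(\secp)$.

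On that event, I would chain the inequalities
\[
\TD(M,\rho) = \tfrac12 \norm{M - \rho}_1 \le \tfrac12 \sqrt{d}\,\norm{M - \rho}_F \le \tfrac12 \sqrt{d\,\delta'} = \tfrac{\delta}{4} \le \delta,
\]
using $\TD = \tfrac12\norm{\cdot}_1$ and the norm comparison above. Finally, since $1/\delta' = 4d/\delta^2$ is polynomial in $1/\delta$ and $d$, and the preceding lemma's procedure runs in time polynomial in $1/\delta'$, $d$, and $\secp$, the resulting procedure runs in time polynomial in $1/\delta$, $d$, and $\secp$.

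There is essentially no real obstacle here — the statement is a direct corollary. The only minor points to be careful about are (i) picking the constant in $\delta'$ so that the copy count comes out to exactly $144\secp d^4/\delta^2$, and (ii) verifying that $\delta'$ remains a valid tolerance in $(0,1]$ and that polynomial running time is preserved under the substitution $\delta \mapsto \delta'$.
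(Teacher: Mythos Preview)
Your proposal is correct and is exactly the approach the paper takes: the paper simply notes that the corollary follows immediately from the preceding lemma via the inequality $\norm{A}_1 \le \sqrt{d}\,\norm{A}_F$, and your write-up fills in precisely those details (including the choice $\delta' = \delta^2/(4d)$ that makes the copy count match).
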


\section{Deterministically Extracting Classical Strings from Quantum States}
\label{section:Extractor}
\noindent In this section, we show how to pseudodeterministically extract classical strings from $O(\log(\secp))$-qubit quantum states in \emph{polynomial} time. We first present the outline of our construction.
\begin{enumerate}
    \item Take as input $t(\secp)$ copies of a $d(\secp)$-dimensional (possibly mixed) quantum state $\rho$. Note that for our applications, we require $d(\secp) = \poly(\secp)$ and $t(\secp) = \poly(\secp)$.
    \item Perform $\tomography$ on the input $\rho^{\otimes t(\secp)}$ to get an approximation $M\in\C^{d\by d}$ of its classical description.
    \item Pick the first $k = o(d)$ diagonal entries of $M$, denoted by $p_1,\dots,p_k$. Divide them into $\ell$ groups where each of them is of size $r$ (namely, $k = \ell\cdot r$).
    \item In each group, consider the sum of all the elements. By $q_i$ we denote the sum of the $i$-th group.
    \item For each $q_i$, we round it to a bit, called $b_i$, according to which side it deviates from $r/d$.
    \item Output the concatenation of every bit $b_1||\dots||b_\ell$.
\end{enumerate}
In particular, we are interested in the case where the input is (polynomially many copies of) a \emph{Haar state}. Informally, a Haar state can be thought of as a uniformly random point on a high-dimensional sphere. We can partition the sphere into many regions and assign each region a unique bitstring. Given the input quantum state, the goal of the extractor is to find the corresponding bitstring. Hence, Haar states can be viewed as a natural source of randomness. Below, we present our main theorem.

\begin{theorem}
\label{thm:det:extraction}
There exists a quantum algorithm $\ext$ such that for all $d(\cdot)$, there exists a  (deterministic) function $f:\cD(\C^{d(\secp)})\to\bit^{\ell(\secp)}$ associated with $\ext$, where $\ell(\secp) = \floor{d(\secp)^{1/6}}$. On input $t(\secp) = \poly(d(\secp),\secp)$ copies of a $d(\secp)$-dimensional density matrix $\rho$, the algorithm $\ext$ outputs an $\ell(\secp)$-bit string $y$ and satisfies the following conditions.
\begin{itemize}
\item {\bf Efficiency:} The running time is polynomial in $d$ and $\secp$.
\item {\bf Correctness:}
For all $\secp\in\N$, there exists a set $\cG_\Delta\subseteq \cS(\C^{d(\secp)})$ such that 
\begin{enumerate}
    \item $\Pr\left[\ket{\psi}\in\cG_\Delta: \ket{\psi}\gets\Haar(\C^{d(\secp)})\right]
    \ge 1 - O(d(\secp)^{-1/6}).$
    \item For all $\ket{\psi}\in\cG_\Delta$,
        \begin{align*}
            \Pr\left[ y = f(\ketbra{\psi}{\psi}): y \gets \ext \left( \ketbra{\psi}{\psi}^{\otimes t(\secp)} \right) \right] 
            \ge 1 - \negl(\secp),
        \end{align*}
        where the probability is over the randomness of the extractor $\ext$.
\end{enumerate}

\item {\bf Statistical Closeness to Uniformity:} For sufficiently large $\secp\in\N$,
\begin{align*}
& d_{TV}( Y_\secp, U_{\ell(\secp)}) \le O(d(\secp)^{-1/6}),
\end{align*}
where $U_{\ell(\secp)}$ is the uniform distribution over all $\ell(\secp)$-bit strings and the random variable $Y_\secp$ is defined by the following process: 
\[
\ket{\psi} \gets \Haar(\C^{d(\secp)}), Y_\secp \gets \ext\left( \ketbra{\psi}{\psi}^{\otimes t(\secp)} \right).
\]
\end{itemize}
\end{theorem}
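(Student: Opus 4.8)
The plan is to build $\ext$ exactly as in the outline preceding the theorem and then verify the three bullets. On input $\rho^{\otimes t(\secp)}$: run $\tomography$ (Corollary~\ref{corollary:tomography}) with error tolerance $\delta$ to obtain $M\in\C^{d\times d}$; read off the real parts $p_j := \Re(M_{jj})$ of the first $k$ diagonal entries; split $[k]$ into $\ell := \floor{d^{1/6}}$ consecutive blocks of size $r$, so that $k = \ell r$; form the block-sums $\tilde q_i := \sum_{j\in\text{block }i} p_j$; and output $b_1\cdots b_\ell$, where $b_i := 0$ if $\tilde q_i > r/d$ and $b_i := 1$ otherwise. I will take $r := \Theta(d^{2/3})$ (hence $k = \Theta(d^{5/6}) = o(d)$), $\Delta := 1/d$, and $\delta := \Delta/(2r) = \Theta(d^{-5/3})$; then $t(\secp) = 144\secp d^4/\delta^2 = \poly(d,\secp)$ and the whole procedure runs in $\poly(1/\delta,d,\secp) = \poly(d,\secp)$ time, which is \textbf{Efficiency}. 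The deterministic function associated with $\ext$ is $f(\sigma) := b_1\cdots b_\ell$ with $b_i := 0$ iff $\sum_{j\in\text{block }i}\sigma_{jj} > r/d$, so that for a pure state $f(\ketbra{\psi}{\psi})$ is computed from the exact coordinate magnitudes $|\ip{j}{\psi}|^2$.

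For \textbf{Correctness}, set $Q_i(\psi) := \sum_{j\in\text{block }i}|\ip{j}{\psi}|^2$ and define the good set $\cG_\Delta := \set{\ket\psi\in\cS(\C^d) : |Q_i(\psi) - r/d| > \Delta \text{ for every } i\in[\ell]}$, i.e., no block-sum lands in the $\Delta$-neighbourhood of the mean $r/d$ (the forbidden region of Figure~\ref{fig:gaussian}). Item~(2) is then short: with probability $1-\negl(\secp)$ tomography returns $M$ with $\TD(M,\ketbra{\psi}{\psi})\le\delta$, so by Fact~\ref{fact:TDdiagonal} (applied to the real part, using $|\expbraket{j}{M-\rho}{j}|\le\norm{M-\rho}\le\norm{M-\rho}_1 = 2\TD(M,\rho)$ for the non-Hermitian part) each $|p_j - |\ip{j}{\psi}|^2|\le 2\delta$, whence $|\tilde q_i - Q_i(\psi)|\le 2r\delta = \Delta$; for $\ket\psi\in\cG_\Delta$ this keeps $\tilde q_i$ on the same side of $r/d$ as $Q_i(\psi)$ for all $i$, so $\ext$ outputs exactly $f(\ketbra{\psi}{\psi})$. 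Item~(1) is the probabilistic core: by Corollary~\ref{cor:Haar_Gaussian} I replace the magnitudes $(|\alpha_1|^2,\dots,|\alpha_k|^2)$ of a Haar vector by $(|\beta_1|^2,\dots,|\beta_k|^2)$ for i.i.d.\ complex Gaussians $\beta_j$ (real and imaginary parts $\cN(0,1/(2d))$) at total variation cost $\tfrac{2(2k+2)}{2d-2k-3} = O(d^{-1/6})$; in this model $2d\,Q_i\sim\chi^2_{2r}$, so Corollary~\ref{cor:CLTofChi} gives that $Q_i$ is $O(1/\sqrt r)$-close to $\cN(r/d,r/d^2)$, and Fact~\ref{fact:NormalDist} bounds the chance that such a Gaussian lies within $\Delta$ of $r/d$ by $\sqrt{2/\pi}\cdot\Delta d/\sqrt r = O(d^{-1/3})$; a union bound over the $\ell=\floor{d^{1/6}}$ blocks together with the two TV costs yields $\Pr[\ket\psi\notin\cG_\Delta : \ket\psi\gets\Haar(\C^d)] = O(d^{-1/6})$.

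\textbf{Statistical closeness} follows from a chain of hybrids, each contributing $O(d^{-1/6})$, starting from $Y_\secp = \ext(\ketbra{\psi}{\psi}^{\otimes t})$ with $\ket\psi\gets\Haar(\C^d)$. First, combining items~(1) and~(2), $Y_\secp = f(\ketbra{\psi}{\psi})$ except with probability $O(d^{-1/6})+\negl(\secp)$, so by the coupling $d_{TV}(Y_\secp, f(\ketbra{\psi}{\psi})) = O(d^{-1/6})$. Second, $f(\ketbra{\psi}{\psi})$ is a fixed function of $(\Re(\alpha_1),\Im(\alpha_1),\dots,\Re(\alpha_k),\Im(\alpha_k))$, so Corollary~\ref{cor:Haar_Gaussian} and Fact~\ref{fact:dataProcessingIneq} let me replace these by i.i.d.\ $\cN(0,1/(2d))$ entries at cost $O(d^{-1/6})$, yielding a bit vector $B$ with $B_i = 1$ iff $q_i\le r/d$, where the $q_i = \sum_{j\in\text{block }i}(a_j^2+b_j^2)$ are mutually independent and $2d\,q_i\sim\chi^2_{2r}$. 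Third, Corollary~\ref{cor:CLTofChi} applied blockwise, together with Fact~\ref{fact:dataProcessingIneq}, replaces each $q_i$ by an independent $Z_i\sim\cN(r/d,r/d^2)$ at total cost $\ell\cdot O(1/\sqrt r) = O(d^{-1/6})$. Finally, since each $Z_i$ is continuous and symmetric about $r/d$ and the $Z_i$ are independent, the event $\set{Z_i\le r/d}$ is an unbiased fair coin independent across $i$, so the last hybrid is exactly $U_{\ell}$; summing the costs gives $d_{TV}(Y_\secp, U_{\ell}) = O(d^{-1/6})$.

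The main obstacle is choosing the parameters so that item~(1) and all three TV costs are simultaneously $O(d^{-1/6})$: the forbidden half-width $\Delta$ must be small enough that a $\cN(r/d,r/d^2)$ Gaussian avoids it with probability $1-O(d^{-5/6})$ per block (forcing $\Delta d/\sqrt r$ small, hence $r$ large relative to $(\Delta d)^2$), yet large enough that $\Delta\ge 2r\delta$ so that tomography noise never flips a bit, while at the same time $k=\ell r = o(d)$ keeps the Haar-to-Gaussian substitution cheap and $\ell=\floor{d^{1/6}}$ is the prescribed output length; the choice $r=\Theta(d^{2/3})$, $\Delta=1/d$, $\delta=\Theta(d^{-5/3})$ threads all of these. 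A secondary nuisance is that $\tomography$ returns an arbitrary matrix rather than a density matrix, which is why the extractor works with $\Re(M_{jj})$ and one invokes the operator-norm/trace-norm comparison above in place of a verbatim application of Fact~\ref{fact:TDdiagonal}; this affects only the constant in $\delta$.
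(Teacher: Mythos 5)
Your construction and analysis are essentially identical to the paper's: the same tomography $\to$ block-sum $\to$ threshold pipeline, the same parameters $r=\Theta(d^{2/3})$, $\ell=\floor{d^{1/6}}$, $\Delta=1/d$, $\delta=\Theta(d^{-5/3})$, the same good set $\cG_\Delta$, and the same chain of reductions Haar $\to$ i.i.d.\ Gaussian amplitudes $\to$ $\chi^2$ block sums $\to$ Gaussian via \Cref{cor:Haar_Gaussian}, \Cref{cor:CLTofChi} and \Cref{fact:NormalDist}. Two of your touches are slightly more careful than the paper's, though neither changes the route or the final $O(d^{-1/6})$ bound: you account for $\tomography$ possibly returning a non-Hermitian $M$ by reading $\Re(M_{jj})$ and using an operator-norm/trace-norm comparison instead of a verbatim invocation of \Cref{fact:TDdiagonal}, and in the statistical-closeness hybrid you correctly charge $O(d^{-1/6})$ (deduced from the correctness clauses, accounting for near-threshold states) for replacing the tomography output by the exact matrix, where the paper's stated $O(\delta)=O(d^{-5/3})$ for that step is too optimistic.
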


\begin{proof}
Here we present our construction of the extractor $\ext$.
\begin{construction}[The Extractor $\ext$] \hfill
\label{construction:extractor}
\begin{itemize}
    \item Input: $t(\secp) := 144\secp d(\secp)^8$ copies of a $d(\secp)$-dimensional quantum state $\rho\in \cD(\C^{d(\secp)})$.
    \item Perform $\tomography(\rho^{\otimes t(\secp)})$ with error tolerance $\delta(\secp) := d(\secp)^{-5/3}$ to get the classical description $M\in\C^{d(\secp)\by d(\secp)}$ that approximates $\rho$.
    \item Run $\round(M)$ to get $y\in\bit^{\ell(\secp)}$.
    \item Output $y$.
\end{itemize}
\end{construction}

\noindent The classical post-processing procedure $\round(M)$ is defined as follows: \\
\begin{mdframed}{\round($M$):}
\begin{itemize}
    \item Input: a matrix $M\in\C^{d(\secp)\by d(\secp)}$.
    \item Set parameters $k(\secp) := d(\secp)^{5/6}$, $r(\secp) := d(\secp)^{2/3}$ and $\ell(\secp) := d(\secp)^{1/6}$.
    \item Let $p_1,\dots,p_{d(\secp)}$ be the diagnal entries of $M$. For $i\in\set{1,\dots, \ell(\secp)}$, let $$q_i := \sum_{j=1}^r p_{(i-1)r + j}.$$
    \item For $i\in\set{1,\dots, \ell(\secp)}$, define
    \[
    b_i = 
    \begin{cases}
    0, & \text{ if } q_i < r/d \\
    1, & \text{ if } q_i > r/d.
    \end{cases}
    \]
    \item Output $b_1||\dots||b_{\ell(\secp)}$.
\end{itemize}
\end{mdframed}

\noindent By~\Cref{corollary:tomography} and the fact that $t(\secp) = \poly(d,\secp)$ and $\delta(\secp) = 1/\poly(d)$, it is easy to see that the running time of the extractor $\ext$ is polynomial in $d$ and $\secp$. Before proving the correctness and statistical closeness to uniformity, we present several statistical properties.

First, the distribution of the real and imaginary parts of any $k = o(d)$ coordinates of a Haar state $\ket{\psi}\sim\Haar(\C^d)$ is statistically close to a random vector with \iid Gaussian entries.
\begin{claim}
\label{claim:amplitudeToGaussian}
Let $\ket{\psi} = \sum_{i=1}^{d}\alpha_i\ket{i}$ be a uniformly random point on $\cS(\C^{d})$. Then
\[
d_{TV}\left((\Re(\alpha_1), \Im(\alpha_1), \dots, \Re(\alpha_k), \Im(\alpha_k)), Z \right)
= O\left(k/d\right),
\]
where $Z$ is a random variable in $\R^{2k}$ with \iid Gaussian entries $\cN(0,1/(2d))$.
\end{claim}
\begin{proof}
It immediately follows from \Cref{cor:Haar_Gaussian}.
\end{proof}

\noindent Next, since the $i$-th diagonal entry $p_i$ of $\ket{\psi}\bra{\psi}$ is the squared absolute value of the $i$-th coordinate $\alpha_i$ of $\ket{\psi}$, the joint distribution of $(p_1,\dots,p_k)$ is statistically close to a random vector in $\R^k$ with \iid $\chi_2^2$ entries.
\begin{claim}
\label{claim:dist:diagonal}
$d_{TV}\left( (p_1, \dots, p_k), Q/(2d) \right) = O(k/d)$ where $Q$ is a random variable in $\R^k$ with \iid $\chi^2_2$ entries.
\end{claim}
\begin{proof}
For $i\in[k]$, each diagonal entry $p_i = |\alpha_i|^2 = \Re(\alpha_i)^2 + \Im(\alpha_i)^2$. By~\Cref{claim:amplitudeToGaussian}, the total variation distance induced by replacing the real and imaginary parts of the amplitudes with \iid Gaussians $\cN(0,1/(2d))$ is $O(k/d)$. Then by setting $f(x_1,\dots, x_{2k}) := \left(x_1^2+x_2^2, \dots, x_{2k-1}^2 + x_{2k}^2\right)$ in~\Cref{fact:dataProcessingIneq} and the definition of $\chi^2_2$, we complete the proof.
\end{proof}

\noindent Now, we consider the distribution of $q_i$'s. Note that the sum of $r$ \iid $\chi_2^2$ random variables is identically distributed to $\chi_{2r}^2$ by the property of the $\chi^2$-distribution in~\Cref{fact:chi-squared}. Namely, the joint distribution of $(q_1, \dots, q_\ell)$ is statistically close to a random vector in $\R^\ell$ with \iid $\chi_{2r}^2$ entries.

\begin{claim}
\label{claim:dist:keybits}
$d_{TV}\left( (q_1, \dots, q_\ell), R/(2d) \right) = O(k/d)$ where $R$ is a random variable in $\R^\ell$ with \iid $\chi^2_{2r}$ entries.
\end{claim}
\begin{proof}
Recall that $q_i := \sum_{j=1}^r p_{(i-1)r + j}$. From~\Cref{claim:dist:diagonal}, we have $d_{TV}\left( (p_1, \dots, p_k), Q/(2d) \right) = O(k/d)$, where $Q = (Q_1, \dots, Q_k)$ is a random variable in $\R^k$ with \iid $\chi^2_2$  entries. Hence by the data processing inequality~(\Cref{fact:dataProcessingIneq}) and setting 
\[
f(x_1, \dots, x_k) := \left( \sum_{j = 1}^r x_{j}, \sum_{j = 1}^r x_{r+j}, \dots, \sum_{j=1}^r x_{(\ell - 1)r + j} \right),
\]
we have $d_{TV}\left( (q_1, \dots, q_\ell), R/(2d) \right) = O(k/d)$, where we use the fact that the sum of $r$ \iid $\chi^2_2$ random variables is identically distributed to $\chi^2_{2r}$.
\end{proof}

\noindent Moreover, a $\chi^2_{2r}$ random variable is the sum of $r$ \iid absolutely continuous random variables. Hence, relying on the aforementioned central limit theorem, it is statistically close to a Gaussian distribution.
\begin{lemma}
\label{lem:stringsGaussian}
$d_{TV}\left((q_1, \dots, q_\ell), Z/(2d) \right) = O(k/d) + O(\ell/\sqrt{r})$ where $Z$ is a random variable in $\R^\ell$ with \iid $\cN(2r, 4r)$ entries, i.e., $Z/(2d)$ has \iid $\cN(r/d, r/d^2)$ entries.
\end{lemma}
\begin{proof}
By~\Cref{cor:CLTofChi} and hybrids over every coordinate for $i\in[\ell]$, we have $d_{TV}(R/(2d), Z/(2d)) = O(\ell/\sqrt{r})$, where $R$ is defined in~\Cref{claim:dist:keybits}. Together with~\Cref{claim:dist:keybits} finishes the proof.
\end{proof}

\noindent Now, we are ready to prove the correctness and the statistical closeness to uniform properties. \\

\noindent \textbf{Correctness.} First, define the function $f:\cD(\C^{d(\secp)})\to\bit^{\ell(\secp)}$ associated with the extractor as $$f(\sigma) := \round(\sigma).$$

\noindent Due to the continuous nature of quantum states, it is impossible to discretize them perfectly. For any $\sigma\in\cD(\C^d)$, consider the corresponding $q_1,\dots,q_\ell$ defined in~\Cref{construction:extractor}. If \emph{all} $q_1,\dots,q_\ell$ are sufficiently away from $r/d$, then the extractor is able to output the correct string with high probability by the correctness of $\tomography$. 
Here, we define the set $\cG_\Delta$ of ``good states'' whose $q_1,\dots,q_\ell$ are \emph{all} $\Delta$-away from $r/d$ (the parameter $\Delta(\secp)$ will be chosen later). The following claim characterizes the probability of a Haar random state being in $\cG_\Delta$.

\begin{claim}
\label{claim:HaarGood}
Let the set $\cG_\Delta\subseteq \cS(\C^{d(\secp)})$ be
\begin{align*}
    \cG_\Delta := \set{\ket{\psi}\in \cS(\C^{d(\secp)}): \forall i\in[\ell],\ \left|q_i-\frac{r}{d}\right| > \Delta },
\end{align*}
where each $q_i$ is defined on the matrix $\ketbra{\psi}{\psi}$.
It holds that
\begin{align*}
    \Pr\left[\ket{\psi}\in\cG_\Delta: \ket{\psi}\gets\Haar(\C^{d(\secp)})\right]
    \ge 1 - O\left(\frac{k}{d}\right) - O\left(\frac{\ell}{\sqrt{r}}\right) - O\left(\frac{\Delta\ell d}{\sqrt{r}}\right).
\end{align*}
\end{claim}
\begin{proof}
By \Cref{lem:stringsGaussian}, the total variation distance between $(q_1,\dots,q_\ell)$ and the random variable $Z = (Z_1,\dots,Z_\ell)$ with \iid Gaussian entries $Z_i \sim \cN(r/d, r/d^2)$ is $O(k/d) + O(\ell/\sqrt{r})$. Hence, 
\begin{align*}
    \Pr\left[ \ket{\psi}\in\cG_\Delta: \ket{\psi}\gets\Haar(\C^{d(\secp)}) \right] 
    & =  \Pr\left[ \forall i\in[\ell],\ \left|q_i-\frac{r}{d}\right| > \Delta \right] \\
    & \ge \Pr\left[ \forall i\in[\ell],\ \left| Z_i - \frac{r}{d} \right| > \Delta \right] 
    - O\left(\frac{k}{d}\right) - O\left(\frac{\ell}{\sqrt{r}}\right).
\end{align*}
Moreover, by~\Cref{fact:NormalDist}, for every coordinate $i\in[\ell]$, it holds that 
\[
\Pr\left[ \left| Z_i - \frac{r}{d} \right| \le \Delta \right]
\le O\left(\frac{\Delta}{\sqrt{r}/d}\right).
\]
\noindent By a union bound over $i \in [\ell]$, with all but $O\left(\frac{\Delta\ell d}{\sqrt{r}}\right)$ probability every $Z_i$ is $\Delta$-away from $r/d$. Collecting the probabilities completes the proof of~\Cref{claim:HaarGood}.
\end{proof}

\noindent Hence, by setting $\Delta(\secp) = 1/d(\secp)$, the choice of parameters $r(\secp) = d(\secp)^{2/3}$, $\ell(\secp) = d(\secp)^{1/6}$, $k(\secp) = d(\secp)^{5/6}$ and~\Cref{claim:HaarGood}, we have
\begin{align*}
\Pr\left[ \ket{\psi} \in \cG_\Delta: \ket{\psi}\gets\Haar(\C^{d(\secp)}) \right] 
\ge 1 - O(d(\secp)^{-1/6}).
\end{align*}

Next, given a state which is in $\cG_\Delta$, the output bitstring extracted from it will be $f(\ketbra{\psi}{\psi})$ with overwhelming probability by the correctness of $\tomography$ in~\Cref{corollary:tomography}.
\begin{claim}
\label{claim:GoodCorrect}
If $\ket{\psi}\in\cG_\Delta$, then running $\ext$ in~\Cref{construction:extractor} with error tolerance $\delta(\secp) = d^{-5/3} = \Delta(\secp)/r(\secp)$ for $\tomography$ satisfies
\[
\Pr\left[ y = f(\ketbra{\psi}{\psi}): y \gets \ext \left( \ketbra{\psi}{\psi}^{\otimes t(\secp)} \right) \right] 
\ge 1 - \negl(\secp),
\]
where the probability is over the randomness of the extractor $\ext$.
\end{claim}
\begin{proof}
Let $M$ be the classical description obtained by running $\tomography(\ketbra{\psi}{\psi}^{\otimes t(\secp)})$ with error tolerance $\delta$ and $t \ge 144\secp d^8 \ge 144\secp d^4/\delta^2$. Let $\hat{p}_i$'s and $\hat{q}_j$'s be the corresponding diagonal entries and sums of $M$. By~\Cref{corollary:tomography} , $\TD(\ketbra{\psi}{\psi}, M)\le \delta$ holds with overwhelming probability. For the rest of the proof, we assume that this event holds. Then by~\Cref{fact:TDdiagonal}, we have $|p_i - \hat{p}_i|\le \delta$ for every $i\in[k]$. Since $\ket{\psi}\in\cG_\Delta$, we have $|q_i - r/d| > \Delta$ for every $i\in[\ell]$. We now show that if $q_i > r/d + \Delta$, then $\hat{q}_i > r/d$. For every $i\in[\ell]$, by the triangle inequality and the fact that $\delta = \Delta/r$, we have
\[
\hat{q}_i
= q_i - (q_i - \hat{q}_i)
> \left( \frac{r}{d} + \Delta \right) - \sum_{j=1}^r \left|p_{(i-1)r+j} - \hat{p}_{(i-1)r + j}\right|
\ge \frac{r}{d} + \Delta - r\cdot \delta = \frac{r}{d}.
\]
Similarly, we have $q_i < r/d - \Delta$ implies that $\hat{q}_i < r/d$. Hence, this ensures the consistency between $\round(M)$ and $\round(\ketbra{\psi}{\psi})$ and completes the proof of~\Cref{claim:GoodCorrect}.
\end{proof}

\paragraph{\bf{Statistical Closeness to Uniformity.}}
We finish the proof with a hybrid argument:
\begin{itemize}
    \item $\hybrid_1:$ In the first hybrid, the output is generated according to~\Cref{construction:extractor}.
    \begin{enumerate}
        \item Sample $\ket{\psi}\gets\Haar(\C^{d(\secp)})$.
        \item Perform $\tomography(\rho^{\otimes t(\secp)})$ with $t(\secp) := 144\secp d(\secp)^8$ and error tolerance $\delta(\secp) := d(\secp)^{-5/3}$ to get the classical description $M\in\C^{d(\secp)\by d(\secp)}$ that approximates $\rho$.
        \item Output $y = \round(M)$.
    \end{enumerate}
    
    \item $\hybrid_2:$ In the second hybrid, the input of $\round$ is changed to the exact description of the quantum state.
    \begin{enumerate}
        \item Sample $\ket{\psi}\gets\Haar(\C^{d(\secp)})$.
        \item Output $y = \round(\ketbra{\psi}{\psi})$.
    \end{enumerate}
     
    \item $\hybrid_3:$ In the third hybrid, the output is generated by rounding \iid Gaussians.
     \begin{enumerate}
        \item Sample $z_1,\dots,z_\ell \gets \cN(r/d, r/d^2)$.
        \item For $i\in[\ell]$, 
        \[
            b_i = 
            \begin{cases}
                0, & \text{ if } z_i < r/d \\
                1, & \text{ if } z_i > r/d.
            \end{cases}
        \]
    \item Output $b_1||\dots||b_{\ell(\secp)}$.
    \end{enumerate}
\end{itemize}

\noindent We can bound the total variation distance between $\hybrid_1$ and $\hybrid_2$ by $O(\delta) = O(d^{-5/3})$ using~\Cref{corollary:tomography} and our chosen error tolerance. Additionally, the total variation distance between $\hybrid_2$ and $\hybrid_3$ is at most $O(d^{-1/6})$ from~\Cref{lem:stringsGaussian}. Finally, since Gaussians are symmetric about the mean, the output string in $\hybrid_3$ is uniformly and randomly distributed. This finishes the proof of~\Cref{thm:det:extraction}.
\end{proof}

\section{Quantum PRGs and PRFs}
\label{section:Quantum Pseudorandomess}

In this section, we present our main application of the extractor in~\Cref{section:Extractor}. We introduce the notion of \emph{pseudodeterministic quantum pseudorandom generators} (QPRGs). As the name suggests, QPRGs is a pseudorandom generator with quantum generation satisfying only \emph{pseudodeterminism} property. To be more precise, by pseudodeterminism we mean that there exist some constant $c>0$ and at least $1 - O(\secp^{-c})$ fraction of ``good seeds'' for which the output is almost certain. That is, for each good seed, the probability (over the randomness of the QPRG) of the most likely output is at least $1 - O(\secp^{-c})$. 

\subsection{Construction of QPRGs}

\begin{definition}[Weak/Strong Pseudodeterministic Quantum Pseudorandom Generator] 
A weak pseudodeterministic quantum pseudorandom generator $G_{\secp}$, abbreviated as wQPRG, is a uniform QPT algorithm that on input a seed $k \in \bit^\secp$, outputs a bitstring of length $\ell(\secp)$ with the following guarantees:
\begin{itemize}
\item {\bf Pseudodeterminism:} There exists a constant $c > 0$ and a function $\mu(\secp) = O(\secp^{-c})$ such that for every $\secp\in\N$, there exists a set of ``good seeds'' $\cK_\secp \subseteq \bit^\secp$ satisfying the following:
    \begin{enumerate}
        \item $\Pr[k\in\cK_\secp: k \gets \bit^\secp] \ge 1 - \mu(\secp)$.
        \item For any $k\in\cK_\secp$, it holds that 
        \[
        \max\limits_{y\in\bit^{\ell(\secp)}} \Pr[y = G_\secp(k)] \ge 1 - \mu(\secp),
        \]
        where the probability is over the randomness of $G_\secp$.
    \end{enumerate}

\item {\bf Stretch:} The output length of $G_{\secp}$, namely $\ell(\secp)$, is strictly greater than $\secp$.

\item {\bf Weak Security:} For every (non-uniform) QPT distinguisher $A$, there exists a polynomial $\nu(\cdot)$ such that the following holds for sufficiently large $\secp\in\N$,
\begin{equation} \label{equaiton:distinguishingAdv}
\left| \Pr\left[ A_\secp(y) = 1: \substack{ k \gets \bit^\secp, \\
y \gets G_\secp(k)} \right] 
- \Pr\left[ A_\secp(y) = 1: y \gets \bit^{\ell(\secp)} \right] \right| 
\le 1 - \frac{1}{\nu(\secp)},
\end{equation}
where the probability of the first experiment is over the choice of $k$ and the randomness of $G_\secp$ and $A_\secp$.

\noindent If $G$ further satisfies the strong security property defined below, we call $G$ a \emph{strong pseudodeterministic quantum pseudorandom generator}, abbreviated as sQPRG.
\item {\bf Strong Security:} For every (non-uniform) QPT distinguisher $A$, there exists a negligible function $\veps(\cdot)$ such that the following holds for sufficiently large $\secp\in\N$,
\[
\left| \Pr\left[ A_\secp(y) = 1: \substack{ k \gets \bit^\secp, \\
y \gets G_\secp(k)} \right] 
- \Pr\left[ A_\secp(y) = 1: y \gets \bit^{\ell(\secp)} \right] \right| 
\le \veps(\secp),
\]
where the probability of the first experiment is over the choice of $k$ and the randomness of $G_\secp$ and $A_\secp$.

\end{itemize}
We call the left-hand side of~\Cref{equaiton:distinguishingAdv} the \emph{distinguishing advantage} of $A$. We say that $G$ is \emph{$(1-\delta(\secp))$-pseudorandom} or \emph{has pseudorandomness $1-\delta(\secp)$} if the maximum distinguishing advantage over all non-uniform QPT adversaries is at most $\delta(\secp)$. We say that $G$ has pseudodeterminism $1 - \mu(\secp)$ if it satisfies the pseudodeterminism property for the function $\mu(\cdot)$. 
\end{definition}

\noindent We begin with an $n(\secp)$-PRS (recall that $n(\secp)$ is its output length), where $n(\secp) = O(\log\secp)$ and the dimension of its output is $d(\secp) = 2^{n(\secp)} = \poly(\secp)$.

\begin{theorem}[$O(\log\secp)$-PRS implies wQPRG]
\label{lemma:QPRG:determinism}
Assuming the existence of $(c\log\secp)$-PRS for some constant $c > 6$, then there exists a $\left(1 - O(\secp^{-c/6})\right)$-pseudorandom wQPRG with pseudodeterminism $1 - O(\secp^{-c/12})$ and output length $\ell(\secp) = \secp^{c/6} > \secp$.
\end{theorem}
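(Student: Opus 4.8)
The plan is to compose the pseudodeterministic extractor of \Cref{thm:det:extraction} with the PRS generator. Write $G^{\prs}$ for the assumed $(c\log\secp)$-PRS generator, so that $G^{\prs}_\secp(k) = \rho_k$ is a $d(\secp)$-dimensional state with $d(\secp) = 2^{c\log\secp} = \secp^c = \poly(\secp)$, and let $t(\secp) = \poly(d(\secp),\secp)$ be the copy complexity required by $\ext$. The wQPRG $G_\secp$ on seed $k\in\bit^\secp$ runs $G^{\prs}_\secp(k)$ a total of $t(\secp)$ times to obtain $\rho_k^{\otimes t(\secp)}$, feeds this to $\ext$, and outputs the resulting $y\in\bit^{\ell(\secp)}$ with $\ell(\secp) = \floor{d(\secp)^{1/6}} = \floor{\secp^{c/6}}$. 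Since $c>6$ this exceeds $\secp$ for large $\secp$, giving stretch; since $d = \poly(\secp)$ and $\ext$ runs in time $\poly(d,\secp)$, the composition is a uniform QPT algorithm.

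For the pseudorandomness claim (which subsumes weak security) I would use a two-step hybrid. Starting from $A(G_\secp(k)) = A(\ext(\rho_k^{\otimes t}))$ for a random seed $k$, first replace $\rho_k$ by a Haar state $\ket{\vtheta}$: the change in $A$'s acceptance probability is $\negl(\secp)$ by PRS security applied to the QPT distinguisher $A\circ\ext$ with copy parameter $t$. Next, \Cref{thm:det:extraction} (Statistical Closeness to Uniformity) says the output of $\ext$ on a Haar state is $O(d^{-1/6}) = O(\secp^{-c/6})$-close in total variation to uniform on $\bit^{\ell(\secp)}$, so replacing it by a uniform string changes $A$'s acceptance probability by at most $O(\secp^{-c/6})$. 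Summing, the distinguishing advantage is $O(\secp^{-c/6})$, so $G$ is $(1-O(\secp^{-c/6}))$-pseudorandom.

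The crux is pseudodeterminism, and the main obstacle is that \Cref{thm:det:extraction} only guarantees $\ext$ is almost deterministic on Haar states lying in $\cG_\Delta$, whereas our construction runs $\ext$ on PRS states. The fix is to make the relevant quantity efficiently testable and push it through PRS security. Let $A'$ be the explicit QPT algorithm that, on input $\sigma^{\otimes 2t}$, runs $\ext$ on the first $t$ copies and independently on the last $t$ copies and outputs $1$ iff the two outputs agree; writing $p_{\sigma,y} := \Pr[\ext(\sigma^{\otimes t}) = y]$ we have $\Pr[A'(\sigma^{\otimes 2t})=1] = \sum_y p_{\sigma,y}^2 =: s(\sigma)$. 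For a Haar state $\ket{\vtheta}\in\cG_\Delta$, the correctness clause of \Cref{thm:det:extraction} gives $p_{\ketbra{\vtheta}{\vtheta},\,f(\ketbra{\vtheta}{\vtheta})} \ge 1-\negl(\secp)$, hence $s(\ketbra{\vtheta}{\vtheta}) \ge 1-\negl(\secp)$; combined with $\Pr_{\ket{\vtheta}\gets\Haar}[\ket{\vtheta}\in\cG_\Delta] \ge 1 - O(d^{-1/6})$ this yields $\E_{\ket{\vtheta}}[s(\ketbra{\vtheta}{\vtheta})] \ge 1 - O(\secp^{-c/6})$. Applying PRS security to $A'$ (copy parameter $2t$) transfers this to PRS states: $\E_{k\gets\bit^\secp}[s(\rho_k)] \ge 1 - O(\secp^{-c/6})$, i.e.\ $\E_k[1 - s(\rho_k)] \le O(\secp^{-c/6})$.

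A Markov argument then finishes the proof. Set $\cK_\secp := \set{k\in\bit^\secp : 1 - s(\rho_k) < \secp^{-c/12}}$; by Markov's inequality $\Pr_k[k\notin\cK_\secp] \le O(\secp^{-c/6})/\secp^{-c/12} = O(\secp^{-c/12})$, so $\cK_\secp$ has density $\ge 1 - O(\secp^{-c/12})$. For $k\in\cK_\secp$ we have $\sum_y p_{\rho_k,y}^2 = s(\rho_k) > 1 - \secp^{-c/12}$, and since $\sum_y p_{\rho_k,y}^2 \le (\max_y p_{\rho_k,y})\sum_y p_{\rho_k,y} = \max_y p_{\rho_k,y}$, the most likely output of $G_\secp(k)$ occurs with probability $> 1 - \secp^{-c/12}$. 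Taking $\mu(\secp) = O(\secp^{-c/12})$ (a valid choice since $c/12 > 0$) establishes pseudodeterminism, completing the theorem. The only points needing care are that $A'$ is genuinely QPT (it is, since $\ext$ is QPT and $2t(\secp)=\poly(\secp)$) and that PRS security is invoked with a polynomial copy parameter — both immediate — so I anticipate no further difficulty beyond setting up the self-collision reduction correctly.
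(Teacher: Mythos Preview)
Your proposal is correct and follows essentially the same approach as the paper: identical construction (compose $\ext$ with the PRS generator), identical two-step hybrid for pseudorandomness, and the same key reduction for pseudodeterminism via the self-collision tester $A'$ that runs $\ext$ twice and checks agreement. The only difference is cosmetic: where the paper argues pseudodeterminism by contradiction (assume a $\mu$-fraction of bad seeds, bound the collision probability by $1-\mu^2$, and contradict PRS security), you give a direct argument via Markov's inequality on $1-s(\rho_k)$ together with the clean bound $\sum_y p_y^2 \le \max_y p_y$ --- this is arguably tidier but uses the same ingredients.
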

\begin{proof}
Consider the following construction.
\begin{construction}[Weak Quantum Pseudorandom Generators] \hfill
\label{construction:wQPRG}
\begin{enumerate}
\item Input: a security parameter $1^\secp$ and a seed $k\in\bit^\secp$.
\item Run $(c\log\secp)$-$\prs(k)$ $t$ times to get $\rho_k^{\otimes t(\secp)}$, where $t(\secp) = 144\secp d(\secp)^8 = O(\secp^{8c+1})$ as defined in~\Cref{construction:extractor}.
\item Run $\ext( \rho_k^{\otimes t(\secp)} )$ defined in~\Cref{construction:extractor} to get $y\in\bit^{\ell(\secp)}$.
\item Output $y$.
\end{enumerate}
\end{construction}

\paragraph{Efficiency.} Since $t(\secp) = \poly(\secp)$ and $d(\secp) = O(\secp^c)$, the running time is polynomial in $\secp$ from~\Cref{thm:det:extraction}.

\paragraph{Pseudodeterminism.} We complete the proof by a hybrid argument. Consider the following hybrids.
\begin{itemize}
    \item $\hybrid_1:$ In the first hybrid, $y$ is generated according to~\Cref{construction:wQPRG}.
    \begin{enumerate}
        \item Sample $k\gets\bit^\secp$. 
        \item Run $\prs(k)$ $t$ times to get $\rho_k^{\otimes t(\secp)}$.
        \item Run $y\gets\ext(\rho_k^{\otimes t(\secp)})$.
        \item Output $y$.
    \end{enumerate} 
    
    \item $\hybrid_2:$ In the second hybrid, the input is changed to a Haar state.
    \begin{enumerate}
        \item Sample $\ket{\psi}\gets\Haar(\C^{d(\secp)})$.
        \item Run $y\gets\ext(\ketbra{\psi}{\psi}^{\otimes t(\secp)})$.
        \item Output $y$. 
    \end{enumerate}
\end{itemize}

\noindent For the sake of contradiction, suppose there exists at least $\mu(\secp)\neq O(\secp^{-c/12})$ fraction of ``bad seeds'' (the complement of the set $\cK_\secp$ of good seeds) for which the probability of the most likely output is at most $1 - \mu(\secp)$.
Then we construct an efficient distinguisher for $\prs$ as follows:
\begin{enumerate}
    \item Take as input $2t(\secp) = \poly(\secp)$ copies of $\rho$ which is either sampled from $\prs$ with a random key or $\Haar(\C^{d(\secp)})$.
    \item Run $\ext(\rho^{\otimes t(\secp)})$ twice independently and get the output $y_1, y_2$ respectively.
\end{enumerate}

\noindent First, if $\rho$ is sampled from $\Haar(\C^{d(\secp)})$, then by the correctness of $\ext$ in~\Cref{thm:det:extraction}, we have 
\[
p_1 := \Pr\left[ y_1 = y_2: \rho\gets\Haar(\C^{d(\secp)}) \right]
\ge (1 - O(d(\secp)^{1/6})) \cdot (1-\negl(\secp))^2 
\ge 1 - h(\secp),
\]
where $h(\secp) = O(\secp^{-c/6})$.

\noindent On the other hand, consider the case in which $\rho$ is sampled from $\prs$. Without loss of generality, we can assume that $\mu(\secp)<1/2$ for sufficiently large $\secp$. Otherwise, the distinguishing advantage would already be non-negligible. Then,
\begin{align*}
p_2 & := \Pr\left[ y_1 = y_2: \rho\gets\prs(k) \right]
= \Pr[k \in \cK_\secp]\Pr[y_1 = y_2 \mid k \in \cK_\secp] + \Pr[k \notin \cK_\secp]\Pr[y_1 = y_2 \mid k \notin \cK_\secp] \\
& \le (1 - \mu(\secp)) \cdot 1 + \mu(\secp) \cdot ( 1 - \mu(\secp) )
= 1 - \mu(\secp)^2.
\end{align*}
Now, for any sufficiently large $\secp\in\N$ such that $1/2 > \mu(\secp)$, we do a case analysis. Suppose $h(\secp) \ge \mu(\secp)^2$, then we have $\sqrt{h(\secp)} > \mu(\secp)$. Otherwise, if $h(\secp) < \mu(\secp)^2$, then the distinguishing advantage $|p_1 - p_2|$ satisfies
\[
\negl(\secp) 
= |p_1 - p_2| = p_1 - p_2 
\ge \mu(\secp)^2 - h(\secp)
\]
due to the security of $\prs$. Hence, it holds that $\sqrt{h(\secp)+ \negl(\secp)} > \mu(\secp)$. However, combining two cases would imply that $\mu(\secp) = O(\secp^{-c/12})$ and lead to a contradiction.

\paragraph{Stretch.} From~\Cref{thm:det:extraction}, the output length of~\Cref{construction:wQPRG} is given by $\ell(\secp) = d(\secp)^{1/6} = \secp^{c/6} > \secp$ since $c > 6$.

\paragraph{Weak Security.} We complete the proof by a hybrid argument. Consider the following hybrids:
\begin{itemize}
    \item $\hybrid_1:$ In the first hybrid, the adversary receives a string $y$ which is generated according to~\Cref{construction:wQPRG}.
    \begin{enumerate}
        \item Sample $k\gets\bit^\secp$.
        \item Run $\prs(k)$ $t$ times to get $\rho_k^{\otimes t(\secp)}$.
        \item Run $y\gets\ext(\rho_k^{\otimes t(\secp)})$.
        \item Output $y \in\bit^{\ell(\secp)}$. 
    \end{enumerate}
    
    \item $\hybrid_2:$ In the second hybrid, the input is changed to a Haar state. 
    \begin{enumerate}
        \item Sample $\ket{\psi}\gets\Haar(\C^{d(\secp)})$.
        \item Run $y\gets\ext(\ketbra{\psi}{\psi}^{\otimes t(\secp)})$.
        \item Output $y \in\bit^{\ell(\secp)}$.
    \end{enumerate}
    
    \item $\hybrid_3:$ Sample $y\gets\bit^{\ell(\secp)}$. Output $y \in\bit^{\ell(\secp)}$.
    In the third hybrid, the adversary receives a string sampled from the uniform distribution. 
\end{itemize}

\noindent The computational indistinguishability of hybrids $\hybrid_1$ and $\hybrid_2$ follows from the security of $\prs$. Otherwise, running $\ext$ on the samples once would be an efficient distinguisher in the PRS security experiment. The statistical indistinguishability of hybrids $\hybrid_2$ and $\hybrid_3$ follows from the statistical closeness to uniform property of~\Cref{thm:det:extraction}. In particular, the statistical distance is $O(d(\secp)^{-1/6}) = O(\secp^{-c/6})$.
\end{proof}

While we do not have a non-trivial way to amplify the pseudodeterminism property, the security amplification can be achieved by techniques in~\cite{CHS05TCC, DIJK09TCC, MauTes09CRYPTO, MauTes10TCC}. In particular, we will use the security amplification for (classical) weak PRGs in~\cite{DIJK09TCC}. The construction is to run the weak PRG $G$ on input $s(\secp) = \omega(\log\secp)$ independently and randomly chosen seeds $k_1, \dots, k_s$ and then output the bit-wise XOR of the $s$ strings $G(k_1), \dots, G(k_s)$.

\begin{theorem}[{\cite[Theorem 6]{DIJK09TCC}}]
\label{thm:amplification:PRG}
Let $s(\secp) = \omega(\log\secp)$. Let $G:\bit^\secp \to \bit^{\ell(\secp)}$ be a weak PRG with $(1-\delta)$-pseudorandomness such that $\delta < 1/2$ and $\ell(\secp) > s(\secp)\cdot\secp$. Define the function $G^{\xor s}: \bit^{s(\secp)\secp} \to \bit^{\ell(\secp)}$ as $G^{\xor s}(k_1,\dots,k_s) := \bigxor_{i=1}^s G(k_i)$. Then $G^{\xor s}$ is a strong PRG. 
\end{theorem}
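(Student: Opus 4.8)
The statement is Theorem~6 of~\cite{DIJK09TCC} restated, so strictly speaking it is used here as a black box; a self-contained proof would reproduce the standard next-bit-test-plus-XOR-lemma template for hardness amplification, which is the route I would take. The stretch condition is immediate: $G^{\xor s}$ maps $s(\secp)\cdot\secp$ bits to $\ell(\secp)$ bits and $\ell(\secp) > s(\secp)\cdot\secp$ by hypothesis, so it is expanding. It then remains to bound the distinguishing advantage, and the plan is to reduce this to Yao's XOR lemma (equivalently, Impagliazzo's hardcore lemma) via the equivalence between pseudorandomness and next-bit unpredictability.

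Concretely: (i) From $(1-\delta)$-pseudorandomness of $G$, the easy direction of the next-bit equivalence shows that for every output position $j\le\ell$ and every QPT predictor $P$, $\Pr_k[P(G(k)_{<j}) = G(k)_j] \le 1/2 + \delta$; that is, each output bit of $G$, viewed as a function of the seed with the length-$(j-1)$ prefix as side information, is hard to predict beyond bias $\delta$, and the hypothesis $\delta < 1/2$ is what makes this a meaningful hardness. (ii) The $j$-th output bit of $G^{\xor s}$ is $\bigxor_{i=1}^s G(k_i)_j$, the XOR of $s$ independent copies of such a bit, and a predictor for it sees only $\bigxor_i G(k_i)_{<j}$, which reveals no more than the tuple of individual prefixes; hence it can be simulated inside the auxiliary-input XOR-lemma game. (iii) Apply Yao's XOR lemma: the XOR of $s$ independent bits each unpredictable beyond bias $\delta$ is unpredictable beyond bias $(2\delta)^s + \gamma$ against predictors of size $t\cdot\poly(1/\gamma)$, where $t$ is the base size. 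Since $\delta < 1/2$ we have $2\delta < 1$, and since $s(\secp) = \omega(\log\secp)$ the term $(2\delta)^s$ is negligible in $\secp$. (iv) Convert back: if some QPT distinguisher $D$ had advantage $\ge 1/p(\secp)$ against $G^{\xor s}$ for infinitely many $\secp$, the hybrid argument over the $\ell$ output positions produces a QPT predictor for some output bit of $G^{\xor s}$ with advantage $\ge 1/(p\ell)$; instantiating the XOR lemma with slack $\gamma := 1/(2p\ell)$ — which is inverse polynomial, so the reduction stays polynomial size — and using that $(2\delta)^s + \gamma < 1/(p\ell)$ for large $\secp$, we extract a QPT predictor for a single output bit of $G$ with advantage strictly more than $\delta$, contradicting step (i). Since $\ell(\secp) = \poly(\secp)$, this argument withstands every inverse polynomial $p$, which yields negligible advantage and hence strong pseudorandomness.

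The main obstacle is step (iii): Yao's XOR lemma / Impagliazzo's hardcore lemma is the one genuinely non-trivial ingredient. A naive hybrid argument does not amplify here, because replacing a single XOR-summand $G(k_i)$ by a uniform string turns the whole XOR uniform and destroys all distinguishing advantage; and a direct Fourier-analytic bound overcounts by a $2^{\ell}$ factor, since it ignores the efficiency constraint on the distinguisher (the worst-case Fourier-weighted test is essentially ``decide membership in the image of $G$'', which is exactly the hard problem). So one really needs the hardcore-lemma machinery, after which everything else is routine bookkeeping: the $\poly(1/\gamma)$ blow-up, the $\ell$-fold hybrid, and the translation between distinguishing advantage and next-bit bias. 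Two further points specific to our use — both settled by careful but routine inspection rather than new ideas — are that all the reductions above are uniform, and that they go through against quantum distinguishers, which requires a quantum analogue of the hardcore/XOR lemma; checking that the reductions are oblivious to the (quantum) adversary in the required sense is precisely what we mean by saying that the classical amplification analysis extends to the quantum setting.
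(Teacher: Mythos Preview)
Your outline is correct and follows the standard template, but the route differs from the one the paper (following~\cite{DIJK09TCC}) actually takes in the appendix when proving the quantum analogue. You invoke Yao's XOR lemma / Impagliazzo's hardcore lemma directly in step~(iii) to amplify next-bit unpredictability. The paper instead goes via a \emph{direct product} theorem: it models next-bit prediction as a weakly verifiable puzzle, applies the Canetti--Halevi--Steiner parallel-repetition theorem to show that predicting all $2s$ next bits simultaneously succeeds with probability at most $(1/2+\delta)^{2s}$, and then converts direct-product hardness to XOR hardness via Goldreich--Levin (a predictor for the XOR with advantage $\veps$ yields, by averaging over Hamming-weight-$s$ selectors and then Goldreich--Levin decoding, a predictor for the full tuple with probability $\Omega(\veps^3/s^{5/2})$). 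Your route is more direct and conceptually cleaner. The paper's detour, however, buys something concrete for the quantum extension: the CHS parallel-repetition theorem has been explicitly shown to hold against quantum solvers by Radian--Sattath, and Goldreich--Levin has a quantum version, so every ingredient is already quantum-sound off the shelf. By contrast, your closing remark that the hardcore/XOR lemma ``goes through against quantum distinguishers'' by ``routine inspection'' is the one place where your argument is thinner than the paper's --- the standard proofs of Impagliazzo's hardcore lemma rely on min-max over distributions or on boosting, and porting these to quantum adversaries, while doable, is not purely mechanical.
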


\noindent We observed that~\Cref{thm:amplification:PRG} could be extended to QPRGs.

\begin{theorem}[Security amplification for QPRGs]
\label{thm:amplification:QPRG}
Let $G:\bit^\secp \to \bit^{\ell(\secp)}$ be a wQPRG that has pseudodeterminism $1 - O(\secp^{-c})$ and pseudorandomness $1 - \delta$ such that $c>1$, $\delta(\secp) \le 0.49+o(1)$ and $\ell(\secp) > s(\secp)\cdot \secp$, where $s(\secp) = \Theta(\secp)$. Define the QPT algorithm $G^{\xor s}: \bit^{s(\secp)\secp} \to \bit^{\ell(\secp)}$ as $G^{\xor s}(k_1,\dots,k_s) := \bigxor_{i=1}^s G(k_i)$. Then $G^{\xor s}$ is a sQPRG with pseudodeterminism $1 - O(\secp^{-(c-1)})$ and output length $\ell(\secp)$.
\end{theorem}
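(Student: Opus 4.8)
The plan is to establish the two guarantees of a sQPRG — pseudodeterminism and strong security — separately, since the internal randomness of $G$ matters only for the former, while for the latter all that matters is the output distribution that $G^{\xor s}$ induces on a uniform seed.

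\textbf{Pseudodeterminism.} Let $\cK_\secp \subseteq \bit^\secp$ be the good-seed set of $G$, so $\Pr_{k\gets\bit^\secp}[k \in \cK_\secp] \ge 1 - O(\secp^{-c})$ and every $k \in \cK_\secp$ has a most-likely output $y_k$ with $\Pr[G_\secp(k) = y_k] \ge 1 - O(\secp^{-c})$. I would take the good-seed set of $G^{\xor s}$ to be $\cK'_\secp := (\cK_\secp)^{s(\secp)}$, i.e. the tuples all of whose components are good. A union bound over the $s(\secp) = \Theta(\secp)$ components gives $\Pr[(k_1,\dots,k_s)\in\cK'_\secp] \ge (1-O(\secp^{-c}))^{s} \ge 1 - O(s\,\secp^{-c}) = 1 - O(\secp^{-(c-1)})$, and for a fixed good tuple the string $\bigxor_{i} y_{k_i}$ is output except with probability $\sum_i O(\secp^{-c}) = O(\secp^{-(c-1)})$ (using that the $s$ runs of $G_\secp$ are independent). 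Since $c>1$, $\mu'(\secp) := O(\secp^{-(c-1)})$ is an inverse-polynomial vanishing function, so $G^{\xor s}$ has pseudodeterminism $1 - O(\secp^{-(c-1)})$; the output length is unchanged and, by the hypothesis $\ell(\secp) > s(\secp)\secp$, still exceeds the seed length.

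\textbf{Strong security.} Let $D_\secp$ be the distribution on $\bit^{\ell(\secp)}$ obtained by sampling $k\gets\bit^\secp$ and then $y \gets G_\secp(k)$. Weak security says that no non-uniform QPT algorithm distinguishes $D_\secp$ from $U_{\ell(\secp)}$ with advantage exceeding $\delta(\secp) \le 0.49 + o(1) < 1/2$ (for large $\secp$). The output of $G^{\xor s}$ on a uniform seed is exactly the bitwise XOR of $s$ \iid copies of $D_\secp$, so I would invoke the XOR-based amplification analysis of \cite{DIJK09TCC} (\Cref{thm:amplification:PRG}) and \cite{MauTes09CRYPTO,MauTes10TCC}: the distinguishing advantage against such an $s$-fold XOR decays like $(2\delta)^{s}$ up to negligible additive terms, and since $2\delta \le 0.98 + o(1)$ is bounded away from $1$ while $s = \Theta(\secp) = \omega(\log\secp)$, this is $\negl(\secp)$. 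Two adaptations are needed to apply the classical statement. First, $G$ is randomized rather than a deterministic function: I would absorb the polynomially many internal random bits of $G_\secp$ into an extended seed, so that $G^{\xor s}$ literally becomes an $s$-fold XOR of a deterministic function whose induced distribution on a uniform extended seed is $D_\secp$; this changes neither the induced distributions nor efficiency. Second, the distinguisher is a (non-uniform) quantum algorithm: the point is that the reductions underlying \cite{DIJK09TCC,MauTes09CRYPTO,MauTes10TCC} are black-box and straight-line — they run the given distinguisher $A$ as a subroutine on freshly sampled \emph{classical} inputs, never rewinding it or querying it in superposition, and need only the ability to sample $D_\secp$ (a QPT operation). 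Such a reduction turns a non-uniform QPT distinguisher for $G^{\xor s}$ with advantage $>(2\delta)^{s}+\negl(\secp)$ into a non-uniform QPT distinguisher for $D_\secp$ versus uniform with advantage $>\delta$, where the reduction's quantum advice is polynomially many copies of $A$'s advice state — contradicting weak security. Hence $G^{\xor s}$ is $(1-\negl(\secp))$-pseudorandom, i.e. an sQPRG.

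\textbf{Main obstacle.} The pseudodeterminism union bound is routine; the real work is in the second adaptation above — verifying that the hardcore-set and Chernoff-style machinery inside the classical amplification theorems \cite{DIJK09TCC,MauTes09CRYPTO,MauTes10TCC} is genuinely insensitive to the distinguisher being quantum. I expect this to reduce to checking that those proofs use the distinguisher only as a straight-line, non-adaptive black box, together with standard bookkeeping for non-uniform quantum advice (taking tensor powers of the advice state when the reduction calls the distinguisher multiple times) and for the QPT sampler of $D_\secp$. A secondary point to track throughout is the stretch hypothesis $\ell(\secp) > s(\secp)\secp$, which is exactly the condition the classical amplification theorem needs for the XOR of $s$ output strings to remain a generator; this is why it appears verbatim in the statement.
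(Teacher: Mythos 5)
Your proposal is correct and follows essentially the same two-part strategy as the paper: the Cartesian-product plus union-bound argument for pseudodeterminism is identical, and for strong security you lift the classical XOR amplification of~\cite{DIJK09TCC} to the quantum-distinguisher setting. The ``main obstacle'' you flag at the end is precisely what the paper's Appendix~A carries out in detail: it re-derives the whole chain (next-bit unpredictability for QPRGs, then a bit-wise direct product lemma by casting next-bit prediction as a weakly verifiable puzzle and invoking the post-quantum extension of Canetti--Halevi--Steiner~\cite{CHS05TCC} due to Radian--Sattath~\cite{RS19}, then the XOR lemma via Goldreich--Levin with either tensored advice or the quantum Goldreich--Levin theorem~\cite{AC02}, then back to pseudorandomness via the converse direction of the next-bit lemma). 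Your derandomization-by-seed-extension device for handling the internal coins of $G$ is a clean alternative to the paper's choice of sampling $G(k)$ once to pin down its ``bits'' inside the next-bit game; both serve the same purpose.
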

\begin{proof}[Proof sketch]
We first prove the pseudodeterminism property. Fix the security parameter $\secp$. The set of good seeds of $G^{\xor s}$ is defined to be the $s(\secp)$-fold Cartesian product $\cK_\secp \times \dots \times \cK_\secp \subseteq \bit^{s(\secp)\secp}$, where $\cK_\secp$ is the set of good seeds of $G$. By the pseudodeterminism of $G$ and a union bound over $i\in[s]$, we have 
\[
\Pr\left[ \forall i\in[s],\ k_i\in\cK_\secp: k = k_\secp||\dots||k_\secp \gets \bit^{s(\secp)\secp} \right] \ge 1 - O(s/\secp^c) = 1 - O(\secp^{-(c-1)}).
\]
Next, for every $(k_1, \dots, k_s)  \in \cK_\secp \times \dots \times \cK_\secp$ and every $i\in[s]$, there exists some $y_i\in\bit^{\ell(\secp)}$ such that $\Pr[y_i = G(k_i)] \ge 1 - 1/\secp^c$ for every $i\in[s]$. Hence, by a union bound over $i\in[s]$, it holds that 
\[
\Pr\left[ \bigxor_{i=1}^s y_i = G^{\xor s}(k_1,\dots,k_s) \right]
\ge \Pr\left[ \bigwedge_{i=1}^s y_i = G(k_i) \right]
\ge 1 - O(s/\secp^c) = 1 - O(\secp^{-(c-1)}).
\]
That is, $G^{\xor s}$ has pseudodeterminism $1 - O(\secp^{-(c-1)})$. The full proof of (strong) security is deferred to~\Cref{app:amplification:PRG}.
\end{proof}

\noindent From~\Cref{lemma:QPRG:determinism}, \Cref{thm:amplification:QPRG} and picking $s(\secp) = \secp$, we have the following corollary.
\begin{corollary}
Assuming the existence of $(c\log\secp)$-PRS for some constant $c > 12$, then there exists a sQPRG $G^{\xor \secp}: \bit^{\secp^2} \to \bit^{\ell(\secp)}$ with pseudodeterminism $1 - O(\secp^{-(c/12-1)})$ and output length $\ell(\secp) = \secp^{c/6} > \secp^2$.
\end{corollary}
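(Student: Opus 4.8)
The plan is simply to compose the two preceding results with an appropriate choice of parameters. First I would invoke \Cref{lemma:QPRG:determinism} with the given constant $c > 12$ (in particular $c > 6$), obtaining a wQPRG $G:\bit^\secp\to\bit^{\ell(\secp)}$ with output length $\ell(\secp) = \secp^{c/6}$, pseudodeterminism $1 - O(\secp^{-c/12})$, and pseudorandomness $1 - O(\secp^{-c/6})$. Then I would verify that this $G$ satisfies the hypotheses of \Cref{thm:amplification:QPRG} when we take $s(\secp) := \secp = \Theta(\secp)$.

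There are three conditions to check. The pseudodeterminism exponent $c/12$ must exceed $1$, which holds precisely because $c > 12$. The pseudorandomness gap $\delta(\secp) = O(\secp^{-c/6})$ must satisfy $\delta(\secp) \le 0.49 + o(1)$; since $c > 12$ gives $c/6 > 2$, in fact $\delta(\secp) = o(1)$, which is more than enough. Finally the stretch condition $\ell(\secp) > s(\secp)\cdot\secp = \secp^2$ must hold, and again $c/6 > 2$ yields $\ell(\secp) = \secp^{c/6} > \secp^2$. With all three verified, \Cref{thm:amplification:QPRG} applies and produces $G^{\xor\secp}:\bit^{\secp\cdot\secp}\to\bit^{\ell(\secp)}$, i.e.\ a map on seeds of length $\secp^2$, that is an sQPRG with pseudodeterminism $1 - O(\secp^{-(c/12-1)})$ and output length $\ell(\secp) = \secp^{c/6} > \secp^2$. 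This is exactly the claimed statement.

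Since every nontrivial ingredient has already been established — the construction of a wQPRG from logarithmic PRS in \Cref{lemma:QPRG:determinism}, and the XOR-based security amplification in the quantum setting in \Cref{thm:amplification:QPRG} — there is no genuine obstacle here beyond parameter bookkeeping. The only point requiring a moment's care is that the requirement ``$c > 12$'' is dictated simultaneously by two considerations: the amplification theorem needs the base pseudodeterminism exponent $c/12$ to be larger than $1$, and it needs the base output length $\secp^{c/6}$ to exceed $s(\secp)\cdot\secp = \secp^2$, i.e.\ $c/6 > 2$. The former condition implies the latter, so $c > 12$ suffices for both.
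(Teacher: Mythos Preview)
Your proposal is correct and follows exactly the paper's approach: the paper's proof is a single sentence citing \Cref{lemma:QPRG:determinism} and \Cref{thm:amplification:QPRG} with the choice $s(\secp)=\secp$, and you have simply spelled out the parameter verification that makes this composition work.
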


\subsection{Construction of Selectively Secure QPRFs}

In the same spirit, it is natural to consider the concept of pseudodeterministic quantum pseudorandom functions (QPRFs). However, when the pseudodeterminism is only $1 - O(\lambda^{-c})$, there is a caveat. An attacker that can make \emph{adaptive} queries can easily distinguish a QPRF with this level of pseudodeterminism from a random function as follows: the distinguisher simply queries the oracle on \emph{the same point} polynomially many times and checks if the answers are all the same. A random function will always produce identical outputs, while a QPRF with pseudodeterminism $1-O(\lambda^{-c})$ will generate different outputs with constant probability. Intuitively, non-determinism allows the QPRF output to appear more random, thus it should strengthen its security. 

Below, we show that we can use a selectively secure $(m(\secp),n(\secp))$-PRFS, where $m(\secp) = \omega(\log\secp)$ and $n(\secp) = O(\log\secp)$, to construct a selectively secure QPRF with input length $m(\secp)$ and output length $\poly(\secp)$. 

\begin{definition}[Selectively Secure Quantum Pseudorandom Functions]
A \emph{selectively secure quantum pseudorandom function} $F:\bit^\secp \times \bit^{m(\secp)} \to \bit^{\ell(\secp)}$ is a QPT algorithm with the following guarantees: 
\begin{itemize}

\item {\bf Pseudodeterminism:} There exists a constant $c > 0$ and a function $\mu(\secp) = O(\secp^{-c})$ such that for every $\secp\in\N$ and every $x\in\bit^{m(\secp)}$, there exists a set of ``good keys'' $\cK_{\secp,x} \subseteq \bit^\secp$ satisfying the following:
\begin{enumerate}
    \item $\Pr[k \in \cK_{\secp,x}: k \gets \bit^\secp] \ge 1 - \mu(\secp)$.
    \item For any $k \in \cK_{\secp,x}$, it holds that 
    \[
    \max\limits_{y\in\bit^{\ell(\secp)}} \Pr[y = F(k,x)] \ge 1 - \mu(\secp),
    \]
    where the probability is over the randomness of $F$.
\end{enumerate}

\item {\bf Selective Security:} 
For any polynomial $q(\cdot)$, any (non-uniform) QPT distinguisher $A$ and any family of pairwise distinct indices $\left( \set{x_1, \dots, x_{q(\secp)}} \subseteq \bit^{m(\secp)}\} \right)_\secp$, there exists a negligible function $\veps(\cdot)$ such that for all $\secp\in\N$,
\begin{multline*}
\Bigg| \Pr\left[ A_\secp(x_1, \dots, x_{q{(\secp)}}, y_1, \dots, y_{q(\secp)} ) = 1:
\substack{ k\gets\bit^{\secp}, \\ 
y_1 \gets F(k,x_1), \dots, y_{q{(\secp)}} \gets F(k,x_{q{(\secp)}}) } \right] \\
- \Pr\left[ A_\secp(x_1, \dots, x_{q{(\secp)}}, y_1, \dots, y_{q(\secp)} ) = 1:
y_1,\dots,y_{q{(\secp)}} \gets\bit^{\ell(\secp)} \right] \Bigg|
\le \veps(\secp).
\end{multline*}

\end{itemize}
\end{definition}

\noindent We will construct a selectively secure quantum pseudorandom function based on a selectively secure $(m(\secp),n(\secp))$-PRFS, where $m(\secp) = \omega(\log\secp)$ and $n(\secp) = O(\log\secp)$.

\begin{theorem}[$(\omega(\log\secp),O(\log\secp))$-PRFS implies selectively secure QPRF]
\label{theorem:QPRF}
Assuming the existence of selectively secure $(m(\secp), c\log\secp)$-PRFS for some constant $c > 12$ and $m(\secp) = \omega(\log\secp)$, then there exists a selectively secure QPRF $F:\bit^{\secp^2}\times\bit^{m(\secp)} \to \bit^{\ell(\secp)}$ with pseudodeterminism $1 - O(\secp^{-(c/12-1)})$, input length $m(\secp)$ and output length $\ell(\secp) = \secp^{c/6}$.
\end{theorem}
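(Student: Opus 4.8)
The plan is to follow the template of the QPRG construction (\Cref{construction:wQPRG}, \Cref{lemma:QPRG:determinism}) combined with the XOR-based amplification of \Cref{thm:amplification:QPRG}. Let $G$ be the given selectively secure $(m(\secp),c\log\secp)$-PRFS generator, so its output dimension is $d(\secp)=2^{c\log\secp}=\secp^{c}=\poly(\secp)$. First define the ``weak'' object
\[
F'(k,x):=\ext\!\left(G(k,x)^{\otimes t(\secp)}\right),\qquad t(\secp)=144\,\secp\,d(\secp)^{8},
\]
where $\ext$ is the extractor of \Cref{construction:extractor}, and then set $F\colon\bit^{\secp^{2}}\times\bit^{m(\secp)}\to\bit^{\ell(\secp)}$, with $\ell(\secp)=d(\secp)^{1/6}=\secp^{c/6}$, to be the $\secp$-fold XOR over independent keys, $F((k_{1},\dots,k_{\secp}),x):=\bigxor_{i=1}^{\secp}F'(k_{i},x)$. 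Efficiency is immediate from \Cref{thm:det:extraction} since $t(\secp),d(\secp)=\poly(\secp)$, and the input length $m(\secp)=\omega(\log\secp)$ is inherited from the PRFS.

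For pseudodeterminism I would first analyze $F'$ at a single fixed input $x$. Let $\cK_{\secp,x}$ be the set of keys $k$ for which $\ext(G(k,x)^{\otimes t})$ has a most-likely output of probability at least $1-O(\secp^{-c/12})$. If the complement had density exceeding $O(\secp^{-c/12})$, then, exactly as in the pseudodeterminism step of \Cref{lemma:QPRG:determinism}, running $\ext$ twice independently on copies of $G(\cdot,x)$ and checking whether the two outputs agree would distinguish the PRFS family with the single index $x$ from one Haar state (using \Cref{thm:det:extraction} for the Haar side), contradicting selective PRFS security. The good-key set for $F$ is then taken to be $\cK_{\secp,x}\times\cdots\times\cK_{\secp,x}$; a union bound over the $\secp$ XOR blocks shows it has density $1-O(\secp^{-(c/12-1)})$ and that, on such a tuple, the XOR of the $\secp$ individually dominant outputs is dominant with probability $1-O(\secp^{-(c/12-1)})$, mirroring the proof sketch of \Cref{thm:amplification:QPRG}.

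For selective security, fix a polynomial $q$ and a family of pairwise distinct indices $x_{1},\dots,x_{q(\secp)}$, and consider the derived map $G'(k):=\bigl(F'(k,x_{1}),\dots,F'(k,x_{q(\secp)})\bigr)$; note that $\bigl(F(\cdot,x_{1}),\dots,F(\cdot,x_{q(\secp)})\bigr)=(G')^{\xor\secp}$. I would check that $G'$ is a wQPRG: its stretch is $q(\secp)\ell(\secp)>\secp^{c/6}>\secp^{2}$ (using $c>12$), which also supplies the output-length hypothesis of \Cref{thm:amplification:QPRG} for seed length $\secp$ and $s(\secp)=\secp$; by the per-index pseudodeterminism of $F'$ and a union bound over $q$ it has pseudodeterminism $1-O(\secp^{-c/12})$ with $c/12>1$; and by a hybrid over the $\secp$ blocks, each step invoking selective PRFS security (the reduction efficiently simulating the other $\secp-1$ blocks and all calls to $\ext$), the output of $G'$ is computationally indistinguishable from $\bigl(\ext(\ket{\vartheta_{1}}^{\otimes t}),\dots,\ext(\ket{\vartheta_{q}}^{\otimes t})\bigr)$ for independent Haar $\ket{\vartheta_{j}}$, which by the statistical-closeness clause of \Cref{thm:det:extraction} and the triangle inequality over the $q$ coordinates is $O\!\left(q\,d(\secp)^{-1/6}\right)=O\!\left(q\,\secp^{-c/6}\right)$-close to uniform, so $G'$ has distinguishing advantage at most $0.49+o(1)$ for sufficiently large $\secp$. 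Applying \Cref{thm:amplification:QPRG} to $G'$ then shows that $(G')^{\xor\secp}=\bigl(F(\cdot,x_{1}),\dots,F(\cdot,x_{q})\bigr)$ is indistinguishable from uniform with negligible advantage, which is precisely selective security of $F$.

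I expect the main obstacle to be the selective-security reduction: one must verify that each hybrid step reduces cleanly to the selective PRFS game with the prescribed distinct indices, that the reduction (which internally runs tomography and $\ext$ for all $\secp$ blocks) is polynomial time, and that the quantum XOR-amplification of \Cref{thm:amplification:QPRG} --- whose full proof is deferred to the appendix and adapts the classical analysis of \cite{DIJK09TCC} --- genuinely applies to the derived generator $G'$, i.e.\ tolerates the inverse-polynomial pseudodeterminism error and quantum distinguishers that receive polynomially many classical samples. The efficiency and pseudodeterminism parts are routine once the single-index argument from \Cref{lemma:QPRG:determinism} is in hand.
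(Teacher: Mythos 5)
Your construction and the pseudodeterminism argument agree with the paper's (\Cref{construction:QPRF} and the corresponding part of the proof of \Cref{theorem:QPRF}), but the selective-security argument has a genuine gap that the paper routes around in a way your plan does not.

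The problem is in the claim that $G'(k):=\bigl(F'(k,x_1),\dots,F'(k,x_{q(\secp)})\bigr)$ is a wQPRG with distinguishing advantage at most $0.49+o(1)$. After you replace the single $\prfs(k,\cdot)$ oracle with independent Haar states (one invocation of selective PRFS security, not a hybrid over $\secp$ blocks---$G'$ has only one PRFS key), the statistical distance of $G'$'s output from uniform on $\bit^{q\ell}$ is, by \Cref{thm:det:extraction}, $\Theta\bigl(q(\secp)\cdot d(\secp)^{-1/6}\bigr)=\Theta\bigl(q(\secp)\secp^{-c/6}\bigr)$. Since $c$ is a fixed constant but $q(\cdot)$ ranges over all polynomials in the selective-security definition, this quantity is not $\le 0.49+o(1)$ when $q$ has degree exceeding $c/6$; it is in fact unbounded. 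So $G'$ fails the hypothesis $\delta(\secp)\le 0.49+o(1)$ of \Cref{thm:amplification:QPRG}, and the amplification theorem cannot be invoked on $G'$. The underlying obstruction is that the $O(d^{-1/6})$ error from \Cref{thm:det:extraction} is only an inverse polynomial, so accumulating it statistically over $q$ independent query positions never gives something negligible (or even bounded).

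The paper's proof avoids this by \emph{not} trying to bound the distance of the all-Haar hybrid ($\hybrid_2$) from uniform directly. Instead, it takes two more computational steps: first it replaces each Haar state with the output of an $n(\secp)$-PRS on a fresh uniform seed ($\hybrid_2\approx_c\hybrid_3$, via \Cref{lem:poly_copy_PRS}); then it observes that $\hybrid_3$ is \emph{literally} $q$ independent evaluations of the already-constructed $\sqprg$ $G^{\xor\secp}$, whose strong security (negligible advantage) was established once and for all in \Cref{thm:amplification:QPRG}. A $q$-step hybrid over the queries, each invoking that negligible bound, then yields a negligible total loss. The key point is that \Cref{thm:amplification:QPRG} is applied to the fixed $\secp$-to-$\ell(\secp)$ generator, where the weak advantage really is $O(\secp^{-c/6})+\negl(\secp)\le 0.49+o(1)$, and the amplified negligible bound is what lets the per-query hybrid work; applying it instead to a $q$-dependent bundled generator loses this. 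To repair your argument you would essentially need to reproduce the paper's extra hybrids ($\hybrid_2\to\hybrid_3\to\hybrid_4$) and the appeal to the strong QPRG's security rather than re-deriving strength from scratch.
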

\begin{proof}

Consider the following construction:
\begin{construction}[Selectively Secure Quantum Pseudorandom Functions] \hfill
\label{construction:QPRF}
    \begin{enumerate}
        \item Input: a key $k\in\bit^{\secp^2}$ and input $x\in\bit^{m(\secp)}$.
        \item Parse $k$ as $k_1||\dots||k_\secp$ such that $k_i \in \bit^\secp$ for every $i\in[\secp]$.
        \item For $i\in[\secp]$, run $\prfs(k_i, x)$ to get $\rho_{k_i, x}^{\otimes t(\secp)}$, where $t(\secp) = 144\secp d(\secp)^8$.
        \item For $i\in[\secp]$, run $\ext(\rho_{k_i, x}^{\otimes t(\secp)})$ to get $y_i\in\bit^{\ell(\secp)}$.
        \item Let $y = \bigxor_{i=1}^\secp y_i$, output $y\in\bit^{\ell(\secp)}$.
    \end{enumerate}
\end{construction}

\paragraph{Pseudodeterminism.}
We complete the proof by a hybrid argument. For any fixed $x\in\bit^{m(\secp)}$, consider the following hybrids.

\begin{itemize}
    \item $\hybrid_1:$ In the first hybrid, $y$ is generated according to~\Cref{construction:QPRF}.
    \begin{enumerate}
        \item Sample $k\gets\bit^{\secp^2}$.
        \item Parse $k$ as $k_1||\dots||k_\secp$ such that $k_i \in \bit^\secp$ for every $i\in[\secp]$.
        \item For $i\in[\secp]$, run $\prfs(k_i, x)$ $t$ times to get $\rho_{k_i, x}^{\otimes t(\secp)}$.
        \item For $i\in[\secp]$, run $\ext(\rho_{k_i, x}^{\otimes t(\secp)})$ to get $y_i\in\bit^{\ell(\secp)}$.
        \item Let $y = \bigxor_{i=1}^\secp y_i$, output $y\in\bit^{\ell(\secp)}$.
    \end{enumerate}

    \item $\hybrid_2:$ In the second hybrid, the input of $\ext$ is changed to Haar states.
    \begin{enumerate}
        \item For $i\in[\secp]$, sample $\ket{\psi_i}\gets\Haar(\C^{d(\secp)})$.
        \item For $i\in[\secp]$, run $\ext( \ketbra{\psi_i}{\psi_i}^{\otimes t(\secp)} )$ to get $y_i\in\bit^{\ell(\secp)}$.
        \item Let $y = \bigxor_{i=1}^\secp y_i$, output $y\in\bit^{\ell(\secp)}$.
    \end{enumerate}
\end{itemize}

\noindent Similar to proving pseudodeterminism in~\Cref{lemma:QPRG:determinism}, there exists at least $1 - O(\secp^{-c/12})$ fraction of good keys $\cK'_{\secp,x}\subseteq \bit^{\secp}$ such that for any $k \in \cK'_{\secp,x}$, 
\[
\max\limits_{y\in\bit^{\ell(\secp)}} \Pr[y = \ext(\prfs(k, x)^{\otimes t})] 
\ge 1 - O(\secp^{-c/12}),
\]
where the probability is over the randomness of $\ext$. Otherwise, running $\ext$ independently twice on input $\prfs(k, x)^{\otimes t}$ and comparing the output would be an efficient distinguisher that contradicts the security of $\prfs$. The set of good keys for $F$ is defined to be the $\secp$-fold Cartesian product $\cK'_{\secp,x} \times \dots \times \cK'_{\secp,x} \subseteq \bit^{\secp^2}$. Then following the same lines for proving pseudodeterminism in~\Cref{thm:amplification:QPRG}, we can conclude that~\Cref{construction:QPRF} has pseudodeterminism $1 - O(\secp^{-(c/12-1)})$.

\paragraph{Selective Security.}
Before we prove the security, we introduce a simple lemma regarding the indistinguishability of polynomially many samples of Haar states and the output of a PRS generator with \iid uniform seeds. 

\begin{lemma} \label{lem:poly_copy_PRS}
Let $\prs$ be an $n(\cdot)$-PRS. Then for any polynomials $t(\cdot),p(\cdot)$ and any QPT distinguisher $A$, there exists a negligible function $\veps(\cdot)$ such that
\begin{multline*}
\Pr\left[ A\left( \rho^{\otimes t(\secp)}_1, \dots, \rho^{\otimes t(\secp)}_{q(\secp)} \right) = 1 : \substack{k_1,\dots,k_{q(\secp)} \gets \bit^\secp, \\
\rho_1 \gets \prs(k_1),\dots,\rho_{q(\secp)} \gets \prs(k_{q(\secp)})} 
\right] \\
- \Pr\left[ A\left( \ket{\vartheta_1}^{\otimes t(\secp)}, \dots, \ket{\vartheta_{q(\secp)}}^{\otimes t(\secp)} \right) = 1:
\ket{\vartheta_1}, \dots, \ket{\vartheta_{q(\secp)}} \gets \Haar_{n(\secp)}
\right] \leq \veps(\secp).
\end{multline*}
\begin{proof}
Consider the following hybrids $\hybrid_i$ for $i\in\set{0,1,\dots,q}$:
\begin{enumerate}
    \item For $1 \le j \le i$, sample $k_j \gets \bit^\secp$ and run $\prs(k_j)$ $t$ times to get $\rho^{\otimes t}_j$.
    \item For $i+1 \le j \le q$, sample $\ket{\vartheta_j} \gets \Haar_{n(\secp)}$.
    \item Output $\left( \rho^{\otimes t(\secp)}_1, \dots, \rho^{\otimes t(\secp)}_i, \ket{\vartheta_{i+1}}^{\otimes t(\secp)}, \dots, \ket{\vartheta_{q(\secp)}}^{\otimes t(\secp)} \right)$.
\end{enumerate}
It is sufficient to prove the computational indistinguishability between $\hybrid_i$ and $\hybrid_{i+1}$. Note that the only difference is the $(i+1)$-th coordinate of the sample. Suppose there exist polynomials $t(\cdot),q(\cdot)$ and a QPT adversary $A$ that has a non-negligible advantage for distinguishing $\hybrid_i$ from $\hybrid_{i+1}$. Based on $A$, we will construct a reduction $R$ to break the security of $\prs$. The reduction $R$ is defined as follows:
\begin{enumerate}
    \item Input: $\sigma^{\otimes t(\secp)}$ where $\sigma$ is sampled from either $\prs(k)$ with a random $k$ or $\Haar_{n(\secp)}$.
    \item For $1\le j\le i$, sample $k_j \gets \bit^\secp$ and run $\prs(k_j)$ $t$ times to get $\rho^{\otimes t(\secp)}_j$.
    \item For $i+2\le j\le q$, sample a $t(\secp)$-state design $\gamma_j$.\footnote{Note that is not efficient for the security reduction to sample Haar random states in each hybrid. Instead of sampling Haar random states, the security reduction uses $t(\lambda)$-state designs. It is known that $t(\lambda)$-state designs can be efficiently generated (in time polynomial in $t(\lambda)$)~\cite{AE07CCC, DCEL09PRA}.}
    \item Run $A \left( \rho^{\otimes t(\secp)}_1, \dots, \rho^{\otimes t(\secp)}_i, \sigma^{\otimes t(\secp)}, \gamma_{i+2}, \dots, \gamma_q \right)$ and output whatever $A$ outputs.
\end{enumerate}
First, the running time $R$ is polynomial in $\secp$. Moreover, $R$ perfectly simulates the view of $A$ and thus has the same distinguishing advantage as that of $A$. However, this contradicts the security of $\prs$.
\end{proof}

\end{lemma}

\noindent We complete the proof of selective security by hybrid arguments. Here, we outline the structure of the hybrids: $\hybrid_1$ is~\Cref{construction:QPRF}. In $\hybrid_{1.i}$ for $i\in\set{0,1,\dots,\secp}$, we replace the output of $\prfs(k_i,\cdot)$ with independent Haar states. The computational indistinguishability between $\hybrid_{1.i}$ and $\hybrid_{1.i+1}$ follows from the selective security of $\prfs$. Finally, in $\hybrid_2$, all the input quantum states of the extractor $\ext$ are now independent Haar states. It remains to show that the resulting output strings are computationally indistinguishable from independent, uniform strings. Fortunately, we observe that we can recycle the proof of strong security of QPRGs in~\Cref{thm:amplification:QPRG} as follows. In $\hybrid_3$, all the independent Haar states are replaced with the output of $\prs$ with \iid uniform seeds. The computational indistinguishability between $\hybrid_{2}$ and $\hybrid_{3}$ follows from~\Cref{lem:poly_copy_PRS}. However, the description of $\hybrid_{3}$ is exactly the same as running the strong QPRG $G^{\xor s}$ defined in~\Cref{thm:amplification:QPRG} on \iid uniform seeds. Hence, the output strings are computationally indistinguishable from independent, uniform strings due to the strong security of $G^{\xor s}$. Formally, consider the following hybrids:

\begin{itemize}
    \item $\hybrid_1:$ In the first hybrid, the adversary receives input-output pairs according to the selective security experiment and~\Cref{construction:QPRF}.
    \begin{enumerate}
        \item Receive $x_1,\dots,x_{q{(\secp)}} \in \bit^{m(\secp)}$ from the adversary.
        \item For $j\in[\secp]$, sample $k_j\gets\bit^\secp$.
        \item For $i\in[q]$, do the following,
        \begin{enumerate}
            \item For $j \in [\secp]$, run $\prfs(k_j, x_i)$ to get $\rho_{k_j, x_i}^{\otimes t(\secp)}$, and run $\ext(\rho_{k_j, x_i}^{\otimes t(\secp)})$ to get $y_{i,j}\in\bit^{\ell(\secp)}$.
            \item Let $y_i = \bigxor_{j = 1}^\secp y_{i,j}\in\bit^{\ell(\secp)}$.
        \end{enumerate}
        \item Output $(x_1, y_1), \dots, (x_q, y_q)$.
    \end{enumerate}

    \item $\hybrid_{1.a}$ for $a \in\set{0, 1, \dots, \secp}:$
    \begin{enumerate}
        \item Receive $x_1,\dots,x_{q{(\secp)}} \in \bit^{m(\secp)}$ from the adversary.
        \item For every $j \in \set{a+1, \dots, \secp}$, sample $k_j \gets \bit^\secp$.
        \item For $i\in[q]$, do the following,
        \begin{enumerate}
            \item For $j \in \set{1, \dots, a}$, sample $\ket{\psi_{i,j}}\gets\Haar(\C^{d(\secp)})$, and run $\ext( \ketbra{\psi_{i,j}}{\psi_{i,j}}^{\otimes t(\secp)} )$ to get $y_{i,j}\in\bit^{\ell(\secp)}$.
            \item For $j \in \set{a+1, \dots, \secp}$, run $\prfs(k_j, x_i)$ to get $\rho_{k_j, x_i}^{\otimes t(\secp)}$, and run $\ext(\rho_{k_j, x_i}^{\otimes t(\secp)})$ to get $y_{i,j}\in\bit^{\ell(\secp)}$.
            \item Let $y_i = \bigxor_{j = 1}^\secp y_{i,j}\in\bit^{\ell(\secp)}$.
        \end{enumerate}
        \item Output $(x_1, y_1), \dots, (x_q, y_q)$.
    \end{enumerate}
    
    \item $\hybrid_2:$ In the second hybrid, all the input of $\ext$ is changed to Haar random states.
    \begin{enumerate}
        \item Receive $x_1,\dots,x_{q{(\secp)}} \in \bit^{m(\secp)}$ from the adversary.
        \item For $i\in[q]$, do the following,
        \begin{enumerate}
            \item For $j \in[\secp]$, sample $\ket{\psi_{i,j}}\gets\Haar(\C^{d(\secp)})$ and run $\ext( \ketbra{\psi_{i,j}}{\psi_{i,j}}^{\otimes t(\secp)} )$ to get $y_{i,j}\in\bit^{\ell(\secp)}$.
            \item Let $y_i = \bigxor_{j = 1}^\secp y_{i,j}\in\bit^{\ell(\secp)}$.
        \end{enumerate}
        \item Output $(x_1, y_1), \dots, (x_q, y_q)$.
    \end{enumerate}

    \item $\hybrid_{3}:$ In the third hybrid, all the input of $\ext$ is changed to the output of an $n(\secp)$-$\prs$.
    \begin{enumerate}
        \item Receive $x_1,\dots,x_{q{(\secp)}} \in \bit^{m(\secp)}$ from the adversary.
        \item For $i\in[q]$, $j\in[\secp]$, sample $k_{i,j}\gets\bit^\secp$.
        \item For $i\in[q]$, do the following,
        \begin{enumerate}
            \item For $j \in[\secp]$, run $\prs(k_{i,j})$ $t$ times to get $\rho_{k_{i,j}}^{\otimes t(\secp)}$, and run $\ext(\rho_{k_{i,j}}^{\otimes t(\secp)})$ to get $y_{i,j}\in\bit^{\ell(\secp)}$.
            \item Let $y_i = \bigxor_{j = 1}^\secp y_{i,j}\in\bit^{\ell(\secp)}$.
        \end{enumerate}
        \item Output $(x_1, y_1), \dots, (x_q, y_q)$.
    \end{enumerate}

    \item $\hybrid_4:$ In the fourth hybrid, each $y_i$ is the output of the $\sqprg$ defined in~\Cref{thm:amplification:QPRG} with $s(\secp)$ set to be $\secp$, where the underlying $\wqprg$ is defined to be the one in~\Cref{construction:wQPRG}.
    \begin{enumerate}
        \item Receive $x_1,\dots,x_{q{(\secp)}} \in \bit^{m(\secp)}$ from the adversary.
        \item For $i\in[q]$, sample $k_i \gets \bit^{\secp^2}$.
        \item For $i\in[q]$, run $y_i \gets \sqprg(k_i)$.
        \item Output $(x_1, y_1), \dots, (x_q, y_q)$.
    \end{enumerate}
    
    \item $\hybrid_5:$ In the last hybrid, the adversary receives independently and uniformly sampled query-answer pairs.
    \begin{enumerate}
        \item Receive $x_1,\dots,x_{q{(\secp)}} \in \bit^{m(\secp)}$ from the adversary.
        \item Sample $y_1,\dots,y_{q{(\secp)}} \gets \bit^{\ell(\secp)}$.
        \item Output $(x_1, y_1), \dots, (x_q, y_q)$.
    \end{enumerate} 
\end{itemize}

\noindent Hybrids $\hybrid_{1}$ and $\hybrid_{1,0}$ are identically distributed. For $a\in[\secp]$, hybrids $\hybrid_{1,a-1}$ and $\hybrid_{1,a}$ are computationally indistinguishable due to the selective security of $\prfs$. Formally, suppose there exists some $a\in[\secp]$ and a QPT adversary $A$ that can distinguish $\hybrid_{1,a-1}$ from $\hybrid_{1,a}$ with non-negligible advantage. We construct a reduction $R$ that breaks the selective security of $\prfs$ as follows:\footnote{Recall that in the definition of selective security, the indices are required to be pairwise distinct.}
\begin{enumerate}
    \item Receive $x_1,\dots,x_{q{(\secp)}} \in \bit^{m(\secp)}$ from the adversary $A$.
    \item For every $j \in \set{a+1, \dots, \secp}$, sample $k_j \gets \bit^\secp$.
    \item Send $x_1, \dots, x_q$ to the challenger and receive $\sigma^{\otimes t(\secp)}_1, \dots, \sigma^{\otimes t(\secp)}_q$, where each $\sigma_i$ is sampled either from $\prfs(k, x_i)$ with the same uniformly random key or $\Haar(\C^d(\secp))$.
    \item For $i\in[q]$, do the following,
        \begin{enumerate}
            \item For $j \in \set{1, \dots, a-1}$, sample a $t(\secp)$-state design $\gamma_{i,j}$ and run $\ext( \gamma_{i,j} )$ to get $y_{i,j}$.
            \item Run $\ext(\sigma^{\otimes t(\secp)}_i)$ to get $y_{i,a}$.
            \item For $j \in \set{a+1, \dots, \secp}$, run $\prfs(k_j, x_i)$ $t$ times to get $\rho_{k_j, x_i}^{\otimes t(\secp)}$, and run $\ext(\rho_{k_j, x_i}^{\otimes t(\secp)})$ to get $y_{i,j}$.
            \item Let $y_i = \bigxor_{j = 1}^\secp y_{i,j}$.
        \end{enumerate}
        \item Run $A((x_1, y_1), \dots, (x_q, y_q))$ and output whatever $A$ outputs.
\end{enumerate}
The reduction $R$ perfectly simulates $A$'s view. Hence, the distinguishing advantage of $R$ is non-negligible, which leads to a contradiction. 
Hybrids $\hybrid_{1, \secp}$ and $\hybrid_{2}$ are identically distributed. 
The computational indistinguishability of hybrids $\hybrid_{2}$ and $\hybrid_3$ follows from the security of $\prs$. In particular, polynomially many samples from $\prs$ with independent, uniform keys are computationally indistinguishable from \iid Haar states as shown in~\Cref{lem:poly_copy_PRS}. Suppose there exists a QPT adversary $A$ that has a non-negligible advantage for distinguishing $\hybrid_{2}$ from $\hybrid_3$. We construct a reduction $R$ that contradicts~\Cref{lem:poly_copy_PRS} as follows:
\begin{enumerate}
    \item Input: $\set{\sigma^{\otimes t(\secp)}_{i,j}}_{i\in[q],j\in[\secp]}$ where all the samples are sampled either from $\prs(k_{i,j})$ with \iid uniform keys or $\Haar_{n(\secp)}$. Note that the number of samples is $q(\secp)\cdot\secp = \poly(\secp)$.
    \item Receive $x_1,\dots,x_{q{(\secp)}} \in \bit^{m(\secp)}$ from the adversary $A$.
    \item For $i\in[q]$, do the following,
        \begin{enumerate}
            \item For $j \in[\secp]$, run $\ext( \sigma^{\otimes t(\secp)}_{i,j} )$ to get $y_{i,j}$.
            \item Let $y_i = \bigxor_{j = 1}^\secp y_{i,j}$.
        \end{enumerate}
    \item Run $A((x_1, y_1), \dots, (x_q, y_q))$ and output whatever $A$ outputs.
\end{enumerate}
Since $R$ runs in polynomial time and has the same distinguishing advantage as that of $A$, this contradicts~\Cref{lem:poly_copy_PRS}.
Hybrids~$\hybrid_3$ and $\hybrid_4$ are syntactically identical. Finally, the computationally indistinguishability between $\hybrid_{4}$ and $\hybrid_{5}$ follows from the strong security of $\sqprg$. To be more precise, similar to classical secure PRGs, polynomially many samples of either the output of a $\sqprg$ with \iid uniform seeds or \iid uniform bitstrings are computationally indistinguishable. The proof is similar to that of~\Cref{lem:poly_copy_PRS}.
\end{proof}

\section{Applications}
In this section, we present applications based on sQPRGs and selectively secure QPRFs introduced in~\Cref{section:Quantum Pseudorandomess}. One key advantage of using sQPRGs or selectively secure QPRFs as the starting point is that we can build higher-level primitives simply by following the classical construction with a slight modification, and then security will follow from the same reasoning. However, we must address the issue of correctness since sQPRGs and selectively secure QPRFs have only $1 - O(\secp^{-c})$ pseudodeterminism. To resolve the issue, we apply a simple parallel repetition followed by a majority vote to boost correctness at the expense of increased communication complexity and key length.

\subsection{Pseudorandom One-Time Pad (POTP)}
We construct a pseudorandom one-time pad (POTP) scheme with classical communication from sQPRGs. A POTP is an encryption scheme in which the message length is strictly greater than the key length.

\begin{definition}
A \emph{pseudorandom one-time pad (POTP)} for messages of length $\ell(\secp)$ is a triple of QPT algorithms $(\gen, \Enc, \Dec)$ such that the following holds:
\begin{itemize}
\item {\bf{Correctness:}} There exists a negligible function $\veps(\cdot)$ such that for every $\secp\in\N$ and every message $m \in\bit^{\ell(\secp)}$,
\[
\Pr\left[ m = m': \substack{ k \gets \gen(1^\secp), \\
c \gets \Enc(1^\secp,k,m), \\
m' \gets \Dec(1^\secp, k, c)} \right] 
\ge 1 - \veps(\secp).
\]

\item {\bf{Stretch:}} For every $\secp\in\N$, $\ell(\secp) > k(\secp)$, where $k(\secp)$ is the output length of $\gen(1^\secp)$, i.e., the key length.

\item {\bf{Security:}} For any (non-uniform ) QPT adversary $A$, there exists a negligible function $\veps(\cdot)$ such that for every $m_0, m_1 \in\bit^{\ell(\secp)}$ and $\secp\in\N$,
\[
\left| \Pr\left[ A_\secp(c) = 1: \substack{ k \gets \gen(1^\secp),\\ 
c \gets \Enc(1^\secp,k,m_0)} \right] 
- \Pr\left[ A_\secp(c) = 1: \substack{ k \gets \gen(1^\secp),\\
c \gets \Enc(1^\secp,k,m_1)} \right] \right|
\le \veps(\secp).
\]
\end{itemize}
\end{definition}

\noindent Suppose $G_\secp:\bit^\secp\to\bit^{\ell(\secp)}$ is a sQPRG with output length $\ell(\secp) > \secp^2$ and pseudodeterminism $1 - O(\secp^{-c})$ for arbitrary $c > 0$. Consider the following construction:
\begin{construction}[Pseudorandom One-Time Pad (POTP)] \hfill
\label{construction:POTP}
\begin{itemize}
    \item $\gen(1^\secp):$ on input $1^\secp$, outputs a key $k\gets\bit^{\secp^2}$.
    
    \item $\Enc(1^\secp, k, m):$ on input $1^\secp$, a key $k\in\bit^{\secp^2}$ and a message $m \in \bit^{\ell(\secp)}$,
    \begin{enumerate}
        \item Parse $k$ as $k_1||\dots||k_\secp$ such that $k_i \in \bit^\secp$ for every $i\in[\secp]$.
        \item For $i\in[\secp]$, compute $c_i := m \xor G_{\secp}(k_i) \in\bit^{\ell(\secp)}$.
        \item Output $c := c_1||\dots||c_\secp \in \bit^{\secp\ell(\secp)}$.
    \end{enumerate}
    
    \item $\Dec(1^\secp, k, c):$ on input $1^\secp$, a key $k\in\bit^{\secp^2}$ and a ciphertext $c\in\bit^{\secp\ell(\secp)}$,
    \begin{enumerate}
        \item Parse $k$ as $k_1||\dots||k_\secp$ and $c$ as $c_1||\dots||c_\secp$ such that $k_i \in \bit^\secp$ and $c_i \in\bit^{\ell(\secp)}$ for all $i\in[\secp]$.
        \item For $i\in[\secp]$, compute $m_i := c_i\oplus G_\secp(k_i)\in\bit^{\ell(\secp)}$.
        \item Output $m := \MAJ(m_1, \dots, m_\secp)$, where $\MAJ$ denotes the majority function.
    \end{enumerate}
\end{itemize}

\end{construction}

\noindent For the stretch property, the message length of~\Cref{construction:POTP} satisfies $\ell(\secp) > k(\secp) = \secp^2$.

\begin{lemma} \label{lemma:OTP:correctness}
\Cref{construction:POTP} satisfies the correctness property.
\end{lemma}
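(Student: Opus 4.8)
The plan is to reduce the correctness of \Cref{construction:POTP} to the pseudodeterminism of the underlying sQPRG together with a Chernoff bound over the $\secp$ independent ``tracks'' $i \in [\secp]$. Fix a message $m \in \bit^{\ell(\secp)}$, and let $\mu(\secp) = O(\secp^{-c})$ denote the pseudodeterminism error of $G$, with associated good-seed sets $\cK_\secp \subseteq \bit^\secp$. Since $\gen(1^\secp)$ samples $k \gets \bit^{\secp^2}$ and parses it as $k_1 \| \cdots \| k_\secp$, the sub-seeds $k_1, \dots, k_\secp$ are independent and uniform over $\bit^\secp$.

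First I would analyze a single track $i$. In $\Enc$ the ciphertext block is $c_i = m \oplus G_\secp(k_i)$, computed with one fresh execution of $G_\secp$ on $k_i$, and in $\Dec$ the recovered block is $m_i = c_i \oplus G_\secp(k_i)$, computed with an \emph{independent} fresh execution on the same $k_i$; hence $m_i = m$ whenever the two executions output the same string. Conditioned on $k_i \in \cK_\secp$, pseudodeterminism guarantees a fixed string $y_{k_i}$ that each execution outputs with probability at least $1 - \mu(\secp)$, so a union bound over the two executions gives $\Pr[m_i = m \mid k_i \in \cK_\secp] \ge 1 - 2\mu(\secp)$. Combining with $\Pr[k_i \notin \cK_\secp] \le \mu(\secp)$, I get $\Pr[m_i \ne m] \le 3\mu(\secp)$ for every $i$.

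Next I would combine the tracks. Let $X_i \in \{0,1\}$ be the indicator that $m_i \ne m$; by the previous paragraph $\Ex[X_i] \le 3\mu(\secp)$, and the $X_i$ are mutually independent because they depend on disjoint, independently sampled randomness (the seed $k_i$ together with its two executions of $G_\secp$). For $\secp$ large enough that $3\mu(\secp) \le 1/4$ we have $\Ex[\sum_{i} X_i] \le \secp/4$, so \Cref{lemma:Chernoff-Hoeffding} (applied with $n = \secp$ and $\veps = \secp/4$) yields
\[
\Pr\Big[\textstyle\sum_{i=1}^{\secp} X_i \ge \secp/2\Big]
\le \Pr\Big[\,\big|\textstyle\sum_{i} X_i - \Ex[\textstyle\sum_i X_i]\big| > \secp/4 \,\Big]
\le 2 e^{-\secp/8} = \negl(\secp).
\]
Whenever strictly more than half of $m_1, \dots, m_\secp$ equal $m$, the coordinate-wise majority $\MAJ(m_1, \dots, m_\secp)$ also equals $m$, so the decrypted message $m'$ equals $m$; therefore $\Pr[m' \ne m] \le \Pr[\sum_i X_i \ge \secp/2] = \negl(\secp)$. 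Defining $\veps(\secp)$ to be this negligible bound for large $\secp$ (and, say, $1$ for the finitely many remaining values) completes the argument.

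The step I expect to be most delicate is the single-track bound, precisely because $G_\secp$ is re-executed with fresh randomness in both $\Enc$ and $\Dec$: the subtlety is that one needs \emph{both} executions to hit the canonical output $y_{k_i}$, which is why the per-track error is (essentially) $3\mu$ rather than $\mu$. The rest is a routine independent Chernoff estimate; the only other point to verify carefully is that $\MAJ$ is coordinate-wise (equivalently, that a string occurring a strict majority of the time is returned), so that a strict majority of correct blocks $m_i$ forces $m' = m$.
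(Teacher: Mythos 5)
Your proof is correct and follows essentially the same route as the paper: per-track pseudodeterminism bounds plus a Chernoff--Hoeffding estimate plus the majority vote. The only organizational difference is that the paper applies Chernoff--Hoeffding twice (once to get the event $\Good$ that at least $0.8\secp$ sub-seeds are good, and then again conditioned on $\Good$), whereas you fold the ``bad seed'' probability directly into the unconditional per-track failure bound $\Pr[m_i \ne m] \le 3\mu(\secp)$ and apply Chernoff--Hoeffding once to the independent indicators $X_i$; this is a minor streamlining that, if anything, avoids the slight delicacy of conditioning on $\Good$.
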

\begin{proof} 
On input a random key $k = k_1||\dots||k_\secp \in\bit^{\secp^2}$, for every $i\in[\secp]$ we have  $\Pr\left[ k_i \in\cK_\secp \right] \ge 1 - O(\secp^{-c}) > 0.9$ for sufficiently large $\secp$ by the pseudodeterminism of $G$, where $\cK_\secp$ is the set of good seeds of $G$. We denote by $\Good$ the event that at least $0.8\secp$ of the $k_i$'s belong in $\cK_\secp$. For each $k_i$, let $G_E(k_i)$ and $G_D(k_i)$ denote the output of $G$ evaluated by $\Enc$ and $\Dec$, respectively. If $k_i\in\cK_\secp$, then the probability that $G_E(k_i) = G_D(k_i)$ is at least $(1 - O(\secp^{-c}))^2$. Hence, the success probability of the majority vote is at least
\begin{align*}
& \Pr_{k\gets\bit^{\secp^2}}\left[ | \set{i\in[\secp]: G_E(k_i) = G_D(k_i)} | > \secp/2 \right] \\
\ge & \Pr[\Good] \cdot \Pr_{k\gets\bit^{\secp^2}}\left[ | \set{i\in[\secp]: G_E(k_i) = G_D(k_i)} | > \secp/2 \mid \Good \right],
\end{align*}
where the probability is over $k$, $G_E$ and $G_D$.

First, $\Pr[\Good] = 1 - 2^{-\Omega(\secp)}$ by~\Cref{lemma:Chernoff-Hoeffding}. Moreover, conditioned on the event $\Good$ happening, the expected number of $i$'s such that $G_E(k_i) = G_D(k_i)$ is at least $0.8\secp \cdot (1 - O(\secp^{-c}))^2 > 0.7\secp$ for sufficiently large $\secp$. Using~\Cref{lemma:Chernoff-Hoeffding} again, the result of the majority vote is correct with probability at least $1 - \negl(\secp)$.
\end{proof}

\begin{lemma} \label{lemma:OTP:security}
\Cref{construction:POTP} satisfies the security property.
\end{lemma}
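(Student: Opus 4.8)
The plan is to run the textbook one-time-pad hybrid argument, reducing to the \textbf{strong security} of the underlying sQPRG $G_\secp$; note that the pseudodeterminism of $G_\secp$ plays no role for security (it was only needed for correctness in~\Cref{lemma:OTP:correctness}). Fix messages $m_0,m_1\in\bit^{\ell(\secp)}$ and a non-uniform QPT adversary $A$. For $b\in\bit$ and $i\in\set{0,1,\dots,\secp}$, define the hybrid $\hybrid_{b,i}$ that samples $k_{i+1},\dots,k_\secp\gets\bit^\secp$ and $u_1,\dots,u_i\gets\bit^{\ell(\secp)}$, sets $c_j:=u_j$ for $j\le i$ and $c_j:=m_b\xor G_\secp(k_j)$ for $j>i$, and outputs $c_1||\dots||c_\secp$. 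By construction, $\hybrid_{b,0}$ is exactly the distribution of $\Enc(1^\secp,k,m_b)$ with $k\gets\gen(1^\secp)$, while $\hybrid_{0,\secp}$ and $\hybrid_{1,\secp}$ are both the distribution of $\secp$ independent uniform strings in $\bit^{\ell(\secp)}$ and are hence \emph{identical} and message-independent.

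The main step is to show $\hybrid_{b,i-1}\approx_c\hybrid_{b,i}$ for every $b$ and every $i\in[\secp]$. Given a distinguisher for these two hybrids, I would build a reduction $R$ against the strong security of $G_\secp$: on input $y$ (either $G_\secp(k)$ for a uniform $k$, or uniform in $\bit^{\ell(\secp)}$), the reduction samples $u_1,\dots,u_{i-1}\gets\bit^{\ell(\secp)}$ and $k_{i+1},\dots,k_\secp\gets\bit^\secp$, runs $G_\secp$ on each $k_j$ with $j>i$, sets $c_j=u_j$ for $j<i$, $c_i=m_b\xor y$, and $c_j=m_b\xor G_\secp(k_j)$ for $j>i$, and outputs $A_\secp(c_1||\dots||c_\secp)$. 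Since $G_\secp$ is QPT and is invoked only polynomially many times, $R$ is QPT. If $y=G_\secp(k)$ then $A$'s view is exactly $\hybrid_{b,i-1}$; if $y$ is uniform then $m_b\xor y$ is uniform and independent of everything else, so $A$'s view is exactly $\hybrid_{b,i}$. Hence consecutive hybrids differ by at most the negligible $\veps(\secp)$ from strong security. (Equivalently, one may invoke directly the multi-sample variant of sQPRG security used in the proof of~\Cref{theorem:QPRF}: $(G_\secp(k_1),\dots,G_\secp(k_\secp))$ with \iid uniform seeds is computationally indistinguishable from $\secp$ \iid uniform strings, and XOR-ing coordinate-wise by the fixed string $m_b$ preserves indistinguishability.)

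Finally, chaining $\hybrid_{0,0}\approx_c\hybrid_{0,1}\approx_c\cdots\approx_c\hybrid_{0,\secp}\equiv\hybrid_{1,\secp}\approx_c\cdots\approx_c\hybrid_{1,0}$ and applying the triangle inequality over the $2\secp$ transitions yields
\[
\left|\Pr[A_\secp(\Enc(1^\secp,k,m_0))=1]-\Pr[A_\secp(\Enc(1^\secp,k,m_1))=1]\right|\le 2\secp\cdot\veps(\secp)=\negl(\secp),
\]
as required. I do not anticipate a genuine obstacle here; the only points needing care are that $m_0,m_1$ are public in the security game (so $R$ may hard-code $m_b$), that the strong-security statement already quantifies over the internal randomness of the merely pseudodeterministic $G_\secp$ (so feeding $R$'s challenge into coordinate $i$ with a fresh uniform seed for every other coordinate is precisely the right hybrid), and that XOR-ing a uniform string by a fixed string is again uniform, which is what collapses $\hybrid_{b,\secp}$ to a message-independent distribution.
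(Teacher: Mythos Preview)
Your proof is correct and follows essentially the same hybrid argument as the paper: replace the $G_\secp(k_i)$'s one by one with uniform strings via the strong security of $G_\secp$, observe that the middle hybrid is message-independent, and chain through $2\secp$ transitions. The only cosmetic difference is that the paper writes the intermediate hybrids as $m_b\xor k_i$ with $k_i\gets\bit^{\ell(\secp)}$ rather than directly as uniform $u_i$, which is of course the same distribution (a point you explicitly note).
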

\begin{proof}
The security follows from the security of $G$ and a hybrid argument. In particular, consider the following hybrids. Fix $m_0, m_1$.
\begin{itemize}
    \item $\hybrid_0:$ Run $k \gets \gen(1^\secp)$, Run $c\gets\Enc(1^\secp,k,m_0)$. Output $c$.

    \item $\hybrid_{0.a}$ for $a\in\set{0,1,\dots,\secp}:$
    \begin{enumerate}
        \item For $i\in\set{1, \dots, a}$, sample $k_i \gets \bit^{\ell(\secp)}$.
        \item For $i\in\set{a+1, \dots, \secp}$, sample $k_i \gets \bit^{\secp}$.
        \item Output $c = m_0 \xor k_1 || \dots || m_0 \xor k_a 
        || m_0 \xor G_\secp(k_{a+1}) || \dots || m_0 \xor G_\secp(k_\secp)$.
    \end{enumerate}
    
    \item $\hybrid_1:$ For $i\in[\secp]$, sample $c_i \gets \bit^{\ell(\secp)}$.
    Output $c = c_1||\dots||c_\secp$.

    \item $\hybrid_{1.b}$ for $b\in\set{0,1,\dots,\secp}:$
    \begin{enumerate}
        \item For $i\in\set{1,\dots,b}$, sample $k_i \gets \bit^{\secp}$.
        \item For $i\in\set{b+1, \dots, \secp}$, sample $k_i \gets \bit^{\ell(\secp)}$.
        \item Output $c = m_1 \xor G_\secp(k_1) || \dots || m_1 \xor G_\secp(k_b) 
        || m_1 \xor k_{b+1} || \dots || m_1 \xor k_\secp$.
    \end{enumerate}
    
    \item $\hybrid_2:$ Run $k\gets\gen(1^\secp)$. Run $c\gets\Enc(1^\secp,k,m_1)$. Output $c$.
\end{itemize}
First, hybrids $\hybrid_0$ and $\hybrid_{0.0}$ are identically distributed. For $a\in[\secp]$, hybrids $\hybrid_{0.a-1}$ and $\hybrid_{0.a}$ are computational indistinguishabile due to the security of $G$. Specifically, suppose there exist some $a\in[\secp]$ and a QPT adversary $A$ that has a non-negligible advantage for distinguishing $\hybrid_{0.a-1}$ from $\hybrid_{0.a}$. Then consider the following reduction $R$ that breaks the strong security of $G$:
\begin{enumerate}
    \item Input: $y\in\bit^{\ell(\secp)}$ that is either sampled from $G_\secp(k)$ with a uniform seed $k$ or a unifrom $\ell(\secp)$-bit string.
    \item For $i\in\set{1, \dots, a-1}$, sample $k_i \gets \bit^{\ell(\secp)}$.
    \item For $i\in\set{a+1, \dots, \secp}$, sample $k_i \gets \bit^{\secp}$.
    \item Output $c = m_0 \xor k_1 || \dots || m_0 \xor k_{a-1} || m_0 \xor y
        || m_0 \xor G_\secp(k_{a+1}) || \dots || m_0 \xor G_\secp(k_\secp)$.
\end{enumerate}
As $R$ runs in polynomial time and has the same distinguishing advantage as that of $A$, it contradicts the strong security of $G$. Hybrids $\hybrid_{0.\secp}$ and $\hybrid_1$ are identically distributed. Hybrids $\hybrid_1$ and $\hybrid_{1.0}$ are identically distributed. Similarly, for $b\in[\secp]$, hybrids $\hybrid_{1.b-1}$ and $\hybrid_{1.b}$ are computational indistinguishabile due to the security of $G$. Finally, hybrids $\hybrid_{1.\secp}$ and $\hybrid_2$ are identically distributed.
\end{proof}

\subsection{Quantum Commitment with Classical Communication}
Next, we construct a (bit) commitment scheme with classical communication from sQPRGs. We follow the definition in~\cite{AQY21, AGQY22} closely.
\begin{definition}
A bit commitment scheme is given by a pair of (uniform) QPT algorithms $(C,R)$, where $C = \set{C_\secp}_{\secp\in\N}$ is called the \emph{committer} and $R = \set{R_\secp}_{\secp\in\N}$ is called the \emph{receiver}. There are two phases in a commitment scheme: a commit phase and a reveal phase. 
\begin{itemize}
    \item {\bf{Commit phase:}} In the (possibly interactive) commitment phase between $C_\secp$ and $R_\secp$, the committer $C_\secp$ commits to a bit $b$. The communication between $C_{\secp}$ and $R_{\secp}$ is classical.\footnote{Alternately, both the committer and the receiver measure every message they receive in the computational basis.} We denote the execution of the commit phase to be $\sigma_{CR} \gets \mathsf{Commit} \braket{C_\secp(b), R_\secp}$, where $\sigma_{CR}$ is the tensor product of $C_\secp$'s state and $R_\secp$'s state after the commit phase.
    
    \item {\bf{Reveal phase:}} In the reveal phase $C_\secp$ interacts with $R_\secp$ and the output is a trit $\mu\in\set{0,1,\bot}$ indicating the receiver’s output bit or a rejection flag. We denote an execution of the reveal phase where the committer and receiver start with the joint state $\sigma_{CR}$ by $\mu \gets \mathsf{Reveal} \braket{C_\secp(b), R_\secp, \sigma_{CR}}$.
\end{itemize}

\noindent We anticipate the commitment scheme to satisfy the following properties:
\begin{itemize}
    \item {\bf{Correctness:}} We say that a commitment scheme $(C,R)$ satisfies correctness if
    \[
    \Pr\left[b' = b: \substack{ \sigma_{CR} \gets \mathsf{Commit}\braket{C_\secp(b), R_\secp}, \\
    b' \gets \mathsf{Reveal}\braket{C_\secp(b), R_\secp, \sigma_{CR}}} \right] 
    \ge 1 - \veps(\secp),
    \]
    where $\veps(\cdot)$ is a negligible function.
    
    \item {\bf{Computational Hiding:}}
    We say that a commitment scheme $(C,R)$ satisfies computational hiding if for any malicious QPT receiver $\set{R_\secp^*}_{\secp\in\N}$, for any QPT distinguisher $\set{D_\secp}_{\secp\in\N}$, the following holds:
    \[
    \left| \Pr\limits_{(\tau,\sigma_{CR^*}) \gets \mathsf{Commit}\braket{C_\secp(0),R_\secp^*}}[D_\secp(\sigma_{R^*}) = 1]
    - \Pr\limits_{(\tau,\sigma_{CR^*}) \gets \mathsf{Commit}\braket{C_\secp(1),R_\secp^*}}[D_\secp(\sigma_{R^*}) = 1] \right|
    \le \veps(\secp),
    \]
    where $\veps(\cdot)$ is a negligible function and $\tau$ is the transcript in the commitment phase.
    
    \item {\bf{Statistical Binding:}}
     We say that a commitment scheme $(C,R)$ satisfies statistical binding if for any malicious computational unbounded committer $\set{C_\secp^*}_{\secp\in\N}$, the following holds:
    \[
    \Pr \left[ \mathsf{Reveal}\braket{C^*, R_\secp, \sigma_{C^*R}} = 0
    \land \mathsf{Reveal}\braket{C^*, R_\secp, \sigma_{C^*R}} = 1:
    \substack{
        (\tau,\sigma_{C^*R}) \gets \mathsf{Commit}\braket{C^*,R_\secp}
    } \right]
    \le \veps(\secp).
    \]
    where $\veps(\cdot)$ is a negligible function and $\tau$ is the transcript in the commitment phase.
\end{itemize}
\end{definition}

\noindent Suppose $G_\secp:\bit^\secp\to\bit^{\ell(\secp)}$ is a sQPRG with output length $\ell(\secp) = 3\secp$ and pseudodeterminism $1 - O(\secp^{-c})$ for arbitrary $c > 0$. Consider the following construction, which is adapted from Naor's commitment scheme~\cite{Naor89CRTPYO}: 
\begin{construction}[Quantum Bit Commitment with Classical Communication] \hfill
\label{construction:commitment}
\begin{itemize}
    \item {\bf{Commit phase:}}
    \begin{enumerate}
        \item The receiver $R$ samples $r \gets \bit^{3\secp}$ and sends it to the committer $C$.
        \item For $i\in[\secp]$, the committer $C$ samples $k_i \gets \bit^\secp$.
        \item The committer $C$ on input $b\in\bit$, outputs 
        \[
        \Com = \begin{cases}
            G(k_1) || \dots || G(k_\secp) & \text{if } b=0 \\
            G(k_1) \xor r || \dots || G(k_\secp) \xor r & \text{if } b=1. \\
        \end{cases}
        \]
    \end{enumerate}
    
    \item {\bf{Reveal phase:}}
    \begin{enumerate}
        \item The committer $C$ sends the decommitment message $(b, k_1, \dots, k_\secp)$ to the receiver $R$.
        \item The receiver $R$ parses $\Com$ as $y_1||\dots||y_\secp$ where $y_i\in\bit^{3\secp}$ for all $i\in[\secp]$.
        \item For $i\in[\secp]$, the receiver $R$ checks whether $y_i = G(k_i)$ if $b = 0$; checks whether $y_i = G(k_i)\xor r$ if $b = 1$. Let $N\in\set{0,1,\dots,\secp}$ be the number of occurrences where the equality holds
        \item If $N \ge 2\secp/3$, the receiver $R$ outputs $b$; otherwise outputs $\bot$.
    \end{enumerate}
\end{itemize}

\end{construction}

\begin{lemma}
\Cref{construction:commitment} satisfies the correctness property.
\end{lemma}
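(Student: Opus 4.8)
The plan is to mirror the correctness argument for the pseudorandom one-time pad in \Cref{lemma:OTP:correctness}. First I would observe that in an honest execution, for each block index $i\in[\secp]$ the receiver's $i$-th equality check succeeds exactly when the committer's evaluation of $G(k_i)$ in the commit phase agrees with the receiver's independent evaluation of $G(k_i)$ in the reveal phase: for $b=0$ this is immediate, and for $b=1$ the check $y_i = G(k_i)\xor r$ holds iff the two evaluations of $G(k_i)$ coincide (since $r$ is fixed and shared). So correctness reduces to lower bounding the number $N$ of indices for which the two independent runs of $G(k_i)$ return the same string, and showing $N \ge 2\secp/3$ except with negligible probability.

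Next I would set up the counting argument exactly as in the POTP proof. Let $\cK_\secp$ be the set of good seeds of $G$ and $\mu(\secp)=O(\secp^{-c})$ its pseudodeterminism error; since the $k_i$ are sampled independently and uniformly, $\Pr[k_i\in\cK_\secp]\ge 1-\mu(\secp)>0.9$ for sufficiently large $\secp$. Let $\Good$ be the event that at least $0.8\secp$ of the $\secp$ seeds lie in $\cK_\secp$; by~\Cref{lemma:Chernoff-Hoeffding} we have $\Pr[\Good]\ge 1-2^{-\Omega(\secp)}$. Conditioned on $\Good$, for each good index the two independent evaluations of $G(k_i)$ both hit the most-likely output except with probability at most $\mu(\secp)$ each, so they agree with probability at least $(1-\mu(\secp))^2$; hence $\E[N\mid\Good]\ge 0.8\secp\,(1-\mu(\secp))^2 > 0.7\secp$ for sufficiently large $\secp$. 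A second application of~\Cref{lemma:Chernoff-Hoeffding} to the indicator variables of the $\ge 0.8\secp$ good indices, using the constant gap between $0.7\secp$ and the acceptance threshold $2\secp/3$, gives $N\ge 2\secp/3$ except with probability $2^{-\Omega(\secp)}$. Combining this with $\Pr[\Good]\ge 1-2^{-\Omega(\secp)}$ shows the receiver outputs $b$ with probability $1-\negl(\secp)$.

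I do not expect a genuine obstacle here; the proof is routine given the pseudodeterminism guarantee and the Chernoff--Hoeffding bound. The only points needing (minor) care are verifying that the commit-phase and reveal-phase evaluations of $G(k_i)$ are truly independent (each is a fresh run of the QPT generator), so that the Chernoff bound applies to the relevant indicator variables, and fixing the constants so that the conditional expectation of the agreement count stays bounded away from the $2/3$ threshold --- this is exactly what upgrades the failure probability from inverse-polynomial to exponentially small, yielding a negligible correctness error.
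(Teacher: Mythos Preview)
Your proposal is correct and takes essentially the same approach as the paper, which simply refers back to the proof of \Cref{lemma:OTP:correctness} and invokes pseudodeterminism plus \Cref{lemma:Chernoff-Hoeffding}. You have correctly adapted the acceptance threshold from $\secp/2$ to $2\secp/3$ and noted that the constant gap between $0.7\secp$ and $2\secp/3$ is what makes the second Chernoff bound go through.
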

\begin{proof} 
The proof is similar to that of~\Cref{lemma:OTP:correctness}. In particular, the correctness follows from the pseudodeterminism of $G$ and~\Cref{lemma:Chernoff-Hoeffding}.
\end{proof}

\begin{lemma}
\Cref{construction:commitment} satisfies the computational hiding property.
\end{lemma}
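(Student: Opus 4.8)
The plan is to reduce computational hiding directly to the strong security of the underlying $\sqprg$ $G$ via a standard hybrid argument over the $\secp$ blocks of the commitment string, together with the elementary observation that for a fixed string $r$ the map $u \mapsto u \oplus r$ is a bijection on $\bit^{3\secp}$ and hence sends the uniform distribution to itself. Note that the pseudodeterminism of $G$ plays no role here; only the pseudorandomness of its output matters. The crucial point is that the malicious receiver $R^*_\secp$ chooses $r$ \emph{before} seeing the committer's message, and $r$ is statistically independent of the committer's seeds $k_1, \dots, k_\secp$; therefore the security of $G$ continues to hold conditioned on any value of $r$, and a reduction simulating $R^*_\secp$ will have $r$ in hand at the moment it needs to embed its challenge.

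Concretely, I would define the following hybrid experiments, each producing a receiver state $\sigma_{R^*}$ on which the distinguisher $D_\secp$ is run. Let $\hybrid_0$ be the real commit phase with $C_\secp(0)$. For $a \in \set{0,1,\dots,\secp}$, let $\hybrid_{0.a}$ be the experiment in which $R^*_\secp$ sends $r$, then receives $u_1 || \cdots || u_a || G(k_{a+1}) || \cdots || G(k_\secp)$, where $u_1,\dots,u_a \gets \bit^{3\secp}$ are uniform and $k_{a+1},\dots,k_\secp \gets \bit^\secp$; thus $\hybrid_{0.0} = \hybrid_0$, and in $\hybrid_{0.\secp}$ the committer's message is a uniformly random $3\secp^2$-bit string independent of $b$ and $r$. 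Symmetrically, for $b \in \set{0,1,\dots,\secp}$ let $\hybrid_{1.b}$ be the experiment in which $R^*_\secp$ receives $(G(k_1)\oplus r) || \cdots || (G(k_b)\oplus r) || u_{b+1} || \cdots || u_\secp$; since $u \oplus r$ is uniform whenever $u$ is, $\hybrid_{1.0}$ is distributed exactly as $\hybrid_{0.\secp}$, and $\hybrid_{1.\secp}$ is the real commit phase with $C_\secp(1)$. It therefore suffices to show that each consecutive pair $(\hybrid_{0.a-1},\hybrid_{0.a})$ and $(\hybrid_{1.b-1},\hybrid_{1.b})$ is computationally indistinguishable to $D_\secp$.

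For the step $\hybrid_{0.a-1} \to \hybrid_{0.a}$ (the other family is identical modulo XORing the embedded block with $r$), suppose some QPT $D_\secp$ together with $R^*_\secp$ distinguishes them with non-negligible advantage; I would build a reduction $\mathcal{B}$ against the strong security of $G$. On input a challenge $y \in \bit^{\ell(\secp)}$ that is either $G_\secp(k)$ for a uniform seed $k$ or a uniform string, $\mathcal{B}$ runs $R^*_\secp$ to obtain $r$ and $R^*_\secp$'s internal state, samples $u_1,\dots,u_{a-1} \gets \bit^{3\secp}$ and $k_{a+1},\dots,k_\secp \gets \bit^\secp$, sends $u_1 || \cdots || u_{a-1} || y || G(k_{a+1}) || \cdots || G(k_\secp)$ to $R^*_\secp$, obtains the resulting state $\sigma_{R^*}$, and outputs $D_\secp(\sigma_{R^*})$. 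When $y = G_\secp(k)$ this perfectly simulates $\hybrid_{0.a-1}$, and when $y$ is uniform it perfectly simulates $\hybrid_{0.a}$, so $\mathcal{B}$ inherits the non-negligible advantage, contradicting the strong security of $G$. (As $R^*_\secp$ and $D_\secp$ may be non-uniform with quantum advice, $\mathcal{B}$ is likewise non-uniform with that advice, which is permitted.) Chaining the $2\secp+1$ hybrids and applying the triangle inequality bounds $\big|\Pr[D_\secp(\sigma_{R^*}) = 1 \text{ in } \hybrid_0] - \Pr[D_\secp(\sigma_{R^*}) = 1 \text{ in } \hybrid_{1.\secp}]\big|$ by a negligible function, which is exactly computational hiding.

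I do not anticipate a genuine obstacle; the argument is routine. The only points requiring care are (i) ensuring the reduction embeds its challenge \emph{after} interacting with $R^*_\secp$ to learn $r$, so that in the $b=1$ family it can form $y \oplus r$; and (ii) the observation that the ``all-blocks-uniform'' midpoint $\hybrid_{0.\secp} = \hybrid_{1.0}$ of the two hybrid families coincides, which is what bridges $b=0$ and $b=1$ without ever invoking the structure of $r$. The subtlety that could trip one up is forgetting that $G$'s output is itself randomized (pseudodeterminism is one-sided), but strong security is stated for exactly this randomized output, so no additional work is needed.
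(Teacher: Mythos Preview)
Your proposal is correct and follows essentially the same hybrid route as the paper: replace the $\secp$ blocks one by one with uniform strings, use the bijection $u\mapsto u\oplus r$ to bridge the $b=0$ and $b=1$ midpoints, and reduce each step to the strong security of $G$. If anything, you are more careful than the paper in spelling out that the reduction must first run $R^*_\secp$ to obtain $r$ (the paper simply says ``for any fixed $r$''), which is the right level of rigor for a malicious receiver.
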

\begin{proof}
The proof is similar to the proof of~\Cref{lemma:OTP:security}. Consider the following hybrids for any fixed $r\in\bit^{3\secp}$.
\begin{itemize}
    \item $\hybrid_0:$ For $i\in[\secp]$, sample $k_i \gets \bit^\secp$. Output $\Com = G(k_1)||\dots||G(k_\secp)$;
    
    \item $\hybrid_{0.a}$ for $a\in\set{0,1,\dots,\secp}:$
    \begin{enumerate}
        \item For $i\in\set{1, \dots, a}$, sample $k_i \gets \bit^{3\secp}$.
        \item For $i\in\set{a+1, \dots, \secp}$, sample $k_i \gets \bit^\secp$.
        \item Output $\Com = k_1 || \dots || k_a || G(k_{a+1}) 
        || \dots || G(k_\secp)$.
    \end{enumerate}
    
    \item $\hybrid_1:$ For $i\in[\secp]$, sample $k_i \gets \bit^{3\secp}$. Output $\Com = k_1 || \dots || k_\secp$.

    \item $\hybrid_2:$ For $i\in[\secp]$, sample $k_i \gets \bit^{3\secp}$. Output $\Com = k_1\xor r || \dots || k_\secp \xor r$.
    
    \item $\hybrid_{2.b}$ for $b \in\set{0,1,\dots,\secp}:$
    \begin{enumerate}
        \item For $i\in\set{1, \dots, b}$, sample $k_i \gets \bit^\secp$.
        \item For $i\in\set{b+1, \dots, \secp}$, sample $k_i \gets \bit^{3\secp}$.
        \item Output $c = G(k_1)\xor r|| \dots || G(k_b) \xor r
        || k_{b+1}\xor r || \dots || k_\secp\xor r$.
    \end{enumerate}
    
    \item $\hybrid_3:$ For $i\in[\secp]$, sample $k_i \gets \bit^\secp$. Output $G(k_1)\xor r||\dots||G(k_\secp)\xor r$;
\end{itemize}
Hybrids $\hybrid_0$ and $\hybrid_{0.0}$ are identically distributed. For $a\in[\secp]$, following the same lines in~\Cref{lemma:OTP:security}, hybrids $\hybrid_{0.a-1}$ and $\hybrid_{0.a}$ are computational indistinguishabile due to the strong security of $G$. Hybrids $\hybrid_{0.\secp}$ and $\hybrid_1$ are identically distributed. Hybrids $\hybrid_1$ and $\hybrid_{2}$ are identically distributed. Hybrids $\hybrid_2$ and $\hybrid_{2.0}$ are identically distributed. Similarly, for $b\in[\secp]$, hybrids $\hybrid_{2.b-1}$ and $\hybrid_{2.b}$ are computational indistinguishabile due to the strong security of $G$. Finally, hybrids $\hybrid_{2.\secp}$ and $\hybrid_3$ identically distributed.
\end{proof}

\begin{lemma}
\Cref{construction:commitment} satisfies the statistical binding property.
\end{lemma}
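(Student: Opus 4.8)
The plan is to mimic Naor's binding proof, the only genuine novelty being that $G$ is merely pseudodeterministic, so there is no well-defined string ``$G(k)$'' to feed into the usual counting argument. First I would reduce statistical binding to a combinatorial statement about the receiver's challenge. Following the analysis of quantum commitment binding in~\cite{AQY21, AGQY22}, and noting that an unbounded committer's entire relevant information after the commit phase is the classical pair $(\Com, r)$ (so its residual register gives it nothing it cannot recompute), it suffices to show: with overwhelming probability over $r \gets \bit^{3\secp}$, there is no commitment string $\Com = y_1\|\cdots\|y_\secp$ (each $y_i\in\bit^{3\secp}$) together with opening seeds $(a_i)_{i\in[\secp]}$ and $(b_i)_{i\in[\secp]}$ such that the receiver accepts the $0$-opening $(0,(a_i))$ with probability $\ge \eps$ \emph{and} accepts the $1$-opening $(1,(b_i))$ with probability $\ge\eps$, where it will suffice to take $\eps=\eps(\secp):=\exp(-\secp^{1/3})=\negl(\secp)$. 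Writing $q(k,w):=\Pr[G(k)=w]$ over the internal randomness of $G$, the receiver's acceptance probability for $(0,(a_i))$ equals $\Pr[\,|\{i:G(a_i)=y_i\}|\ge 2\secp/3\,]$, a threshold event for a sum of $\secp$ independent Bernoulli indicators with means $q(a_i,y_i)$; similarly for $(1,(b_i))$ with means $q(b_i,y_i\oplus r)$.

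The first technical claim is a Chernoff--Hoeffding bound (playing the role of the inner-probability claim): if $\Pr[\,|\{i:G(a_i)=y_i\}|\ge 2\secp/3\,]\ge\eps$ then $\sum_{i} q(a_i,y_i)\ge \tfrac23\secp-\secp^{2/3}$, since otherwise the indicator sum exceeds its own mean by at least $\secp^{2/3}$ with probability at most $\exp(-2\secp^{1/3})<\eps$. Fixing the constant $c_0=1/4$ and setting $S_0:=\{i:q(a_i,y_i)\ge c_0\}$, averaging against $q(a_i,y_i)\le 1$ forces $|S_0|\ge \tfrac59\secp-O(\secp^{2/3})>\secp/2$ for large $\secp$; the same holds for $S_1:=\{i:q(b_i,y_i\oplus r)\ge c_0\}$ under the $1$-opening hypothesis. (This is exactly where the parallel-repetition threshold $2\secp/3>\secp/2$ is used; a threshold of $\secp/2$ would not give $|S_0|>\secp/2$.) Since $|S_0|,|S_1|>\secp/2$, we get $S_0\cap S_1\ne\emptyset$, so there is an index $i^*$ with $q(a_{i^*},y_{i^*})\ge c_0$ and $q(b_{i^*},y_{i^*}\oplus r)\ge c_0$. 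Writing $\mathsf{L}(k):=\{w:q(k,w)\ge c_0\}$ for the ``likely outputs'' of the seed $k$, this says $y_{i^*}\in\mathsf{L}(a_{i^*})$ and $y_{i^*}\oplus r\in\mathsf{L}(b_{i^*})$, hence $r\in\mathsf{L}(a_{i^*})\oplus\mathsf{L}(b_{i^*})$.

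The second technical claim is the counting step (playing the role of the collision claim): since $\sum_w q(k,w)=1$, every seed satisfies $|\mathsf{L}(k)|\le 1/c_0=4$, so the set $\cB:=\bigcup_{a,b\in\bit^\secp}\big(\mathsf{L}(a)\oplus\mathsf{L}(b)\big)$ has size at most $16\cdot 2^{2\secp}$, an $O(2^{-\secp})$ fraction of $\bit^{3\secp}$. By the previous paragraph, any $r$ admitting a simultaneous $0/1$ opening as above lies in $\cB$, so $\Pr_{r\gets\bit^{3\secp}}[r\in\cB]=O(2^{-\secp})$, and conditioned on $r\notin\cB$ the product of the two maximal opening probabilities is at most $\eps$ for every $\Com$ (one factor being $<\eps$, both being $\le 1$). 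Combining the two cases, the binding error is at most $O(2^{-\secp})+\eps=\negl(\secp)$. It is worth noting that neither the pseudorandomness nor the pseudodeterminism of $G$ is needed for binding in \Cref{construction:commitment} --- only its output length $3\secp$ and the threshold $2\secp/3$ matter.

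I expect the main obstacle to be the reduction in the first paragraph: pinning down the right operational meaning of ``$\mathsf{Reveal}$ outputs $0$ and $1$'' against an unbounded committer, and arguing that it suffices to rule out both opening probabilities being simultaneously \emph{non-negligible} rather than, say, both being $1/2$. The latter is handled by letting the threshold $\eps$ be an arbitrarily slowly vanishing function --- here $\exp(-\secp^{1/3})$ --- for which the Hoeffding estimate in the first claim still leaves an $o(\secp)$ slack between $\tfrac23\secp$ and $\sum_i q(a_i,y_i)$; everything downstream is elementary probability and counting.
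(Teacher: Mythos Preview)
Your proof is correct and takes a genuinely different route from the paper's. Both arguments reduce to showing that an exponentially small set of challenges $r\in\bit^{3\secp}$ is ``bad'', but they carve out that set differently and use different technical claims.

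The paper associates to each seed the \emph{single} most-likely output $F(k):=\argmax_y\Pr[G(k)=y]$, sets $\Bad:=\{r:\exists\,k,k',\ F(k)\oplus F(k')=r\}$, and proves two claims: an inner-product lemma (two probability vectors with distinct argmax have inner product $\le 1/2$), and the collision bound $\Pr[G(k)\oplus G(k')=r]\le 1/2$ for $r\notin\Bad$. It then finishes with a Chernoff-type tail estimate on the per-coordinate collision events. Your approach instead associates to each seed its (at most four-element) set of $\ge 1/4$-likely outputs $\mathsf{L}(k)$, takes $\cB:=\bigcup_{a,b}\mathsf{L}(a)\oplus\mathsf{L}(b)$, and runs the implication in the other direction: a reverse-Hoeffding step plus averaging shows that each non-negligible opening forces $>\secp/2$ coordinates with individually high probability, pigeonhole then produces a common coordinate, and this places $r$ in $\cB$.

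What each buys: the paper's approach is a more direct adaptation of Naor's proof (the argmax plays the role of the deterministic output), but it requires the inner-product lemma and its final assembly step is somewhat informal as written. Your approach is more elementary --- no inner-product lemma --- and makes the final step a clean counting bound; it also makes explicit why the acceptance threshold must exceed $1/2$, and that neither pseudorandomness nor pseudodeterminism of $G$ is used in binding. One minor remark: where you bound the joint acceptance by ``the product of the two maximal opening probabilities'', you are implicitly using that the receiver's two verifications use independent randomness; the bound $\Pr[E_0\land E_1]\le\min(\Pr[E_0],\Pr[E_1])<\eps$ holds regardless, so the conclusion is unaffected.
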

\begin{proof}
To prove the statistical binding property, we introduce the following definition.
For every $k\in\bit^\secp$, define $F(k):=\argmax_{y\in\bit^{3\secp}} \Pr[G(k) = y]$ (if it is not unique, then we pick the lexicographically first one). Let the set of ``bad randomness'' $\Bad\subseteq\bit^{3\secp}$ be 
\[
\Bad := \set{r\in\bit^{3\secp} \mid \exists k, k'\in\bit^\secp \st F(k) \oplus F(k') = r}.
\]
Then it is easy to see that $\Pr[r\in\Bad:r \gets\bit^{3\secp}] \le 2^\secp\cdot2^\secp/2^{3\secp} = 2^{-\secp}$. Now, the analysis starts to deviate from the proof of the classical case. Classically, if $r \notin \Bad$, then it is impossible for the malicious committer to succeed. However, since now $G$ is \emph{pseudodeterministic}, there is still a chance that $G(k)\xor G(k') = r$ for some $k, k'$ even if $r\notin\Bad$. Fortunately, according to the definition of the set $\Bad$, the XOR of the \emph{most likely} output of $G(k)$ and $G(k')$ for any $k, k'$ must not equal $r$. Below, we will show that the probability that $G(k)\xor G(k') = r$ conditioned on $r\notin\Bad$ is at most $1/2$ for any $k,k'$. 

First, we state a basic fact regarding the inner product of two probability vectors that have distinct mostly likely outcomes.

\begin{claim} \label{claim:prob:inner}
Let $p,q\in\R^n$ be two probability vectors such that $\argmax_{i\in[n]}p_i \neq \argmax_{i\in[n]}q_i$ (if it is not unique, then we pick the lexicographically first one). Then $\sum_{i\in[n]}p_iq_i \le 1/2$.
\end{claim}
\begin{proof}
Without loss of generality, we assume that the coordinates of $p$ are sorted in non-increasing order, i.e., $1\ge p_1\ge p_2\ge \dots \ge p_n \ge 0$. Since $\argmax_{i\in[n]}p_i \neq \argmax_{i\in[n]}q_i$, we have $q_1$ to not be the maximum coordinate.  
\par We claim that there exists $(q'_1,\ldots,q'_n)$ such that the following holds: 
\begin{itemize}
    \item $\forall i \geq 3$, $q'_i=0$,
    \item $\sum_{i \in [n]} p_i q_i \leq \sum_{i \in [n]} p'_i q'_i$.  
    \item $q'_1 \leq q'_2$.  
\end{itemize}
Suppose we instantiate $q'_1=q_1$ and $q'_2=\sum_{i \geq 2} q_i$ then the above three bullet points hold. 
\par Now, we have the following: 
\begin{eqnarray*}
\sum_{i \in [n]} p_i q'_i & = & p_1 q'_1 + p_2 q'_2.
\end{eqnarray*}
Since $q'_1 \leq q'_2$, $q'_1 + q'_2 = 1$ and $p_1 \ge p_2$, the above expression is maximized when $q'_1=q'_2 = \frac{1}{2}$. Thus, $\sum_{i \in [n]} p_i q'_i \leq \frac{1}{2}(p_1 + p_2)\leq \frac{1}{2}$. 
This further implies that $\sum_{i \in [n]} p_i q_i \leq \frac{1}{2}$. 
\end{proof}

\begin{claim} \label{claim:collision}
For every $r\notin\Bad$ and every $k, k'\in\bit^\secp$, $\Pr[G(k) \xor G(k') = r]\le 1/2$, where the probability is over the randomness of $G$.
\end{claim}
\begin{proof}
Fix $k,k'$ and $r\notin\Bad$. First, recall that $r\notin\Bad$ means $F(k) \oplus F(k') \neq r$. The probability can be written as
\begin{align*}
\Pr[G(k)\xor G(k') = r] = \sum_{z\in\bit^{3\secp}} \Pr[G(k) = z]\Pr[G(k') = z\xor r].
\end{align*}
Now, we define the probability vectors $u, u'\in\R^{2^{3\secp}}$ for the random variables $G(k), G(k')$ respectively. More precisely, the coordinate of $u$ is defined to be $u_y := \Pr[G(k) = y]$; $u'$ is defined similarly. We use the above notation to rewrite the quantity as follows.
\begin{align*}
& \sum_{y\in\bit^{3\secp}} \Pr[G(k) = y] \Pr[G(k') = y\xor r] 
= \sum_{y\in\bit^{3\secp}} u_y \cdot u'_{y \xor r}.
\end{align*}
Then we set $p$ and $q$ in~\Cref{claim:prob:inner} to be the vertors that satisfy $p_y = u_y$ and $q_y = u_{y\xor r}$ for all $y\in\bit^{3\secp}$. Let $y_{\max}$, $y'_{\max}$ be the most likely outcome of $G(k)$, $G(k')$ respectively. Given that $r \notin \Bad$, we have $y_{\max}\xor y'_{\max} \neq r$. Hence, $u$ and $u'$ satisfy the condition in~\Cref{claim:prob:inner}. Finally, by~\Cref{claim:prob:inner}, we can conclude that $\Pr[G(k)\xor G(k') = r] \le 1/2$.
\end{proof}

\noindent To prove statistical binding, we have
\begin{align*}
& \Pr \left[ \mathsf{Reveal}\braket{C^*, R_\secp, \sigma_{C^*R}} = 0
    \land \mathsf{Reveal}\braket{C^*, R_\secp, \sigma_{C^*R}} = 1:
    \substack{
        (\tau,\sigma_{C^*R}) \gets \mathsf{Commit}\braket{C^*,R_\secp}
    } \right] \\
=  & \Ex\limits_{r\gets\bit^{3\secp}} \left[ \max_{k,k'\in\bit^{\secp^2}} \Pr\left[ G(k_1)||\dots||G(k_\secp) =  G(k'_1)\xor r||\dots||G(k'_\secp)\xor r \right] \right] \\
\le & \Pr\limits_{r\gets\bit^{3\secp}}[r\in \Bad] + \\
& \Ex\limits_{r\gets\bit^{3\secp}} \left[ \max_{k,k'\in\bit^{\secp^2}} \Pr\left[ G(k_1)||\dots||G(k_\secp) =  G(k'_1)\xor r||\dots||G(k'_\secp) \xor r \right] \mid r \notin \Bad \right].
\end{align*}

\noindent The first term $\Pr[r\in\Bad]$ is at most $2^{-\secp}$ as we shown. Let $\xi(n,p)$ be the probability that there are at least $2n/3$ heads when independently tossing a coin $n$ times, where the coin satisfies that $\Pr[\mathsf{Head}] = p$. Then, the second term can be written as
\begin{align*}
& \Ex\limits_{r\gets\bit^{3\secp}} \left[ \max_{k,k'\in\bit^{\secp^2}} \Pr\left[ G(k_1)||\dots||G(k_\secp) =  G(k'_1)\xor r||\dots||G(k'_\secp) \xor r \right] \mid r \notin \Bad \right] \\
& = \Ex\limits_{r\gets\bit^{3\secp}} \left[ \xi \left(\secp, \max_{k,k'\in\bit^\secp} \Pr[G(k)\xor G(k') = r] \right) \mid r\notin\Bad \right] \\
& \le \xi\left(\secp, \frac{1}{2} \right)
= 2^{-\Omega(\secp)},
\end{align*}
where the inequality follows from~\Cref{claim:collision}; the last equality follows from~\Cref{lemma:Chernoff-Hoeffding}.
\end{proof}


\subsection{Non-Adaptive CPA-Secure Quantum Private-Key Encryption with Classical Ciphertexts}
Finally, we construct a non-adaptive CPA-Secure private-key encryption with classical ciphertexts from selectively secure QPRFs.

\begin{definition}[Non-Adaptive CPA-Secure Quantum Private-key Encryption]
We say that a tuple of QPT algorithms $(\gen, \Enc, \Dec)$ is a \emph{non-adaptive CPA-secure quantum private-key encryption scheme} if the following holds:
\begin{itemize}
    \item {\bf{Correctness:}} There exists a negligible function $\veps(\cdot)$ such that for every $\secp\in\N$ and every message $m$,
    \[
    \Pr_{k \gets \bit^\secp} \left[ \Dec(1^\secp, k, \Enc(1^\secp, k, m)) = m \right] \ge 1 - \veps(\secp).
    \]
    \item {\bf{Non-adaptive CPA security:}} For every polynomial $q(\cdot)$, any (non-uniform) QPT adversary $A$, there exists a negligible function $\veps(\cdot)$ such that for all $\secp\in\N$, the adversary $A$ has at most $\veps(\secp)$ advantage in the following experiment:
    
    \begin{enumerate}
    \item The challenger generates a key $k$ by running $\gen(1^\secp)$ and a uniform bit $b\in\bit$.
    \item The adversary $A$ is given input $1^\secp$.
    \item The adversary $A$ chooses messages $(m^0_1, m^1_1) \dots, (m^0_q, m^1_q)$ and sends them to the challenger.
    \item The challenger sends $\Enc(k,m^b_1), \dots, \Enc(k,m^b_q)$ to the adversary $A$.
    \item The adversary $A$ outputs a bit $b'\in\bit$.
    \item The challenger output $1$ if $b' = b$, and $0$ otherwise.
\end{enumerate}
\end{itemize}

\end{definition}

Suppose $F:\bit^\secp \times \bit^{m(\secp)}\to\bit^{\ell(\secp)}$ is a selectively secure QPRF with $m(\secp) = \omega(\log\secp)$ and pseudodeterminism $1 - O(\secp^{-c})$ for arbitrary $c > 0$. In the classical case, selectively secure pseudorandom functions imply the existence of non-adaptive CPA-secure private-key encryption schemes. With a slight modification, we have the following construction.
\begin{construction}[Non-Adaptive CPA-Secure Quantum Private-Key Encryption Scheme] \hfill
\label{construction:SKE}
\begin{enumerate}
    \item $\gen(1^\secp):$ on input $1^\secp$, output $k\gets\bit^{\secp^2}$.
    
    \item $\Enc(1^\secp, k, m):$ on input a key $k\in\bit^{\secp^2}$ and a message $m \in\bit^{\ell(\secp)}$,
    \begin{itemize}
        \item Parse $k$ as $k_1||\dots||k_\secp$ such that $k_i \in \bit^\secp$ for every $i\in[\secp]$.
        \item Choose a uniform string $r\gets\bit^{m(\secp)}$.
        \item For $i\in[\secp]$, compute $F(k_i, r)$.
        \item Output $c = (r, m\xor F(k_1, r), \dots, m\xor F(k_\secp, r) )$.
    \end{itemize}
    
    \item $\Dec(1^\secp, k, c):$ on input a key $k\in\bit^{\secp^2}$ and a ciphertext $c = (r,c_1,\dots,c_\secp)\in\bit^{m(\secp) + \secp\ell(\secp)}$,
    \begin{itemize}
        \item Parse $k$ as $k_1||\dots||k_\secp$ such that $k_i \in \bit^\secp$ for every $i\in[\secp]$.
        \item For $i\in[\secp]$, compute $m_i := c_i \oplus F(k_i, r) \in \bit^{\ell(\secp)}$.
        \item Output $m := \MAJ(m_1, \dots, m_\secp)$.
    \end{itemize}
\end{enumerate}
\end{construction}

\begin{lemma}
\Cref{construction:SKE} satisfies the correctness property.
\end{lemma}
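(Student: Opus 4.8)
The plan is to mirror the proof of correctness for the POTP scheme (\Cref{lemma:OTP:correctness}), since \Cref{construction:SKE} has essentially the same structure: the key is split into $\secp$ independent blocks $k_1,\dots,k_\secp$, each used to evaluate the selectively secure QPRF $F$ on the \emph{same} point $r$, and decryption recovers the message by a majority vote over the $\secp$ candidate decryptions $m_i := c_i \oplus F(k_i,r)$. The only wrinkle compared to the POTP case is that the QPRF takes a second argument $r$, which is chosen uniformly at random by $\Enc$; but $r$ is transmitted in the clear as part of the ciphertext, so $\Dec$ evaluates $F$ on exactly the same input $r$, and the pseudodeterminism guarantee of $F$ is stated per-input ($\cK_{\secp,r}$ is the good-key set for the point $r$). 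Hence for every fixed $r$ the analysis is identical to the POTP one.

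Concretely, first I would fix an arbitrary message $m$ and condition on the uniformly chosen $r \gets \bit^{m(\secp)}$. For this $r$, let $\cK_{\secp,r}$ be the set of good keys guaranteed by the pseudodeterminism of $F$; since a random $k_i \gets \bit^\secp$ lies in $\cK_{\secp,r}$ with probability at least $1 - O(\secp^{-c}) > 0.9$ for large $\secp$, a Chernoff bound (\Cref{lemma:Chernoff-Hoeffding}) shows that the event $\Good$ — that at least $0.8\secp$ of the $k_i$ are good — holds with probability $1 - 2^{-\Omega(\secp)}$. Next, for each good $k_i$, letting $F_E(k_i,r)$ and $F_D(k_i,r)$ denote the (independent) evaluations of $F$ inside $\Enc$ and $\Dec$, the probability that $F_E(k_i,r) = F_D(k_i,r)$ is at least $(1 - O(\secp^{-c}))^2$, because both equal the most-likely output $y_i$ with that probability. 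Conditioned on $\Good$, the expected number of indices $i$ with $m_i = m$ (equivalently $F_E(k_i,r)=F_D(k_i,r)$) is at least $0.8\secp\cdot(1-O(\secp^{-c}))^2 > 0.7\secp$ for large $\secp$, so applying \Cref{lemma:Chernoff-Hoeffding} again, with all but $2^{-\Omega(\secp)}$ probability more than $\secp/2$ of the $m_i$ equal $m$, whence $\MAJ(m_1,\dots,m_\secp) = m$. Combining, $\Pr[\Dec(1^\secp,k,\Enc(1^\secp,k,m)) = m \mid r] \ge 1 - 2^{-\Omega(\secp)}$ for every $r$, and since this holds uniformly in $r$, averaging over $r$ gives the same bound unconditionally, which is $1 - \negl(\secp)$ as required.

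I would present this as a short proof that explicitly invokes \Cref{lemma:OTP:correctness}: ``The proof is essentially identical to that of \Cref{lemma:OTP:correctness}. Fix a message $m$ and a string $r \in \bit^{m(\secp)}$; since $\Dec$ evaluates $F$ on the same $r$ that $\Enc$ sampled (transmitted in the clear), the per-input pseudodeterminism of $F$ plays the role of the pseudodeterminism of $G$ there, and the correctness follows from \Cref{lemma:Chernoff-Hoeffding} exactly as before. Averaging over the choice of $r$ preserves the bound.'' The main (and really only) obstacle is a bookkeeping one: making sure the pseudodeterminism is applied to the \emph{correct} good-key set $\cK_{\secp,r}$ indexed by the actual query point, and confirming that the two evaluations of $F$ (in $\Enc$ and in $\Dec$) are on identical inputs so that their outputs collide with probability $(1-\mu(\secp))^2$ — there is no genuine difficulty beyond that, and no new idea beyond the POTP correctness argument is needed.
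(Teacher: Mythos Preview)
Your proposal is correct and matches the paper's own proof, which simply states that the argument is similar to \Cref{lemma:OTP:correctness} and follows from the pseudodeterminism of $F$ together with \Cref{lemma:Chernoff-Hoeffding}. Your additional care in conditioning on $r$ and using the per-input good-key set $\cK_{\secp,r}$ is exactly the right bookkeeping, though the paper leaves this implicit.
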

\begin{proof}
The proof is similar to that of~\Cref{lemma:OTP:correctness}. The correctness follows from the pseudodeterminism of $F$ and~\Cref{lemma:Chernoff-Hoeffding}.
\end{proof}

\noindent The following lemma follows the proof of Lemma~7.3 in~\cite{AQY21} closely.
\begin{lemma}
\Cref{construction:SKE} satisfies non-adaptive CPA security.
\end{lemma}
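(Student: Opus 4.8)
The plan is to follow the proof of Lemma~7.3 in~\cite{AQY21}: argue by a hybrid sequence that first renders the ciphertexts independent of the challenge bit and then invokes the selective security of $F$. Fix a polynomial $q(\cdot)$, write $q := q(\secp)$, and fix a non-uniform QPT adversary $A$. Let $r_1,\dots,r_q \gets \bit^{m(\secp)}$ be the strings the challenger samples in the $q$ invocations of $\Enc$, and let $\distinct$ denote the event that $r_1,\dots,r_q$ are pairwise distinct. Since $m(\secp) = \omega(\log\secp)$, a birthday bound gives $\Pr[\neg\distinct] \le \binom{q}{2} 2^{-m(\secp)} = \negl(\secp)$, so throughout we may condition on $\distinct$ at a negligible additive cost. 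Write $\hybrid_b$ for $b\in\bit$ for the non-adaptive CPA experiment in which the challenger encrypts $m^b_1,\dots,m^b_q$; the goal is to bound the gap between $\hybrid_0$ and $\hybrid_1$.

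Next I would introduce the hybrid $\hybrid^\star$ in which the ciphertexts handed to $A$ are generated by sampling $v_{i,j} \gets \bit^{\ell(\secp)}$ independently for all $i\in[\secp], j\in[q]$ and returning $(r_j, v_{1,j},\dots,v_{\secp,j})_{j\in[q]}$. The key point is that $\hybrid^\star$ does not depend on the challenge bit, yet for each $b$ it is identically distributed to the experiment obtained from $\hybrid_b$ by replacing every evaluation $F(k_i,r_j)$ with a fresh uniform string $u_{i,j}$, since $m^b_j \oplus u_{i,j}$ is again uniform and independent. Hence it suffices to establish $\hybrid_b \approx_c \hybrid^\star$ for each $b$, which then yields $\hybrid_0 \approx_c \hybrid^\star \approx_c \hybrid_1$ and the lemma.

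To prove $\hybrid_b \approx_c \hybrid^\star$ I would walk through $\secp+1$ sub-hybrids that, conditioned on $\distinct$, replace $F(k_1,\cdot),\dots,F(k_\secp,\cdot)$ by independent uniform strings one key at a time (equivalently, one proves and applies a multi-key version of selective security via exactly this hybrid). Indistinguishability of the $(t-1)$-st and $t$-th sub-hybrids reduces to the selective security of $F$ for key $k_t$: a QPT reduction $R$ samples $r_1,\dots,r_q \gets \bit^{m(\secp)}$ and outputs a uniform guess if they collide; otherwise $R$ submits the pairwise-distinct indices $r_1,\dots,r_q$ to the selective-security challenger, receives $y_1,\dots,y_q$ that are either $F(k,r_1),\dots,F(k,r_q)$ for a uniform $k$ or $q$ independent uniform strings, samples $k_{t+1},\dots,k_\secp$ itself, runs $A$ on $1^\secp$ to obtain the message pairs, and assembles the $j$-th ciphertext using uniform strings for blocks $1,\dots,t-1$, the value $y_j$ for block $t$, and honest evaluations $F(k_i,r_j)$ for blocks $t+1,\dots,\secp$; $R$ outputs $A$'s guess. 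Since $R$ is QPT and, conditioned on $\distinct$, perfectly simulates the appropriate sub-hybrid, any non-negligible sub-hybrid gap contradicts the selective security of $F$.

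The argument is mostly routine; the two things that require attention are that (i) the pseudodeterminism of $F$ plays no role in the security reduction — selective security is stated directly in terms of the randomized output of $F$, so $R$ simulates $\Enc$ perfectly by forwarding the challenge strings, and pseudodeterminism is needed only for the already-proven correctness lemma; and (ii) the hypothesis $m(\secp) = \omega(\log\secp)$ is exactly what makes the fresh per-query strings $r_j$ pairwise distinct with overwhelming probability, which is required to invoke selective security, whose definition demands pairwise-distinct queries. A further minor point is that the $r_j$ are sampled uniformly rather than adversarially, which causes no difficulty since $R$ samples them itself and all indistinguishability claims are made conditioned on the overwhelmingly likely event $\distinct$.
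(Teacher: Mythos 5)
Your proof is correct and follows essentially the same route as the paper: a hybrid over the $\secp$ blocks that replaces $F(k_i,\cdot)$ with random functions one key at a time, a reduction to the selective security of $F$, and a birthday-bound argument (using $m(\secp)=\omega(\log\secp)$) to handle the negligible probability that the challenge strings $r_j$ collide. The paper defers the distinctness conditioning to a final hybrid while you handle it upfront in the reduction, but this is only a cosmetic reorganization.
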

\begin{proof}
We finish the proof with a hybrid argument. Consider the following hybrids.
\begin{itemize}
    \item $\hybrid_1:$ The adversary receives $\left( \Enc(k,m^b_1), \dots, \Enc(k,m^b_q) \right)$, which by definition is 
    \[ \left( 
    (r_1, m^b_1 \xor F(k_1, r_1), \dots, m^b_1 \xor F(k_\secp, r_1)), \dots, (r_q, m^b_q \xor F(k_1, r_q), \dots, m^b_q \xor F(k_\secp, r_q))
    \right) \]
    where $r_1, \dots, r_q$ are independently and uniformly chosen.

    \item $\hybrid_{1.i}$ for $i\in\set{0, 1, \dots, \secp}:$ The adversary receives
    \begin{align*}
    \Bigg( 
    \left( r_1, m^b_1 \xor R_1(r_1), \dots, m^b_1 \xor R_i(r_1), m^b_1 \xor F(k_{i+1}, r_1)), \dots, m^b_1 \xor F(k_\secp, r_1) \right), \\
    \dots, 
    \left( r_q, m^b_q \xor R_1(r_q), \dots, m^b_q \xor R_i(r_q), m^b_q \xor F(k_{i+1}, r_q)), \dots, m^b_q \xor F(k_\secp, r_q) \right)
    \Bigg)
    \end{align*}
    where $r_1, \dots, r_q$ are independently, uniformly chosen and $R_1(\cdot), \dots, R_i(\cdot)$ are independent random functions.

    \item $\hybrid_{2}:$ The adversary receives
    \begin{align*}
    \Bigg( 
    \left( r_1, m^b_1 \xor R_1(r_1), \dots, m^b_1 \xor R_\secp(r_1) \right), \dots, 
    \left( r_q, m^b_q \xor R_1(r_q), \dots, m^b_q \xor R_\secp(r_q) \right)
    \Bigg)
    \end{align*}
    where $r_1, \dots, r_q$ are independently, uniformly chosen and $R_1(\cdot), \dots, R_\secp(\cdot)$ are independent random functions.
    
    \item $\hybrid_3:$ Instead of sampling $r_1, \dots, r_q$ independently, they are sampled uniformly at random conditioned on them all being distinct. The adversary receives
    \begin{align*}
    \Bigg( 
    \left( r_1, m^b_1 \xor R_1(r_1), \dots, m^b_1 \xor R_\secp(r_1) \right), \dots, 
    \left( r_q, m^b_q \xor R_1(r_q), \dots, m^b_q \xor R_\secp(r_q) \right)
    \Bigg)
    \end{align*}
    where $R_1(\cdot), \dots, R_\secp(\cdot)$ are independent random functions.
\end{itemize}

\noindent Hybrids $\hybrid_1$ and $\hybrid_{1.0}$ are identically distributed. For $i\in\set{0,1,\dots,\secp-1}$, hybrids $\hybrid_{1.i}$ and $\hybrid_{1.i+1}$ are computational indistinguishabile from the selective security of $F$. In particular, suppose there exist some $i\in[\secp]$ and a QPT adversary $A$ such that the difference between $A$'s winning probabilities in $\hybrid_{1.i}$ and $\hybrid_{1.i+1}$ is non-negligible. Consider the following reduction $R$ that breaks the selective security of the underlying QPRF $F$.

\begin{enumerate}
    \item Receive $(m^0_1, m^1_1) \dots, (m^0_q, m^1_q)$ from $A$.
    \item Sample $x_1,x_2,\dots,x_q \gets \bit^{m(\secp)}$.
    \item Query the oracle on $x_1,x_2,\dots,x_q$ and obtain $y_1,y_2,\dots,y_q$, where $y_i$'s are either sampled from $F(k,x_i)$ with a uniform key $k$ or \iid uniform $\ell(\secp)$-bit strings.
    \item For $j\in\set{1,2,\dots,q}$, do the following
    \begin{enumerate}
        \item Sample independent random functions $R_1(\cdot), \dots, R_i(\cdot)$ and compute $R_1(x_j),$ $R_2(x_j),$ $\dots, R_i(x_j)$.\footnote{The reduction $R$ uses lazy evaluation to simulate each random function instead of sampling the whole function table.}
        \item Sample $k_{i+2},\dots, k_\secp \gets \bit^\secp$ and compute $F(k_{i+2},x_j),\dots,F(k_\secp,x_j)$.
    \end{enumerate}
    \item Sample a uniform bit $b\in\bit$.
    \item Send 
    \begin{align*}
    & ( x_1, m^b_1 \xor R_1(x_1), \dots, m^b_1 \xor R_i(x_1), m^b_1 \xor y_1, m^b_1 \xor F(k_{i+2},x_1), \dots, F(k_\secp,x_1) ), \\
    & ( x_2, m^b_2 \xor R_1(x_2), \dots, m^b_2 \xor R_i(x_2), m^b_2 \xor y_2, m^b_2 \xor F(k_{i+2},x_2), \dots, F(k_\secp,x_2) ), \\
    & \quad \vdots \\
    & ( x_q, m^b_q \xor R_1(x_q), \dots, m^b_q \xor R_i(x_q), m^b_q \xor y_q, m^b_q \xor F(k_{i+2},x_q), \dots, F(k_\secp,x_q) )
    \end{align*}
    to $A$ and get the output $b'$.
    \item If $b = b'$, then output $1$. Otherwise, output $0$.
\end{enumerate}
If $y_i$'s are the output of the QPRF $F$, then the reduction $R$ perfectly simulates $A$'s view in $\hybrid_i$. On the other hand, if $y_i$'s are \iid uniform bitstrings, then the reduction $R$ perfectly simulates $A$'s view in $\hybrid_{i+1}$. Hence, the distinguishing advantage of $R$ is equivalent to the difference between $A$'s winning probabilities in $\hybrid_{1.i}$ and $\hybrid_{1.i+1}$. However, this contradicts the selective security of $F$.
Hybrids $\hybrid_{1.\secp}$ and $\hybrid_2$ are identically distributed. The statistical distance between hybrids $\hybrid_2$ and $\hybrid_3$ is $O(q^2/2^m) = \negl(\secp)$ from a similar calculation of Lemma~7.3 in~\cite{AQY21}. Finally, the advantage of $A$ in $\hybrid_3$ is $0$ since all the messages are independently one-time padded. 

\end{proof}

\printbibliography

\appendix 
\section{Full Proof of~\Cref{thm:amplification:QPRG}}
\label{app:amplification:PRG}
In this section, we aim to complete the proof of~\Cref{thm:amplification:QPRG}. The proof is essentially the same as that of~\cite{DIJK09TCC}. In~\cite{DIJK09TCC}, \Cref{thm:amplification:PRG} is proven by a direct product theorem in~\cite{IJK08JOC}. This is partly because they consider more general settings, e.g., interactive primitives and direct products with thresholds. For our purpose, we can use the hardness amplification of weakly verifiable puzzles by Canetti, Halevi, and Steiner~\cite{CHS05TCC}. Specifically, Lemma~14 in~\cite{DIJK09TCC} can be proven by techniques in~\cite{CHS05TCC}. We note that Radian and Sattath~\cite{RS19} observed the result in~\cite{CHS05TCC} can be extended to the post-quantum setting without modifying the proof; Morimae and Yamakawa~\cite{MY22onewayness} further generalized the result in~\cite{CHS05TCC} to the setting where the puzzle and solution are quantum under certain conditions.

We first recall the definition of weakly verifiable puzzles in~\cite{CHS05TCC} and extend it to our setting where the puzzle generator, verifier and solver are QPTs while the puzzle and solution remain classical.

\begin{definition}[Weakly verifiable puzzles]
A weakly verifiable puzzle is a pair of QPT algorithms $\Pi = (\gen,\ver)$ that satisfies the following:
\begin{enumerate}
    \item $\gen(1^\secp)\to (P,C):$ on input $1^\secp$, outputs a classical puzzle $P$ along with a classical (secret) bitstring $C$ for verification.
    \item $\ver(P,S,C)\to\top/\bot:$ on input a classical puzzle $P$, a classical solution $S$ and a classical bitstring $C$,  outputs the symbol $\top$ if the verification is successful or $\bot$ if it fails.
\end{enumerate}
\end{definition}

\begin{theorem}[{\cite[Theorem~1]{CHS05TCC}},{\cite[Theorem~20]{RS19}}] \label{theorem;ParallelRepeatition}
Let $\veps:\N \to [0,1]$ be an efficiently computable function, let $s:\N \rightarrow \N$ be efficiently computable and polynomially bounded, and let $\Pi = (\gen,\ver)$ be a weakly verifiable puzzle system. If $\Pi$ is $(1-\veps)$-hard against QPT adversaries, then $\Pi^s$, the $s$-fold repetition of $\Pi$, is $(1-\veps^s)$-hard against QPT adversaries.
\end{theorem}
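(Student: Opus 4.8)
The plan is to recognize \Cref{theorem;ParallelRepeatition} as a restatement of the Canetti--Halevi--Steiner parallel-repetition theorem for weakly verifiable puzzles~\cite{CHS05TCC}, so that the work is to recall their recursive reduction and to check --- following the observation of Radian and Sattath~\cite{RS19} --- that it survives the passage to QPT adversaries. Throughout I read ``$\Pi$ is $(1-\eps)$-hard'' as ``every QPT algorithm solves a freshly sampled instance of $\Pi$ with probability at most $\eps + \negl(\secp)$'', and similarly for $\Pi^s$ with $\eps^s$ in place of $\eps$. The proof is by induction on $s$, the base case $s=1$ being the hypothesis. For the inductive step, let $A$ be a QPT algorithm that, on input $s$ independent instances $(P_1,\dots,P_s)$ each produced by $\gen(1^\secp)$, outputs solutions $(S_1,\dots,S_s)$ all of which verify with probability $\alpha$; the goal is to prove $\alpha \le \eps^s + \negl(\secp)$.

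From $A$ I would build a QPT solver $A'$ for $\Pi^{s-1}$ as follows: on input $(P_1,\dots,P_{s-1})$, sample a fresh pair $(P_s,C_s)\gets\gen(1^\secp)$, run $A(P_1,\dots,P_s)$ to obtain $(S_1,\dots,S_s)$, and if $\ver(P_s,S_s,C_s)=\top$ output $(S_1,\dots,S_{s-1})$; otherwise discard everything, resample $(P_s,C_s)$, rerun $A$, and repeat up to $T = \poly(\secp)$ times before giving up. Two observations finish the step. First, for every fixing of $(P_1,\dots,P_{s-1})$ as (possibly non-uniform) advice, the induced behaviour of $A$ on coordinate $s$ is an ordinary single-instance solver, so $(1-\eps)$-hardness of $\Pi$ yields $p_s := \Pr[\ver(P_s,S_s,C_s)=\top] \le \eps + \negl(\secp)$. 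Second, the resampling loop makes $A'$ effectively condition on the event ``$A$ is correct on coordinate $s$'': up to an additive error that the choice of $T$ pushes below $\negl(\secp)$ in the regime where $p_s$ is not itself negligible (otherwise $\alpha \le p_s$ is already negligible and we are done), $A'$ solves $\Pi^{s-1}$ with probability at least $\alpha/p_s$. Since $A'$ is QPT, the induction hypothesis that $\Pi^{s-1}$ is $(1-\eps^{s-1})$-hard forces $\alpha/\eps \le \eps^{s-1}+\negl(\secp)$, that is, $\alpha \le \eps^s + \negl(\secp)$.

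The only delicate point in the post-quantum setting is that $A'$ invokes $A$ up to $T$ times, but each invocation feeds $A$ only freshly and independently sampled classical puzzles and reads off its classical output, so the argument relativizes; as Radian and Sattath~\cite{RS19} observe, CHS's proof then applies verbatim to QPT solvers (and Morimae and Yamakawa~\cite{MY22onewayness} extend it to quantum puzzles under suitable conditions), and for the application in \Cref{thm:amplification:QPRG} the reduction is anyway uniform, so the question of copying quantum advice does not arise. I expect the genuinely hard part of a complete write-up to be exactly what CHS handle with care: making the ``conditioning by resampling'' step quantitative --- controlling the $\negl(\secp)$ loss introduced by the finite retry bound $T$, and separating out the degenerate regime in which $p_s$ is too small for $\alpha/p_s$ to be meaningful. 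With \Cref{theorem;ParallelRepeatition} in hand, the remainder of \Cref{thm:amplification:QPRG} follows the route of \cite{DIJK09TCC}: recast the relevant distinguishing task for the XOR-of-$s$-copies generator as a weakly verifiable puzzle whose single-instance hardness comes from the $(1-\delta)$-pseudorandomness of the underlying wQPRG (this is Lemma~14 of \cite{DIJK09TCC}), invoke \Cref{theorem;ParallelRepeatition} with $s = \Theta(\secp) = \omega(\log\secp)$ so that the amplified hardness forces the distinguishing advantage to be negligible (using $\delta \le 0.49 + o(1)$ to keep the relevant base bounded away from $1$), and combine with the pseudodeterminism bound already established.
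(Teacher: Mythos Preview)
The paper does not give its own proof of \Cref{theorem;ParallelRepeatition}; it is stated as a black-box import from \cite{CHS05TCC} and \cite{RS19}, with the surrounding text only remarking that Radian--Sattath observed the CHS reduction carries over to QPT adversaries without change (and that Morimae--Yamakawa extend it further). Your proposal therefore goes beyond what the paper does by actually sketching the CHS resampling argument.

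As a sketch, your outline is faithful to the CHS idea: induct on $s$, and for the step build a $\Pi^{s-1}$ solver that privately generates the extra coordinate, runs the $\Pi^s$ adversary, and uses a bounded retry loop to condition on success in that coordinate. You also correctly isolate the two places where a full proof needs care --- controlling the loss from the finite retry budget $T$, and handling the regime where the marginal success probability on the extra coordinate is itself negligible --- and your remark that each invocation of $A$ receives only fresh classical inputs is exactly the reason the reduction relativizes to QPT solvers. One point worth watching if you ever write this out in full: because $s=s(\secp)$ is polynomial rather than constant, the naive induction accumulates $s$ negligible error terms and the reduction's running time is multiplied at each of $s$ steps, so one has to check (as CHS do) that these remain under control; this is part of the ``quantitative conditioning'' you already flag. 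Your final paragraph on how the theorem feeds into \Cref{thm:amplification:QPRG} --- next-bit unpredictability, then the bit-wise direct product via \Cref{theorem;ParallelRepeatition}, then the XOR lemma, then back to pseudorandomness --- matches the paper's Appendix exactly.
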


Next, we note that Yao's next-bit unpredictability lemma~~\cite{Yao82FOCS} can be extended to QPRGs.  The proof is almost the same as the classical case except that QPRGs are pseudodeterministic -- which means that the first $i$ bits of the QPRG's output and its next bit need to be well-defined. To address this issue, we sample $G(k)$ once and define the first $i$ bits and the next bit accordingly. For completeness, we present the proof below.

\begin{lemma}[Next-bit unpredictability lemma for QPRGs]
\label{lemma:next-bit}
If a QPT algorithm $G:\bit^\secp\to\bit^{\ell(\secp)}$ is $(1-\delta)$-pseudorandom, then for any QPT algorithm $A$ and any $i\in\set{0,1,\dots,\ell(\secp)-1}$,
\[
\Pr\left[ b' = y_{i+1}: 
\substack{ k \gets \bit^\secp, \\ 
y \gets G(k), \\
b' \gets A\left( y_{[1:i]} \right) }
\right] 
\le \frac{1}{2} + \delta,
\]
where $y_{[1:i]}$ denotes the first $i$ bits of $y$, and $y_{i+1}$ denotes the $(i+1)$-th bit of $y$.

Let $G:\bit^\secp\to\bit^{\ell(\secp)}$ be a QPT algorithm such that for any QPT algorithm $A$ and any $i\in\set{0,1,\dots,\ell(\secp)-1}$,
\[
\Pr\left[ b' = y_{i+1}: 
\substack{ k \gets \bit^\secp, \\ 
y \gets G(k), \\
b' \gets A\left( y_{[1:i]} \right) }
\right] 
\le \frac{1}{2} + \delta.
\]
Then $G$ is $(1 - \delta\ell)$-pseudorandom.
\end{lemma}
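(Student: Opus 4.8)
The statement to prove is the two-directional \emph{next-bit unpredictability lemma} for QPRGs, \Cref{lemma:next-bit}. This is a pseudodeterministic-aware adaptation of Yao's classical theorem, so the plan is to follow the two classical arguments (``pseudorandom $\Rightarrow$ next-bit unpredictable'' and ``next-bit unpredictable $\Rightarrow$ pseudorandom'') while being careful that, because $G$ is only pseudodeterministic, the ``first $i$ bits'' $y_{[1:i]}$ and the ``next bit'' $y_{i+1}$ are not deterministic functions of the seed $k$ but are instead jointly sampled from one invocation $y \gets G(k)$. The key point is that in every probability expression we sample $G(k)$ \emph{once} and read off both the prefix and the target bit from that single sample, so there is no subtlety beyond the classical proof.

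\textbf{Forward direction (pseudorandomness $\Rightarrow$ next-bit unpredictability).} I would argue the contrapositive: suppose some QPT $A$ and some index $i$ achieve predicting advantage $\Pr[b' = y_{i+1}] > \tfrac12 + \delta$ where $k \gets \bit^\secp$, $y \gets G(k)$, $b' \gets A(y_{[1:i]})$. Build a distinguisher $D$ for $G$ as follows: on input a string $z \in \bit^{\ell(\secp)}$ (either $z \gets G(k)$ or $z \gets \bit^{\ell(\secp)}$), run $b' \gets A(z_{[1:i]})$ and output $1$ iff $b' = z_{i+1}$. When $z$ is uniform, $z_{i+1}$ is an independent uniform bit, so $\Pr[D(z) = 1] = \tfrac12$ regardless of $A$. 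When $z \gets G(k)$, by assumption $\Pr[D(z) = 1] > \tfrac12 + \delta$. Hence $D$ has distinguishing advantage strictly greater than $\delta$, contradicting $(1-\delta)$-pseudorandomness of $G$. Note this direction does \emph{not} need a union/hybrid bound and works verbatim for any $i$; the only place pseudodeterminism enters is the (harmless) observation that $D$ gets a single sample $z$ from which it reads both $z_{[1:i]}$ and $z_{i+1}$.

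\textbf{Reverse direction (next-bit unpredictability $\Rightarrow$ pseudorandomness).} This is the standard hybrid argument. Define hybrids $\hybrid_0, \dots, \hybrid_\ell$ where $\hybrid_j$ outputs $y_{[1:j]} \,\|\, u$ with $y \gets G(k)$, $k \gets \bit^\secp$, and $u \gets \bit^{\ell - j}$ a fresh uniform suffix; so $\hybrid_\ell$ is the real output of $G$ and $\hybrid_0$ is fully uniform. If a QPT distinguisher $B$ separates $\hybrid_\ell$ from $\hybrid_0$ with advantage $\alpha$, then by a pigeonhole / telescoping argument there is an index $i \in \{0,\dots,\ell-1\}$ with $|\Pr[B(\hybrid_{i+1})=1] - \Pr[B(\hybrid_i)=1]| \ge \alpha/\ell$; WLOG drop the absolute value by possibly flipping $B$'s output. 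Now build a next-bit predictor $A$ for position $i$: given $y_{[1:i]}$, sample a uniform bit $c$ and a uniform suffix $u' \gets \bit^{\ell - i - 1}$, run $B(y_{[1:i]} \,\| \, c \,\|\, u')$; if $B$ outputs $1$ guess $b' = c$, else guess $b' = 1 - c$. The usual calculation shows $\Pr[b' = y_{i+1}] \ge \tfrac12 + \alpha/\ell$. Combining with the hypothesis $\Pr[b' = y_{i+1}] \le \tfrac12 + \delta$ gives $\alpha \le \delta \ell$, i.e.\ $G$ is $(1 - \delta\ell)$-pseudorandom. The only care needed for pseudodeterminism: in every hybrid and in $A$'s simulation, the prefix $y_{[1:i]}$ (and, in $\hybrid_{i+1}$, the bit $y_{i+1}$) must come from a \emph{single} execution $y \gets G(k)$, exactly matching the distribution in the lemma's hypothesis; once this is fixed the classical analysis goes through unchanged.

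\textbf{Main obstacle.} There is no deep obstacle; the content is bookkeeping. The one spot that deserves genuine attention is making sure the distributions line up: because $G(k)$ need not be a function of $k$, one must be disciplined that ``$y_{[1:i]}$ then $y_{i+1}$'' always refers to one coupled draw of $y \gets G(k)$ — both in the statement of the hypothesis and inside each hybrid and reduction — so that the predictor $A$ we construct is attacking precisely the quantity the hypothesis bounds. I would also remark in passing that the loss factor $\ell$ in the reverse direction (the $1 - \delta\ell$ bound) is exactly the classical hybrid loss and cannot be avoided by this argument, which is why the lemma is stated asymmetrically.
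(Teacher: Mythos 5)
Your proof is correct and follows essentially the same approach as the paper: the forward direction is the same contrapositive distinguisher that outputs $1$ iff $A$'s guess matches the $(i{+}1)$-th bit, and the reverse direction is the same prefix-hybrid argument with the same ``guess $c$, trust $B$'' predictor, with only a harmless shift in hybrid indexing. Your observation about coupling the prefix and next bit from a single draw $y \gets G(k)$ is exactly the remark the paper makes about the pseudodeterministic setting.
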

\begin{proof}
\noindent (\textit{Pseudorandomness implies next-bit unpredictability.}) For the sake of contradiction, suppose there exists some $i\in\set{0,1,\dots,\ell(\secp)-1}$ and a QPT algorithm $A$ such that 
\[
\Pr\left[ b' = y_{i+1}: 
\substack{ k \gets \bit^\secp, \\ 
y \gets G(k), \\
b' \gets A\left( y_{[1:i]} \right) }
\right] 
> \frac{1}{2} + \delta.
\]
We construct the following reduction $R$ the breaks the security of $G$: on input $y\in\bit^{\ell(\secp)}$, do the following,
\begin{enumerate}
    \item Run $A(y_{[1:n]})$.
    \item Receive a bit $b'$ from $A$.
    \item Output $1$ if $b' = y_{i+1}$, and $0$ otherwise.
\end{enumerate}
When the input string $y$ is sampled from $G(k)$ with a random seed $k$, then the reduction perfectly simulates the view of $A$. On the other hand, if $y$ is a uniform string, then the probability of $b' = y_{i+1}$ is $1/2$ since the information of $y_{i+1}$ is never revealed. Hence, we have
\[
\Pr\left[R(y) = 1: \substack{ k\gets\bit^\secp, \\ y \gets G(k)} \right]
- \Pr[R(y) = 1: y \gets \bit^{\ell(\secp)} ]
\ge \left( \frac{1}{2} + \delta \right) - \frac{1}{2} 
= \delta.
\]
However, this contradicts the assumption that $G$ is $(1-\delta)$-pseudorandom. \\

\noindent (\textit{Next-bit unpredictability implies pseudorandomness.}) For the sake of contradiction, suppose there exists a QPT algorithm $A$ that breaks the pseudorandomness of $G$, i.e.,
\[
\left| \Pr[A(y) = 1: k\gets\bit^\secp, y \gets G(k) ] - \Pr[A(y) = 1: y \gets \bit^{\ell(\secp)} ] \right| > \delta\ell.
\]
Consider the following hybrids:
\begin{itemize}
    \item $\hybrid_0:$ $u \gets \bit^{\ell(\secp)}$; Output $u$.
    \item $\hybrid_i$ for $i\in\set{1,\dots,\ell(\secp) - 1}:$ $k\gets\bit^\secp$; $y \gets G(k)$; $u \gets \bit^{\ell(\secp)}$; Output $y_{[1:i]}||u_{[i+1:\ell]}$.
    \item $\hybrid_{\ell}:$ $k\gets\bit^\secp$; $y \gets G(k)$; Output $y$.
\end{itemize}
Hence, by hybrid argument, there exists some $i\in[\ell]$ such that
\[
\left| \Pr_{\hybrid_{i}}[A(y) = 1] - \Pr_{\hybrid_{i-1}}[A(y) = 1] \right| > \frac{\delta\ell}{\ell} = \delta.
\]
Without loss of generality, we can assume that $\Pr_{\hybrid_{i}}[A(y) = 1] - \Pr_{\hybrid_{i-1}}[A(y) = 1] > \delta$. Now, we construct a reduction $R$ that takes as input the first $i-1$ bits of $G(k)$ and runs $A$ to predict the $i$-th bit of $G(k)$. Consider the following reduction $R$: on input $y = y_1||\dots||y_{i-1}$, do the following,
\begin{enumerate}
    \item Sample uniform bits $u_i,\dots,u_\ell \in\bit$.
    \item Run $A(y_1||\dots||y_{i-1}, u_i,\dots,u_\ell)$.
    \item Receive a bit $b'$ from $A$.
    \item Output $u_i$ if $b' = 1$, and $u_i\xor 1$ otherwise.
\end{enumerate}
Note that the first two steps perfectly simulate $\hybrid_{i-1}$ for $A$, thus $\Pr[b' = 1] = \Pr_{\hybrid_{i-1}}[A(y) = 1]$. Moreover, conditioned on $u_i = y_i$, the reduction perfectly simulates $\hybrid_{i}$ for $A$, thus $\Pr[b' = 1 \mid u_i = y_i] = \Pr_{\hybrid_{i}}[A(y) = 1]$. Now, since $u_i$ is chosen uniformly at random, we have $\Pr[b' = 1] = \frac{1}{2}\Pr[b' = 1 \mid u_i = y_i] + \frac{1}{2}\Pr[b' = 1 \mid u_i \neq y_i]$, or equivalently $\Pr[b' = 1 \mid u_i \neq y_i] = 2\Pr[b' = 1] - \Pr[b' = 1 \mid u_i = y_i] = 2\Pr_{\hybrid_{i-1}}[A(y) = 1] - \Pr_{\hybrid_{i}}[A(y) = 1]$.

\noindent Then the success probability of $R$ is given by
\begin{align*}
& \Pr[u_i = y_i \land b' = 1] + \Pr[u_i \neq y_i \land b' = 0] \\
& = \Pr[u_i = y_i] \Pr[b' = 1 \mid u_i = y_i] + \Pr[u_i \neq y_i] \Pr[b' = 0 \mid u_i \neq y_i] \\
& = \Pr[u_i = y_i] \Pr[b' = 1 \mid u_i = y_i] + \Pr[u_i \neq y_i] (1 - \Pr[b' = 1 \mid u_i \neq y_i]) \\
& = \frac{1}{2}\Pr_{\hybrid_{i}}[A(y) = 1] + \frac{1}{2} \left( 1 - 2\Pr_{\hybrid_{i-1}}[A(y) = 1] + \Pr_{\hybrid_{i}}[A(y) = 1] \right) \\
& = \frac{1}{2} + \left( \Pr_{\hybrid_{i}}[A(y) = 1] - \Pr_{\hybrid_{i-1}}[A(y) = 1] \right) \\
& \ge \frac{1}{2} + \delta.
\end{align*}
However, this contradicts the next-bit unpredictability of $G$.
\end{proof}

Building upon the argument in~\cite{DIJK09TCC}, we can interpret $y_{[1:i]}$ as the \emph{puzzle} and the next bit $y_{i+1}$ as the \emph{solution}. Then we define the $(2s)$-fold puzzle as $y^1_{[1:i]} ||\dots|| y^{2s}_{[1:i]}$, where $k_1,\dots, k_{2s}$ are independently, uniformly chosen and $y^1 \gets G(k_1), \dots, y^{2s} \gets G(k_{2s})$. The solution will be $y^1_{i+1} ||\dots|| y^{2s}_{i+1}$. Here, we extend Lemma~14 in~\cite{DIJK09TCC} that states the hardness amplification for the above puzzle to QPRGs.

\begin{lemma}[Bit-wise direct product lemma for QPRGs]
\label{lemma:Bit-wise direct product lemma}
Let $G:\bit^\secp\to\bit^{\ell(\secp)}$ be a QPRG such that for any QPT algorithm $A$, we have for all $i\in\set{0,1,\dots,\ell(\secp)-1}$,
\[
\Pr\left[ b' = y_{i+1}: 
\substack{ k \gets \bit^\secp, \\ 
y \gets G(k), \\
b' \gets A\left( y_{[1:i]} \right) }
\right]
\le \frac{1}{2} + \delta,
\]
where $\delta(\secp) \le 0.49+o(1)$. Then for any QPT algorithm $A'$ we have for all $i\in\set{0,1,\dots,\ell(\secp)-1}$,
\[
\Pr\left[ \bigwedge_{j = 1}^{2s} \left( b'_j = y^j_{i+1} \right): 
\substack{
    k_1, \dots, k_{2s} \gets \bit^\secp, \\
    y^1 \gets G(k_1), \dots, y^{2s} \gets G(k_{2s}), \\
    (b'_1, \dots, b'_{2s}) \gets A' \left( y^1_{[1:i]}, \dots, y^{2s}_{[1:i]} \right) 
}
\right] 
\le \veps,
\]
where $\veps = e^{-\Omega(s)}$.
\end{lemma}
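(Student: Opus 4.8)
The plan is to cast the next-bit prediction game as a weakly verifiable puzzle in the sense of the definition just introduced, and then invoke the parallel-repetition hardness amplification theorem (\Cref{theorem;ParallelRepeatition}, i.e.\ \cite[Theorem~1]{CHS05TCC} and its post-quantum version \cite[Theorem~20]{RS19}). Concretely, fix an index $i \in \set{0,1,\dots,\ell(\secp)-1}$. Define a puzzle system $\Pi_i = (\gen_i, \ver_i)$ where $\gen_i(1^\secp)$ samples $k \gets \bit^\secp$, runs $y \gets G(k)$, and sets the puzzle to be $P := y_{[1:i]}$ and the verification string to be $C := y_{i+1}$ (the ``correct'' next bit for this particular run of $G$); the verifier $\ver_i(P, S, C)$ accepts iff $S = C$. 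This is a legitimate weakly verifiable puzzle: $\gen_i$ and $\ver_i$ are QPT, and both the puzzle and the solution are classical bits/strings. Note that it is essential that $C$ is defined \emph{per run} of $G$ rather than as a fixed function of the seed — this is exactly why the puzzle framework (which allows the generator to keep a secret) is the right abstraction for a \emph{pseudodeterministic} generator, and it mirrors the fix used in the proof of \Cref{lemma:next-bit}.

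The hypothesis of the lemma says precisely that $\Pi_i$ is $(1-\veps_0)$-hard against QPT adversaries for $\veps_0 := \tfrac12 - \delta \ge 0.01 - o(1)$; that is, no QPT solver succeeds on a single instance with probability exceeding $1 - \veps_0 = \tfrac12 + \delta$. Applying \Cref{theorem;ParallelRepeatition} with $s$ replaced by $2s$ (noting $2s$ is polynomially bounded and efficiently computable since $s(\secp) = \Theta(\secp)$), the $(2s)$-fold repetition $\Pi_i^{2s}$ is $(1 - \veps_0^{2s})$-hard against QPT adversaries. Unwinding the definition of the $(2s)$-fold repetition, an instance of $\Pi_i^{2s}$ is obtained by sampling $k_1,\dots,k_{2s} \gets \bit^\secp$ independently, running $y^j \gets G(k_j)$, and presenting $\big(y^1_{[1:i]}, \dots, y^{2s}_{[1:i]}\big)$; a solver wins iff it outputs $(b'_1,\dots,b'_{2s})$ with $b'_j = y^j_{i+1}$ for all $j$. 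Hence for every QPT algorithm $A'$,
\[
\Pr\left[ \bigwedge_{j=1}^{2s}\big(b'_j = y^j_{i+1}\big) :
\substack{ k_1,\dots,k_{2s}\gets\bit^\secp,\\ y^1\gets G(k_1),\dots,y^{2s}\gets G(k_{2s}),\\ (b'_1,\dots,b'_{2s})\gets A'\big(y^1_{[1:i]},\dots,y^{2s}_{[1:i]}\big)}\right]
\le \veps_0^{2s}.
\]
Finally, since $\veps_0 = \tfrac12 - \delta \ge 0.01 - o(1)$ is bounded away from $1$, we have $\veps_0^{2s} \le (1-\Omega(1))^{2s} = e^{-\Omega(s)}$, which gives the claimed bound $\veps = e^{-\Omega(s)}$. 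Because this holds for every $i$ individually (the puzzle $\Pi_i$ and its repetition are defined for each fixed $i$), the ``for all $i$'' quantifier in the statement is immediate.

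The main subtlety — though not really an obstacle — is verifying that \Cref{theorem;ParallelRepeatition} applies verbatim in our QPT-with-classical-puzzle setting: one must check that the generator's secret $C$ being a single bit derived from the (randomized) execution of $G$ does not violate any hypothesis of the Canetti–Halevi–Steiner theorem or its post-quantum lift, and that the security loss in the reduction is polynomial so that a QPT solver for $\Pi_i^{2s}$ with success probability $> \veps_0^{2s}$ would yield a QPT solver for $\Pi_i$ with success probability $> 1-\veps_0$, contradicting the hypothesis. Since \cite{RS19} already observed that the CHS proof goes through against quantum adversaries without modification, and our puzzles and solutions remain classical throughout, this check is routine; the only care needed is to record that $G$'s internal randomness in each of the $2s$ coordinates is fresh and independent, which is exactly what the $(2s)$-fold repetition guarantees. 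A minor bookkeeping point is the constant $0.49 + o(1)$ in the hypothesis on $\delta$: it is used solely to ensure $\veps_0$ is bounded below by a positive constant, so that $\veps_0^{2s}$ decays exponentially in $s$.
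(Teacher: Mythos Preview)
Your approach is exactly the paper's: cast next-bit prediction as a weakly verifiable puzzle $\Pi_i$ (with $P=y_{[1:i]}$, $C=y_{i+1}$) and invoke the CHS/RS parallel-repetition theorem on the $(2s)$-fold repetition. There is one bookkeeping slip: under your own reading of ``$(1-\veps)$-hard'' as ``success probability at most $1-\veps$,'' the conclusion of \Cref{theorem;ParallelRepeatition} would give success probability $\le 1-\veps_0^{2s}$, not $\le \veps_0^{2s}$; the actual CHS bound says single-instance success $\le p$ implies $(2s)$-fold success $\le p^{2s}$, so the correct intermediate bound is $(\tfrac12+\delta)^{2s}$ (this is what the paper writes), and since $\tfrac12+\delta \le 0.99+o(1)$ this is still $e^{-\Omega(s)}$, so your conclusion stands.
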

\begin{proof}
Following the arguments in~\cite{DIJK09TCC}, we model the above problem as a weakly verifiable puzzle. Specifically, the puzzle generator $\gen$ is defined as follows: sample $k_1, \dots, k_{2s} \gets \bit^\secp$ and then run $y^1 \gets G(k_1), \dots, y^{2s} \gets G(k_{2s})$. The puzzle $P$ is $\set{ y^1_{[1:i]}, \dots, y^{2s}_{[1:i]} }$. The classical (secret) bitstring $C$ is $\set{y^1_{i+1}, \dots, y^{2s}_{i+1}}$. The solution $S$ is of the form $\set{b'_1, \dots, b'_{2s} }$. The puzzle verifier $\ver$ takes as input $(C, P, S)$ and outputs $\top$ if and only if it satisfies $\bigwedge_{j = 1}^{2s} \left( b'_j = y^j_{i+1} \right)$. Finally, we apply~\Cref{theorem;ParallelRepeatition} on the above puzzle and obtain $\veps = O\left( \left( \frac{1}{2}+\delta \right)^{2s} \right) = e^{-\Omega(s)}$.
\end{proof}

\noindent Moreover, Lemma~15 in~\cite{DIJK09TCC} also can be generalized for QPRGs.

\begin{lemma}[Direct product theorem implies xor lemma for QPRGs]
\label{lemma:Direct product theorem implies xor lemma}
Let $G:\bit^\secp\to\bit^{\ell(\secp)}$ be a QPRG such that for any QPT $A$, we have for all $i\in\set{0,1,\dots,\ell(\secp)-1}$,
\[
\Pr\left[ b' = y_{i+1}: 
\substack{ k \gets \bit^\secp, \\ 
y \gets G(k), \\
b' \gets A\left( y_{[1:i]} \right) }
\right]
\le \frac{1}{2} + \delta,
\]
where $\delta(\secp) \le 0.49+o(1)$. Then for any QPT algorithm $A'$ we have for all $i\in\set{0,1,\dots,\ell(\secp)-1}$,
\[
\Pr\left[ b' = \bigoplus_{j=1}^s y^j_{i+1}: 
\substack{
    k_1, \dots, k_s \gets \bit^\secp, \\
    y^1 \gets G(k_1), \dots, y^s \gets G(k_s), \\
    b' \gets A' \left( y^1_{[1:i]}\xor \dots\xor y^s_{[1:i]} \right) 
}
\right]
\le \frac{1}{2} + \veps,
\]
where $\veps = e^{-\Omega(s)}$.
\end{lemma}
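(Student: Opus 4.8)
The plan is to mirror the derivation of the XOR lemma from the bit-wise direct product theorem in~\cite{DIJK09TCC} (their Lemma~15, an instance of the ``direct product implies XOR'' phenomenon due to Levin), and to observe that this reduction invokes the XOR-predictor only as a black box, so it carries over unchanged to QPT adversaries. Concretely I would argue the contrapositive. Fix $i$ and suppose toward contradiction that there are a QPT algorithm $A'$ and a function $\rho(\secp)$ that is \emph{not} $e^{-\Omega(s)}$ with
\[
\Pr\left[ b' = \bigoplus_{j=1}^{s} y^j_{i+1} :
\substack{ k_1,\dots,k_s\gets\bit^\secp,\ y^1\gets G(k_1),\dots,y^s\gets G(k_s), \\ b'\gets A'\!\bigl(y^1_{[1:i]}\xor\cdots\xor y^s_{[1:i]}\bigr) } \right]
\ \ge\ \tfrac12+\rho(\secp)
\]
for infinitely many $\secp$. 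Using $A'$ as a subroutine --- together with the ability (already exploited in the earlier reductions of this section) to efficiently self-sample pairs $(w,c)$ distributed as $\bigl(G(k)_{[1:i]},\,G(k)_{i+1}\bigr)$ for a uniform $k$, so that both the prefix and the true next bit of a fresh instance are known, and to efficiently sample $t(\secp)$-state designs where needed --- I would build a QPT algorithm $A''$ attacking the $2s$-fold bit-wise direct product of~\Cref{lemma:Bit-wise direct product lemma}: on input the $2s$ challenge prefixes $y^1_{[1:i]},\dots,y^{2s}_{[1:i]}$, $A''$ queries $A'$ on XOR-combinations of these prefixes with fresh self-sampled ones, so that each query to $A'$ sees exactly the required distribution of the XOR of $s$ i.i.d.\ prefixes, corrects by the known fresh next bits $c$, and recombines the answers. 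The ``$2s$ versus $s$'' discrepancy is precisely the overhead that the direct-product-to-XOR step of~\cite{DIJK09TCC} pays, and the standard analysis yields that $A''$ recovers all $2s$ next bits simultaneously with probability $\Omega(\rho^2)$ (this quadratic loss is the reason the direct product is taken over $2s$ instances). Since $\Omega(\rho^2)$ is then not $e^{-\Omega(s)}$ on that subsequence of $\secp$, this contradicts~\Cref{lemma:Bit-wise direct product lemma}. Note that the hypothesis $\delta\le 0.49+o(1)$ is exactly what makes the bound in~\Cref{lemma:Bit-wise direct product lemma} a genuine $e^{-\Omega(s)}$, and it thereby pins the conclusion here to $\varepsilon=e^{-\Omega(s)}$.

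The quantum-compatibility is immediate: $A''$ treats $A'$ as a black box, every other operation it performs (running $G$, XORing bit strings, repeating, majority voting) is classical or efficiently implementable on a quantum computer, and the whole success-probability bound is a probability computation over classical randomness --- the seeds, the fresh self-sampled pairs, and the internal measurement outcomes of $A'$. Hence the quantitative bounds of the classical argument of~\cite{DIJK09TCC,CHS05TCC} transfer essentially verbatim; this is the same pattern already used to lift~\Cref{theorem;ParallelRepeatition} and~\Cref{lemma:next-bit} to the quantum setting.

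The part I expect to be the main obstacle is bookkeeping rather than conceptual: (i) verifying that each input handed to $A'$ in the reduction has \emph{exactly} the distribution of $\bigoplus_{j=1}^s(\text{fresh prefix})$ --- which relies on $(\text{prefix},\text{next bit})$ pairs being jointly self-sampleable and on the direct-product challenge prefixes being i.i.d. --- so that $A'$'s advantage guarantee legitimately applies to each query; and (ii) tracking the degradation from ``prediction advantage $\rho$ for the $s$-fold XOR'' to ``success probability for the $2s$-fold direct product'' through the recombination step and confirming it is only polynomial in $\rho$, so that ``$\rho$ not $e^{-\Omega(s)}$'' genuinely contradicts~\Cref{lemma:Bit-wise direct product lemma}. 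Both points are settled by importing the corresponding arguments of~\cite{DIJK09TCC} with only cosmetic changes.
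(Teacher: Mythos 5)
The high-level plan---reduce an XOR-predictor $A'$ to an attacker against the $2s$-fold bit-wise direct product, invoke~\Cref{lemma:Bit-wise direct product lemma}, and observe that the reduction treats $A'$ as a black box so carries over to QPT---is the right one and matches the paper. But your description of the reduction itself diverges from the one in the paper (and in~\cite{DIJK09TCC}), and the divergence is not cosmetic. You propose to hand each of the $2s$ challenge prefixes to $A'$ after padding it with $s-1$ \emph{fresh self-sampled} prefixes, correcting by the known fresh next bits, and ``recombining'' to get all $2s$ next bits simultaneously with probability $\Omega(\rho^2)$. That recombination step is exactly where the argument breaks: self-padding turns $A'$ into a \emph{per-instance} next-bit predictor whose advantage is $\rho$ only on average over the instance, and the per-instance advantages vary. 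The natural repeat-and-majority route only wins on instances where the conditional advantage is large, and multiplying across the (independent) coordinates $j\in[2s]$ the overall success collapses to $\rho^{\Theta(s)}$, which is far too small to contradict the $e^{-\Omega(s)}$ bound of~\Cref{lemma:Bit-wise direct product lemma}. The paper avoids this entirely: there is no self-padding. Instead $A''$ draws a uniformly random $r\in\bit^{2s}$, aborts with a random output unless $r$ has Hamming weight exactly $s$---this is where the factor $2s$ and the $\Theta(1/\sqrt{s})$ loss come from, not a ``quadratic loss''---and otherwise XORs together the $s$ \emph{challenge} prefixes indexed by $r$ and feeds the result to $A'$. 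This makes $A''$ a predictor of the inner product $\langle r,\, y^1_{i+1}\|\cdots\|y^{2s}_{i+1}\rangle$ with advantage $\veps'=\Theta(\veps/\sqrt{s})$, and the key ingredient you omit---the Goldreich--Levin theorem---is what converts inner-product advantage into recovery of the full $2s$-bit string, giving overall success $\Omega(\veps^3/s^{5/2})$ against the direct product, which is $\gg e^{-\Omega(s)}$ whenever $\veps$ is.

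There is also a quantum-specific point at the Goldreich--Levin step that your ``black box, hence quantum-compatible'' argument glosses over but the paper flags: the classical GL decoder rewinds and reruns $A''$ many times, and $A''$ is a non-uniform QPT algorithm, so one must either supply many copies of $A''$'s quantum advice or invoke the quantum Goldreich--Levin theorem of Adcock and Cleve~\cite{AC02}. Your observation about black-boxness is fine for the puzzle-repetition part (\Cref{theorem;ParallelRepeatition}), but the GL rewinding needs its own justification.
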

\begin{proof}
The proof is essentially the same as that of~\cite{DIJK09TCC}. For completeness, we sketch the proof below. Suppose there exists a QPT adversary $A'$ such that 
\[
\Pr\left[ b' = \bigoplus_{j=1}^s y^j_{i+1}: 
\substack{
    k_1, \dots, k_s \gets \bit^\secp, \\
    y^1 \gets G(k_1), \dots, y^s \gets G(k_s), \\
    b' \gets A' \left( y^1_{[1:i]}\xor \dots\xor y^s_{[1:i]} \right) 
}
\right]
> \frac{1}{2} + \veps.
\]
holds for some $i\in\set{0,1,\dots,\ell(\secp)-1}$ and $\veps \notin e^{-\Omega(s)}$.
Then we will construct a reduction $A''$ that on input a random string $r\in\bit^{2s}$ and $y^1_{[1:i]}, \dots, y^{2s}_{[1:i]} \in\bit^i$, output the inner product of $r$ and $y^1_{i+1} || \dots || y^{2s}_{i+1}$, where $y^1\gets G(k_1), \dots, y^{2s}\gets G(k_{2s})$ and $k_1, \dots, k_{2s}$ are independent, uniform seeds. The description of $A''$ is the following: on input $r\in\bit^{2s}$ and $y^1_{[1:i]}, \dots, y^{2s}_{[1:i]} \in\bit^i$, do the following,
\begin{enumerate}
    \item If the number of $1$'s in $r$ is not $s$, then output a random bit $b$.
    \item Otherwise, let $z_1,\dots,z_s \in [2s]$ be the indices such that $r_{z_j} = 1$ for all $j \in [s]$.
    \item Run $A'(y^{z_1}_{[1:i]}\xor \dots \xor y^{z_s}_{[1:i]})$.
    \item Output whatever $A'$ outputs.
\end{enumerate}
First, note that when $r$ has exactly $s$ $1$'s, the inner product satisfies $\braket{r, y^1_{i+1} || \dots || y^{2s}_{i+1} } = y^{z_1}_{i+1} \xor \dots \xor y^{z_s}_{i+1}$. Moreover, the probability that $r$ has exactly $s$ $1$'s with probability $\Theta(1/\sqrt{s})$. This implies that $A''$ computes the above inner product with probability at least $1/2 + \veps'$, where $\veps' = \Theta(\veps/\sqrt{s})$. By averaging, with probability at least $\veps'/2$ the tuples $(k_1,y^1), \dots, (k_{2s},y^{2s})$ are ``good'' such that $A''$ computes the inner product with a randomly chosen $r$ with probability at least $1/2 + \veps'/2$. Now, using the Goldreich-Levin Theorem,\footnote{The adversary $A'''$ receives,  as non-uniform advice, multiple copies of the non-uniform advice of $A''$ and thus, $A'''$ can execute $A''$ many times in the Goldreich-Levin reduction. Alternately, we can use the quantum Goldreich-Levin theorem~\cite{AC02}.} we can construct $A'''$ which for every good $(k_1,y^1), \dots, (k_{2s},y^{2s})$, computes $y^1_{i+1} || \dots || y^{2s}_{i+1}$ with probability at least $\Theta((\veps')^2/s)$. This implies that $A'''$ computes $y^1_{i+1} || \dots || y^{2s}_{i+1}$ with probability at least $\Omega(\veps^3/s^{5/2})$ which contradicts~\Cref{lemma:Bit-wise direct product lemma}.

\end{proof}

\noindent Finally, we complete the proof of strong security in~\Cref{thm:amplification:QPRG}, we restate the theorem for convenience.
\begin{theorem}[\Cref{thm:amplification:QPRG}]
Let $G:\bit^\secp \to \bit^{\ell(\secp)}$ be a wQPRG that has pseudodeterminism $1 - O(\secp^{-c})$ and pseudorandomness $1 - \delta$ such that $c>1$, $\delta(\secp) \le 0.49+o(1)$ and $\ell(\secp) > s(\secp)\cdot \secp$, where $s(\secp) = \Theta(\secp)$. Define the QPT algorithm $G^{\xor s}: \bit^{s(\secp)\secp} \to \bit^{\ell(\secp)}$ as $G^{\xor s}(k_1,\dots,k_s) := \bigxor_{i=1}^s G(k_i)$. Then $G^{\xor s}$ is a sQPRG with pseudodeterminism $1 - O(\secp^{-(c-1)})$ and output length $\ell(\secp)$.
\end{theorem}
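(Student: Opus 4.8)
There are three things to check for $G^{\xor s}$: the stretch, the pseudodeterminism, and the strong security. The stretch is immediate: the seed length of $G^{\xor s}$ is $s(\secp)\secp$ and by hypothesis $\ell(\secp) > s(\secp)\secp$. For pseudodeterminism I would reuse verbatim the argument already sketched after the statement of \Cref{thm:amplification:QPRG}: declare the good seeds of $G^{\xor s}$ to be the $s$-fold Cartesian product $\cK_\secp\times\dots\times\cK_\secp$ of the good-seed set $\cK_\secp$ of $G$, and apply a union bound over the $s=\Theta(\secp)$ coordinates to the $1-O(\secp^{-c})$ pseudodeterminism of $G$. This turns both ``a random tuple lands in the product set'' and ``the XOR of the per-coordinate most-likely outputs is the most-likely output of $G^{\xor s}$'' into events of probability $1-O(s/\secp^c)=1-O(\secp^{-(c-1)})$, where the hypothesis $c>1$ is what makes $O(\secp^{-(c-1)})$ an error term. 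So the remaining work is to show $G^{\xor s}$ is $(1-\negl(\secp))$-pseudorandom.

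The plan for strong security is the classical Yao route, run through the XOR lemma for QPRGs proved above. First I would invoke the ``pseudorandomness $\Rightarrow$ next-bit unpredictability'' direction of \Cref{lemma:next-bit}: since $G$ is $(1-\delta)$-pseudorandom with $\delta(\secp)\le 0.49+o(1)$, for every QPT $A$ and every index $i$ no adversary predicts the $(i+1)$-th bit of $y\gets G(k)$ from its first $i$ bits with advantage beyond $1/2+\delta$. Next I would feed this into \Cref{lemma:Direct product theorem implies xor lemma} with $s(\secp)=\Theta(\secp)$ independent seeds, obtaining that for every QPT $A'$ and every $i$,
\[
\Pr\left[ b' = \bigxor_{j=1}^s y^j_{i+1} :
\substack{ k_1, \dots, k_s \gets \bit^\secp, \\
y^1 \gets G(k_1), \dots, y^s \gets G(k_s), \\
b' \gets A'\left( y^1_{[1:i]} \xor \dots \xor y^s_{[1:i]} \right) } \right]
\le \frac{1}{2} + \veps ,
\]
with $\veps = e^{-\Omega(s)} = \negl(\secp)$; this is where the constraint $\delta\le 0.49+o(1)$ is used, so that $(1/2+\delta)^{2s}$ decays exponentially.

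The key observation is then that the input $y^1_{[1:i]}\xor\dots\xor y^s_{[1:i]}$ above is exactly the first $i$ output bits of $G^{\xor s}(k_1,\dots,k_s)$, and $\bigxor_{j=1}^s y^j_{i+1}$ is exactly its $(i+1)$-th output bit, once one fixes per coordinate the single sample $y^j\gets G(k_j)$ that defines the output of $G^{\xor s}$ --- precisely the ``sample once'' convention that \Cref{lemma:next-bit} uses to make ``first $i$ bits / next bit'' well-defined for a pseudodeterministic generator, and the XOR structure makes the prefix/next-bit pair of $G^{\xor s}$ the bitwise XOR of the corresponding pairs of the $G(k_j)$. Hence the display says that $G^{\xor s}$ is next-bit unpredictable with advantage at most $1/2+\veps$ at every index, and applying the converse direction of \Cref{lemma:next-bit} to the QPT algorithm $G^{\xor s}$ gives that it is $(1-\veps\ell)$-pseudorandom; since $\veps=e^{-\Omega(\secp)}$ and $\ell(\secp)=\poly(\secp)$, the quantity $\veps\ell$ is negligible, so $G^{\xor s}$ satisfies strong security.

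With \Cref{lemma:next-bit}, \Cref{lemma:Bit-wise direct product lemma} and \Cref{lemma:Direct product theorem implies xor lemma} in hand, the assembly above is routine bookkeeping; the real content, and the step I expect to be hardest, lives inside those lemmas --- namely the hardness amplification of weakly verifiable puzzles in the post-quantum setting (\Cref{theorem;ParallelRepeatition}, via the parallel-repetition theorem of~\cite{CHS05TCC} as extended by~\cite{RS19}) together with the Goldreich--Levin step that converts XOR-hardness into direct-product-hardness. The two points requiring care beyond the classical argument of~\cite{DIJK09TCC} are: (i) because $G$ is only \emph{pseudodeterministic}, ``the first $i$ bits of $G(k)$'' and ``its next bit'' are random variables rather than functions of $k$, so one must consistently fix a single sample of $G(k)$ throughout each reduction and propagate that convention through the XOR construction; and (ii) the reductions must work against non-uniform \emph{quantum} adversaries, which is handled by supplying the reduction with several copies of the adversary's quantum advice (or, for Goldreich--Levin, by appealing to its quantum analogue).
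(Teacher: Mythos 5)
Your proposal follows essentially the same route as the paper's Appendix A proof: pseudodeterminism via the $s$-fold Cartesian product of good seeds and a union bound, and strong security via the chain pseudorandomness of $G$ $\Rightarrow$ next-bit unpredictability of $G$ (\Cref{lemma:next-bit}) $\Rightarrow$ XOR unpredictability across $s$ seeds (\Cref{lemma:Direct product theorem implies xor lemma}, internally using \Cref{lemma:Bit-wise direct product lemma}) $\Rightarrow$ next-bit unpredictability of $G^{\xor s}$ $\Rightarrow$ pseudorandomness of $G^{\xor s}$ by the converse direction of \Cref{lemma:next-bit}. Your identification of the two places requiring care beyond \cite{DIJK09TCC} — the sample-once convention for pseudodeterministic generators and the handling of quantum adversaries in the reductions — matches the paper's treatment.
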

\begin{proof}[Proof of Strong Security] \hfill

\begin{enumerate}
    \item By ``pseudorandomness implies next-bit ununpredictability'' in~\Cref{lemma:next-bit}, for a random seed $k$ and any $i\in\bit^{\ell(\secp)}$, the probability of outputting $y_i$ given $y_{[1:i-1]}$ is at most $1/2 + \delta$ (this is where we need that $\delta(\secp) \le 0.49+o(1)$), where $y\gets G(k)$.
    \item By~\Cref{lemma:Bit-wise direct product lemma}, for any $i\in\bit^{\ell(\secp)}$, the probability of computing $y^1_{i+1}||\dots||y^{2s}_{i+1}$ from $y^1_{[1:i]}||\dots$ $||y^{2s}_{[1:i]}$ for independent, uniform seeds $k_1,\dots,k_{2s}$ is at most $\veps = e^{-\Omega(s)} = \negl(\secp)$.
    \item By~\Cref{lemma:Direct product theorem implies xor lemma}, for any $i\in\bit^{\ell(\secp)}$, the probability of computing the XOR of $y^1_{i+1}, \dots, y^{s}_{i+1}$ (the $(i+1)$-th bit of $G^{\xor s}(k_1, \dots k_s)$) from $y^1_{[1:i]}\xor \dots \xor y^{s}_{[1:i]}$ (the first $i$ bits of $G^{\xor s}(k_1, \dots k_s)$) for independent, uniform seeds $k_1,\dots,k_{2s}$ is at most $1/2 + \poly(s\veps) = 1/2 + \negl(\secp)$.
    \item By ``next-bit ununpredictability implies pseudorandomness'' in~\Cref{lemma:next-bit}, we can conclude that $G^{\xor s}$ is $\left(1 - \ell(\secp)\cdot \poly(s\veps)\right)$-pseudorandom, where $\ell(\secp)\cdot \poly(s\veps) = \negl(\secp)$.
\end{enumerate}
\end{proof}

\end{document}